\documentclass[10pt,journal]{IEEEtran}

\usepackage[T1]{fontenc}

\ifCLASSOPTIONcompsoc

  \usepackage[nocompress]{cite}
\else

  \usepackage{cite}
\fi

\ifCLASSINFOpdf
\else
\fi


\usepackage{amsmath}
\interdisplaylinepenalty=2500

\makeatletter
\def\ps@headings{%
\def\@oddhead{\mbox{}\scriptsize\rightmark \hfil \thepage}%
\def\@evenhead{\scriptsize\thepage \hfil \leftmark\mbox{}}%
\def\@oddfoot{}%
\def\@evenfoot{}}
\makeatother \pagestyle{headings}

\usepackage[colorlinks,
linkcolor=black,
anchorcolor=black,
citecolor=black,
urlcolor=black,
bookmarks=false
]{hyperref}

\usepackage{cite}
\usepackage{dsfont}

\usepackage{graphicx}
\graphicspath{{figs/}}
\usepackage{subfigure}
\usepackage{color}
\usepackage{ifpdf}
\usepackage{epsfig}
\usepackage{epstopdf}
\usepackage{latexsym}
\usepackage{amsfonts}
\usepackage{amssymb}
\usepackage{paralist}
\usepackage{comment}
\usepackage{xspace}
\usepackage{mathrsfs}
\usepackage{graphicx}
\usepackage{amssymb}
\usepackage{setspace}
\usepackage{caption}
\usepackage{algorithm}
\usepackage[noend]{algorithmic}

\usepackage{url,epsfig,array}
\usepackage{leftidx}
\usepackage{amsmath}

\usepackage{diagbox}
\usepackage{makecell}
\usepackage{multirow}
\usepackage{booktabs}
\usepackage{enumerate}
\usepackage{pifont}
\usepackage{threeparttable}

\usepackage{amsthm}

\usepackage{ragged2e}
\usepackage{balance}

\allowdisplaybreaks[4]

\usepackage{array}
\newcommand{\PreserveBackslash}[1]{\let\temp=\\#1\let\\=\temp}
\newcolumntype{C}[1]{>{\PreserveBackslash\centering}p{#1}}
\newcolumntype{R}[1]{>{\PreserveBackslash\raggedleft}p{#1}}
\newcolumntype{L}[1]{>{\PreserveBackslash\raggedright}p{#1}}

\hyphenation{optical networks semi-conductor}

\def\ie{\textit{i.e.}\xspace}

\def\eg{\textit{e.g.}\xspace}

\def\st{\xspace\textbf{s.t.}\xspace}

\newtheoremstyle{colon}%
  {}{}%
  {\itshape}{}%
  {\bfseries}{:}%
  { }%
  {\thmname{#1}\thmnumber{ #2} \mdseries\thmnote{#3}}

\theoremstyle{colon}

\newtheorem{theorem}{Theorem}

\newtheorem{corollary}{Corollary}
\newtheorem{definition}{Definition}
\newtheorem{lemma}{Lemma}

\newtheorem{remark}{Remark}

\newtheorem{assumption}{Assumption}

\setlength{\abovedisplayskip}{1pt}
\setlength{\belowdisplayskip}{1pt}

\renewcommand{\algorithmicrequire}{\textbf{Input:}}
\renewcommand{\algorithmicensure}{\textbf{Output:}}

\hyphenation{op-tical net-works semi-conduc-tor}

\begin{document}

\title{DySTop: Dynamic Staleness Control and Topology Construction for Asynchronous Decentralized Federated Learning}

\author{Yizhou Shi, Qianpiao Ma, Yan Xu, Junlong Zhou,~\IEEEmembership{Member,~IEEE}, 
Ming Hu,~\IEEEmembership{Member,~IEEE}, \\
Yunming Liao, Hongli Xu,~\IEEEmembership{Member,~IEEE}

\IEEEcompsocitemizethanks{\IEEEcompsocthanksitem Corresponding authors: Qianpiao Ma, Junlong Zhou.}

\IEEEcompsocitemizethanks{\IEEEcompsocthanksitem Yizhou Shi, Qianpiao Ma, Yan Xu and Junlong Zhou are with the School of Computer Science and Engineering, Nanjing University of Science and Technology, Nanjing, Jiangsu 210094, China (e-mail: yizhou\_shi@njust.edu.cn, maqianpiao@njust.edu.cn, 084621120@njucm.edu.cn, jlzhou@njust.edu.cn).\protect
}

\IEEEcompsocitemizethanks{\IEEEcompsocthanksitem Ming Hu is with the School of Computing and
Information Systems, Singapore Management University, Singapore (e-mail: hu.ming.work@gmail.com). \protect
}

\IEEEcompsocitemizethanks{\IEEEcompsocthanksitem Yunming Liao and Hongli Xu are with the School of Computer Science and Technology, University of Science and Technology of China, Hefei, Anhui 230027, China (e-mail: ymliao98@mail.ustc.edu.cn, xuhongli@ustc.edu.cn).\protect
}

}

\markboth{IEEE Transactions on Mobile Computing,~Vol., No., Jul.~2025}
{Shi \MakeLowercase{\textit{et al.}}: DySTop: Dynamic Staleness Control and Topology Construction for Asynchronous Decentralized Federated Learning}

\IEEEtitleabstractindextext{%
\begin{abstract}
\justifying
Federated Learning (FL) has emerged as a potential distributed learning paradigm that enables model training on edge devices (\textit{i.e.}, workers) while preserving data privacy. However, its reliance on a centralized server leads to limited scalability.
Decentralized federated learning (DFL) eliminates the dependency on a centralized server by enabling peer-to-peer model exchange.
Existing DFL mechanisms mainly employ synchronous communication, which may result in training inefficiencies under heterogeneous and dynamic edge environments.
Although a few recent asynchronous DFL (ADFL) mechanisms have been proposed to address these issues, they typically yield stale model aggregation and frequent model transmission, leading to degraded training performance on non-IID data and high communication overhead. To overcome these issues, we present DySTop, an innovative mechanism that jointly optimizes dynamic staleness control and topology construction in ADFL. In each round, multiple workers are activated, and a subset of their neighbors is selected to transmit models for aggregation, followed by local training. We provide a rigorous convergence analysis for DySTop, theoretically revealing the quantitative relationships between the convergence bound and key factors such as maximum staleness, activating frequency, and data distribution among workers. From the insights of the analysis, we propose a worker activation algorithm (WAA) for staleness control and a phase-aware topology construction algorithm (PTCA) to reduce communication overhead and handle data non-IID. Extensive evaluations through both large-scale simulations and real-world testbed experiments demonstrate that our DySTop reduces completion time by 51.8\% and the communication resource consumption by 57.1\% compared to state-of-the-art solutions, while maintaining the same model accuracy.

\end{abstract}

\begin{IEEEkeywords}
Decentralized federated learning, edge computing, asynchronous, staleness control, topology construction.
\end{IEEEkeywords}}

\maketitle

\IEEEdisplaynontitleabstractindextext

\IEEEpeerreviewmaketitle

\section{Introduction}\label{sec:introduction}

\IEEEPARstart{R}{ecent} advances in the Internet of Things (IoT) promote continuous generation of vast quantities of data generated from physical environments \cite{hu2024industrial,zhou2023swarm}. Generally, this data is sent to a remote cloud for processing, raising concerns about potential privacy leakage and considerable latency. In response, \textit{edge computing} has evolved as a viable solution by distributing computation capacity to edge devices (\ie, workers) for accelerated local data processing. Furthermore, it promotes the adoption of federated learning (FL), which coordinates workers to perform distributed machine learning \cite{mcmahan2017communication} while preserving data privacy.

Traditional FL, also known as Centralized FL (CFL), comprises a set of workers and a centralized parameter server (PS)\cite{mcmahan2017communication,wang2019adaptive,Li2020On}. These workers collaboratively train a global model by performing local updates on their own datasets and subsequently transmitting their model parameters to the PS. The PS next aggregates these uploaded models and distributes the aggregated model back to the workers for the subsequent training round. This procedure usually maintains multiple training rounds until model convergence. Though CFL is a mature and widely used scheme, it comes with notable limitations. First, its reliance on a star topology inherently restricts system scalability. Second, frequent communication between the PS and workers generates a substantial traffic workload and creates a communication bottleneck at the PS, posing a significant risk of a single point failure.

\begin{table*}[thbp]\centering
{
\caption{Comparison of different DFL mechanisms.}\label{tbl:FLcomparison}
\begin{tabular}{c|c|ccccc} 
\hline
\multicolumn{2}{c|}{DFL Mechanisms} & \makecell{Handing Edge\\ Heterogeneity} & \makecell{Handing Edge\\ Dynamic} & \makecell{Communication\\ Consumption} & \makecell{Handling\\ Non-IID} & \makecell{Handling\\ Staleness}  \\ \hline
\multirow{5}{*}{Synchronous} & BrainTorrent \cite{roy2019braintorrent} & Poor & Poor & High & Poor & -  \\
 & GossipFL \cite{tang2022gossipfl} & Poor & Poor & High  & Poor & -  \\
 & D-Cliques \cite{bellet2022d} & Poor & Poor & Medium & Good & -  \\
 & MATCHA \cite{wang2019matcha} & Poor & Medium & Low & Poor & -  \\
 & L2PL \cite{xu2021decentralized} & Poor & Medium & Low & Poor & -  \\
 & FedHP \cite{liao2023adaptive} & Poor & Medium & Low & Good & -  \\  \hline
\multirow{5}{*}{Asynchronous} & HADFL \cite{cao2021hadfl} & Good & Poor & High & Poor & Poor  \\
 & AsyNG \cite{chen2023enhancing} & Good & Good & Medium & Good & Poor  \\
 & AsyDFL \cite{liao2024asynchronous} & Good & Good & Medium & Good & Poor  \\
 & SA-ADFL \cite{ma2024dynamic} & Good & Good & High & Poor & Medium  \\
 & \textbf{DySTop (Ours)} & Good & Good & Low & Good & Good  \\ \hline
\end{tabular}
}
\vspace{-0.6cm}
\end{table*}

Decentralized federated learning (DFL) has emerged as a solution to overcome the limitations of CFL \cite{roy2019braintorrent,tang2022gossipfl,wang2019matcha,xu2021decentralized,liao2023adaptive,bellet2022d}. In DFL, each worker performs local model updating and transmits the updated models to its neighbors through peer-to-peer (P2P) communications.
Since DFL does not rely on a centralized PS, it can achieve high scalability and construct network topology more flexibly. When a worker in the system fails, it is only necessary for the affected workers to re-establish P2P connections with other workers. Moreover, each worker communicates with its neighbors, effectively distributing the communication overhead that was originally borne solely by the PS in CFL. However, achieving high efficiency in DFL presents unique challenges that need to be addressed.

\begin{itemize}
\item \textbf{Edge Heterogeneity:} Each worker interacts with heterogeneous neighbors with varying computational and communication capabilities, data sizes, and physical locations\cite{cui2023hiera}. Therefore, the coordination required for training in DFL is more complex compared with CFL.
\item \textbf{Edge Dynamic:} The network topologies may change over time due to intermittent connections among workers caused by their mobility \cite{fu2024dta}. Furthermore, a worker with poor communication conditions may lead to disconnections for its neighbors.
\item \textbf{Communication Resource Constraint:} On one hand, since the absence of the PS, DFL typically requires more training rounds to converge than CFL. On the other hand, in DFL, each worker sends its model to multiple workers rather than a single PS, resulting in more communication resource consumption compared to CFL \cite{tang2022gossipfl}.
\item \textbf{Data Non-IID:} The data collected on workers is often not a uniform sample of the overall distribution, leading to data non-independent-and-identically-distributed (non-IID) \cite{mcmahan2017communication}, which degrades FL training. Even more critically, unlike the PS in CFL, which trains the global model over all data, each worker in DFL may obtain models trained over more biased data \cite{chen2023enhancing}.
\end{itemize}

There are two communication schemes for DFL: synchronous and asynchronous. Most of the existing DFL researches adopt synchronous communication \cite{roy2019braintorrent,tang2022gossipfl,bellet2022d,wang2019matcha,xu2021decentralized,liao2023adaptive}, where each worker waits for all its neighbors to finish local model updating and aggregates them with its own local model. However, due to edge heterogeneity, workers are compelled to wait for the slowest worker to complete its local updating and model transferring, known as the \textit{straggler problem}.
Moreover, edge dynamics may render synchronous DFL impractical, as dynamic network conditions and potential disconnections can lead to significant idle time while workers wait for model transmissions from all neighbors in each round.

To handle heterogeneity and dynamics in edge networks, the asynchronous DFL (ADFL) scheme\cite{cao2021hadfl,chen2023enhancing,liao2024asynchronous,ma2024dynamic} further eliminates the synchronization barrier in synchronous DFL. In ADFL, each worker is allowed to exchange models with its neighbors at any time and aggregate models it has received so far upon finishing previous local updating.
Since a worker in ADFL does not need to wait for all its neighbors to transmit their latest local models, ADFL effectively addresses edge heterogeneity and edge dynamics. However, the existence of stragglers still causes different relative frequencies of workers to perform asynchronous updating, which leads to \textit{staleness} concern \cite{xie2019asynchronous}. Consequently, workers may aggregate the out-of-date models from their neighbors, causing unsatisfactory training performance\cite{ma2021fedsa}.
To this end, our previous work \cite{ma2024dynamic} SA-ADFL realizes dynamic staleness control for ADFL by appropriately determining a worker to perform model aggregation and transferring in each round. However, in SA-ADFL, the determined worker sends its model to all neighbors within its communication range, resulting in significant communication and lacking fine-grained handling of data non-IID.

To address this, we propose DySTop, a novel mechanism that performs \textbf{Dy}namic \textbf{S}taleness Control and \textbf{Top}ology Construction for ADFL. In each round, DySTop dynamically activates multiple workers, each selecting a subset of its neighbors and pulling their models for aggregation. We provide a rigorous convergence analysis for DySTop, revealing the significance of both staleness control and topology construction in achieving satisfactory training performance. Guided by the analysis, we design a worker activation algorithm to achieve more effective staleness control than SA-ADFL, and a phase-aware topology construction algorithm considering both communication efficiency and data non-IID. The summarization of different DFL mechanisms are shown in Table \ref{tbl:FLcomparison}.

Our main contributions to this paper are as follows:
\begin{itemize}
    \item We propose DySTop, a dynamic staleness control and topology construction mechanism for ADFL, and provide its convergence analysis, theoretically revealing how the convergence bound depends on factors such as maximum staleness, activating frequency, and data distribution.
     \item Based on the convergence analysis, we formulate a training time minimization problem for DySTop given staleness and communication resource constraints, and transform the problem into multiple per-round subproblems by Lyapunov optimization. 
    \item To solve the subproblems, we design a worker activation algorithm (WAA) for staleness control and a phase-aware topology construction algorithm (PTCA) for reducing communication overhead and handling data non-IID.
    \item We conduct both large-scale simulations and real-world testbed experiments to evaluate the performance of our proposed mechanism and algorithms. Experimental results show that our algorithms can reduce completion time by 51.8\% and the communication resource consumption by 57.1\% under the same accuracy requirement compared with the state-of-the-art mechanisms.
\end{itemize}

The rest of this paper is organized as follows. Section \ref{sec:related} reviews the related works. Section \ref{sec:prelim} introduces the DySTop mechanism and formulates the problem. Section \ref{sec:convergence} provides the convergence analysis. The algorithms for DySTop are proposed in Section \ref{sec:algorithm}. The experimental results of the simulations and testbeds are shown in Sections \ref{sec:evaluation} and \ref{sec:implementation}, respectively. Section \ref{sec:conclusion} concludes this paper.

\section{Related Works}\label{sec:related}

The communication schemes of DFL determine how workers perform local training and exchange models, and can be categorized as either synchronous or asynchronous. Most existing DFL studies adopt synchronous communication schemes, which are reviewed in Section \ref{subsec:reviewSDFL}. More recently, a few DFL studies have explored asynchronous schemes, as reviewed in Section \ref{subsec:reviewADFL}. Although asynchronous schemes can improve resource utilization and reduce idle time, they introduce model staleness. The techniques for mitigating staleness in asynchronous FL are reviewed in Section \ref{subsec:stalenesscontrol}.

\subsection{Communication Schemes of Synchronous DFL}\label{subsec:reviewSDFL}

In synchronous DFL, each worker waits for all its neighbors to finish local model updating and performs model aggregation. Therefore, it is important to construct an efficient topology for model parameter transmission in DFL \cite{palmieri2024impact}.
BrainTorrent \cite{roy2019braintorrent} establishes a fully-connected DFL, requiring each worker to share its local model with all others in each round. However, it incurs heavy communication costs due to the large volume of data transmissions.
Partially connected topologies reduce the communication overhead by enabling each worker to exchange its local model only with its neighboring workers. For example, GossipFL \cite{tang2022gossipfl} creates a highly sparsified topology, allowing each worker to transmit its local model to only one peer in each round, realizing efficient communication. D-Cliques \cite{bellet2022d} proposes a new topology where workers are grouped into connected cliques based on data distribution, which helps reduce gradient bias. Yet, previous works tend to construct a fixed topology and ignore the edge dynamic. Therefore, some works focus on re-constructing the topology at regular intervals. For example, MATCHA \cite{wang2019matcha} employs matching decomposition to break down the base topology into disjoint subgraphs, and samples a subset of these subgraphs to construct a sparse topology in each round. L2PL \cite{xu2021decentralized} proposes a reinforcement learning-based method to dynamically construct an optimal sparse topology for each round. FedHP \cite{liao2023adaptive} controls local updating frequency and network topology adaptively to improve the overall learning efficiency.

However, the above synchronous DFL mechanisms share a common drawback: they force each worker to await the completion of all other workers, \ie, the synchronization point, before the subsequent round can begin. In edge heterogeneity environments, this process results in slow convergence, since the overall efficiency is dictated by the slowest worker.

\subsection{Communication Schemes of Asynchronous DFL}\label{subsec:reviewADFL}

A few studies have explored asynchronous communication schemes for DFL, allowing workers to exchange models whenever they finish local updating, instead of waiting for the synchronization points. For example, HADFL \cite{cao2021hadfl} facilitates decentralized training on heterogeneous workers in an asynchronous manner, with probabilistic participation in model synchronization and aggregation in each round. AsyNG \cite{chen2023enhancing} and AsyDFL \cite{liao2024asynchronous} develop a joint optimization framework integrating gradient pushing and neighbor selection to achieve a communication-training trade-off in ADFL.

However, these ADFL studies overlook the staleness problem inherent in asynchronous communication. This oversight results in highly stale models, which generate significant deviation from current gradient trajectories and intensify the detrimental effects of non-IID data.

\subsection{Deal with Staleness in Asynchronous FL}\label{subsec:stalenesscontrol}

As previously stated, staleness significantly impacts the performance of asynchronous FL. Many asynchronous CFL studies have focused on mitigating the negative effects of staleness. For example, FedAsync \cite{xie2019asynchronous} relieves the impact of staleness on training by assigning smaller aggregation weights to stale models. FedSA \cite{ma2021fedsa} employs a semi-asynchronous mechanism where partial workers participate in each global updating to control staleness. SAFA \cite{wu2020safa} tackles staleness by simply discarding models that become excessively stale during the training process. SC-AFL \cite{sun2024staleness} restricts the staleness degree within a certain bound by dynamically adjusting the aggregated strategy in each round. ASO-Fed \cite{chen2020asynchronous} deploys dynamic learning rates for workers based on their participation frequencies in global updating. Air-FedGA \cite{ma2025air} assigns workers into groups to mitigate staleness. The authors in \cite{zheng2017asynchronous} and \cite{zhu2022client} compensate for the gradients of stale models by leveraging the approximate Taylor expansion.

However, the above staleness control methods primarily focus on asynchronous CFL scenarios, whereas staleness control in ADFL is more complex and remains largely unexplored.
Our previous work SA-ADFL \cite{ma2024dynamic} controls staleness for ADFL by determining a worker to perform model updating and send its model to all neighbors in each round. However, it incurs high communication overhead and lacks fine-grained handling of data non-IID. In this paper, we propose DySTop, a mechanism that dynamically activates multiple workers, each selecting a subset of its neighbors and pulling their models for aggregation, thereby reducing communication overhead and effectively addressing data non-IID.

\section{System Overview and Problem Formulation}\label{sec:prelim}

\subsection{Preliminary for ADFL}
We consider an asynchronous decentralized federated learning system,
where a set of workers $\mathcal{V}=\{v_1, v_2, \cdots, v_N\}$ is deployed to train local models.
Given the inherent edge dynamics that induce topological changes, we characterize the time-varying network topology through a directed graph $\mathcal{G}_t=\left(\mathcal{V}_t, \mathcal{E}_t\right)$ per round $t$. The vertex set $\mathcal{V}_t$ tracks available workers,  and the edge set $\mathcal{E}_t$ contains the links between devices, with $e(v_i, v_j)\in \mathcal{E}_t$ specifying the transfer path from $v_i$ to $v_j$.
For worker $v_i$, we define $\mathcal{N}_t^i=\{v_j | e(v_j, v_i)\in \mathcal{E}_t, v_j \in \mathcal{V}_t\}\cup\{v_i\}$ as its in-neighbors set that consists of all workers with an in-coming link to $v_i$ include itself.
Correspondingly, the out-neighbors set of $v_i$ at round $t$ is defined as $\hat{\mathcal{N}}_t^i=\{v_j | e(v_i, v_j)\in \mathcal{E}_t, v_j \in \mathcal{V}_t\}\cup\{v_i\}$.

Each worker $v_i \in \mathcal{V}$ conducts local training on its local dataset $\mathcal{D}_{i}=\bigcup_{k=1}^{|\mathcal{D}_{i}|}(\textbf{X}_{i}^k, y_{i}^k)$ with size $D_{i}=|\mathcal{D}_i|$, where $\textbf{X}_{i}^k$ denotes the feature vector of $k$-th sample and $y_{i}^k$ represents the corresponding label. Let $\mathbf{w}$ be the model parameter. The local loss function of worker $v_i$ is given by
\begin{equation}
F_i(\mathbf{w})\triangleq\mathbb{E}_{\xi_i\in\mathcal{D}_i}f(\mathbf{w};\xi_i)\mbox{,}
\end{equation}
where $f(\mathbf{w};\xi_i)$ computes the loss on mini-batch over mini-batch $\xi_i$ drawn from local dataset $\mathcal{D}_i$. $f(\cdot;\cdot)$ accommodates various loss formulations, \eg, cross-entropy, focal or hinge loss.
Define $\alpha_i=\frac{D_i}{D}$ as the relative data size of $v_i$ to the total, the global loss function is defined as
\begin{equation}\label{eq:loss}
F(\mathbf{w})\triangleq\sum_{v_i\in \mathcal{V}}\frac{D_i}{D}F_i(\mathbf{w})=\sum_{v_i\in \mathcal{V}}\alpha_iF_i(\mathbf{w})\mbox{.}
\end{equation}
The objective is to find the optimal parameter vector $\mathbf{w}^*$ so as to minimize $F(\mathbf{w})$, \textit{i.e.}, $\mathbf{w}^*=\text{argmin}_{\mathbf{w}}F(\mathbf{w})$.

Inspired by \cite{J1986round}, we divide the ADFL process into multiple rounds, and we introduce $t\in \mathcal{R}=\{1,2, \cdots, T\}$ to indicate the index of a round. At each round $t$, the local model of worker $v_i$ is represented as $\mathbf{w}_t^i$.
Due to the asynchronous updating mechanism and heterogeneous computational capabilities across workers, the local model updates often span multiple rounds. Let $\tau_t^i$ denote the interval (called the \textit{staleness}) between the latest starting-to-training round of $v_i$ and the current round $t$. The local model at round $t$ actually satisfies
\begin{equation}\label{eq:asyw}
\mathbf{w}_t^i=\mathbf{w}_{t-\tau_t^i}^i\mbox{.}
\end{equation}

\subsection{System overview of DySTop}

We propose a dynamic staleness control and topology construction mechanism for ADFL, called DySTop, which is formally described in Alg. \ref{alg:dystop}.
DySTop mainly involves two key roles, \textit{i.e.}, a set of workers and a coordinator. Periodically, the coordinator acquires worker-specific status data such as available bandwidth, model staleness, and data distribution, which it then uses to determine worker activation and topology construction strategies.
Now we introduce the process of DySTop on both the worker side and the coordinator side.

On the worker side (Lines \ref{alg:line:devicestart}-\ref{alg:line:deviceend}), each worker maintains two threads: an updating thread (Lines \ref{alg:line:updatingstart}-\ref{alg:line:updatingend}) and a pushing thread (Lines \ref{alg:line:pushingstart}-\ref{alg:line:pushingend}).
For updating thread, if worker $v_i$ receives an EXECUTE message from the coordinator at round $t$ and has completed its local training of the current round, it sends PULL message to its in-neighbor workers $v_j, \forall v_j\in\mathcal{N}_t^i$, then each in-neighbors $v_j$ transmits its local model $\mathbf{w}_t^j$ to $v_i$.
Next, $v_i$ aggregates the pulled local models by
\begin{equation}\label{eq:aggregation}
    \hat{\mathbf{w}}_t^i = \sum\nolimits_{v_j \in \mathcal{N}_t^i}\sigma_t^{i,j}\cdot\mathbf{w}_t^j= \sum\nolimits_{v_j \in \mathcal{N}_t^i}\sigma_t^{i,j}\cdot\mathbf{w}_{t-\tau_t^j}^j.
\end{equation}
where $\sigma_t^{i,j}$ is the aggregated weight of $v_j$ on the worker $v_i$ at round $t$, calculated as $\sigma_t^{i,j} = D_j / \sum_{v_{j'} \in \mathcal{N}_t^i}D_{j'}$.
After model aggregation, the worker $v_i$ performs local training as
\begin{equation}\label{eq:updating}
    \mathbf{w}_{t+1}^i = \hat{\mathbf{w}}_t^i - \eta \nabla F_i(\hat{\mathbf{w}}_t^i; \xi_t^i).
\end{equation}
where $\eta$ is the learning rate, and $\nabla F_i(\hat{\mathbf{w}}_t^i; \xi_t^i)$ denotes the stochastic gradient over mini-batch $\xi_t^i$.
Meanwhile, at any time receiving PULL message from the out-neighbor $v_j\in \hat{\mathcal{N}}_t^i$, the pushing thread of worker $v_i$ pushes its up-to-date local model, which is formulated in Eq.~\eqref{eq:asyw}.

On the coordinator side (Lines \ref{alg:line:coorstart}-\ref{alg:line:coorend}), at the beginning of each round $t$, the coordinator determines a subset of workers $\mathcal{A}_t\subseteq\mathcal{V}$ (we call the active set) for aggregation at round $t$, which will be elaborated in Alg. \ref{alg:device}. Each worker $v_i$ is assigned a binary indicator $a_t^i \in \{0, 1\}$. That is, $a_t^i=1$ if $v_i$ is activated for model aggregation at round $t$, and $a_t^i=0$ otherwise.
Then, the coordinator constructs the network topology $\mathcal{G}_t$, which will be elaborated in Alg. \ref{alg:topology}.
Next, the coordinator sends EXECUTE message to $v_i\in\mathcal{A}_t$, and updates staleness of each worker $v_i\in\mathcal{V}$ as
\begin{equation}\label{eq:staleness}
    \tau_{t+1}^i = \left(\tau_t^i+1\right)(1-a_t^i).
\end{equation}
That is, for worker $v_i\in \mathcal{A}_t$, the next round staleness $\tau_{t+1}^i$ is reset to 0, while the staleness of all other workers increases.

\begin{algorithm}[t]
\caption{Dynamic Staleness Control and Topology Construction for ADFL (DySTop)}\label{alg:dystop}
\begin{algorithmic}[1]
\FOR {$t=1$ to $T$}
\STATE \textbf{Processing at Each Device $v_i$:}
\STATE {\textit{\textbf{Updating Thread:}}}\label{alg:line:devicestart}
\IF {Receive EXECUTE message \textbf{and} finish local training}\label{alg:line:updatingstart}
\STATE {Obtain $\mathbf{w}_t^j$ of each device $v_{j}\in\mathcal{N}_t^i$ by sending PULL message to its in-neighbors;}
\STATE {Aggregate local models to obtain $\hat{\mathbf{w}}_t^i$ by Eq. \eqref{eq:aggregation};}
\STATE {Update local model $\mathbf{w}_{t+1}^i$ by Eq. \eqref{eq:updating};}
\ENDIF\label{alg:line:updatingend}
\STATE {\textit{\textbf{Pushing Thread:}}}\label{alg:line:pushingstart}
\IF {Receive PULL message from $v_j\in \hat{\mathcal{N}}_t^i$}
\STATE {Push local model $\mathbf{w}_{t-\tau_t^i}^i$ to $v_j$;}\label{alg:line:deviceend}
\ENDIF\label{alg:line:pushingend}
\STATE \textbf{Processing at the Coordinator:}\label{alg:line:coorstart}
\STATE {Determine active set $\mathcal{A}_t$ according to Alg. \ref{alg:device} to pull models and perform aggregation;}
\STATE {Contruct network topology $\mathcal{G}_t$ according to Alg. \ref{alg:topology};}
\STATE {Send EXECUTE message to $v_i \in \mathcal{A}_t$;}
\FOR {\textbf{each} $v_i\in\mathcal{V}$}
\STATE {Update staleness $\tau_{t+1}^i$ by Eq. \eqref{eq:staleness};}
\ENDFOR\label{alg:line:schedulingend}
\ENDFOR\label{alg:line:coorend}
\end{algorithmic}
\end{algorithm}

\subsection{Problem Formulation}

For DySTop, one problem is how to determine a proper active set $\mathcal{A}_t$ at each round $t$, resulting in the process of model training and transmission.
Let $H_t$ denote the duration of round $t$.
$h_{i}$ represents the local training time required for worker $v_i$. Then the time consumed by $v_i$ for local training at round $t$ is calculated as
\begin{equation}\label{eq:h_comp}
    h_{t}^{i,\rm cmp}= \max \left\{h_{i} - \sum_{k=t-\tau_t^i}^{t-1}H_k, 0\right\}.
\end{equation}
Let $h_t^{i,j,\rm com}$ denote the time consumed by the transmission of the model from $v_j$ to $v_i$ at round $t$.
Thus, the sum of training and transmission time of the activated worker $v_i$ at round $t$ is
\begin{equation}\label{eq:costtime}
    H_t^i = h_{t}^{i,\rm cmp}+\max_{v_j\in\mathcal{N}_t^i} \left\{h_t^{i,j,\rm com}\right\}\mbox{.}
\end{equation}
Therefore, the duration of round $t$ is calculated as
\begin{equation}\label{eq:duration}
    H_t = \max_{v_i\in\mathcal{A}_t} \left\{H_t^i\right\}.
\end{equation}

The other problem is how to construct a network topology $\mathcal{G}_t$ for model transmission given communication resource constraints.
Specifically, let $c_t^{i,j}\in\{0,1\}$ indicate whether the directed link from $v_j$ to $v_i$ is connected for model transmission at round $t$. $c_t^{i,j}=1$ if the link is connected, and otherwise $c_t^{i,j}=0$. Let $b$ denote the bandwidth resource consumption for transmitting one local model. Since both pulling models and pushing models consume bandwidth, the bandwidth consumption of worker $v_i$ at round $t$ is
\begin{equation}
B_t^i=\left(\sum\nolimits_{v_j\in\mathcal{N}_t^i}c_t^{i,j} + \sum\nolimits_{v_k\in\hat{\mathcal{N}}_t^i}c_t^{k,i}\right)\cdot b.
\end{equation}

Unlike centralized FL, which always maintains a global model in the PS, DFL has no real global model due to its fully decentralized setting.
Therefore, we define the global weighted model at round $t$ as
\begin{equation}\label{eq:weightedmodel}
\mathbf{w}_t=\sum_{v_i\in\mathcal{V}}\frac{D_i}{D}\mathbf{w}_t^i=\sum_{v_i\in\mathcal{V}}\alpha_i\mathbf{w}_t^i\mbox{.}
\end{equation}

Let $\boldsymbol{a} = \{\boldsymbol{a}_t|t\in[1,T]\}$ and $\boldsymbol{c} = \{\boldsymbol{c}_t|t\in[1,T]\}$ denote the worker activation and topology construction strategies employed throughout the $T$ rounds of training process, respectively, where $\boldsymbol{a}_t=\{a_t^i|v_i\in\mathcal{V}\}$ and $\boldsymbol{c}_t=\{c_t^{i,j}|v_i,v_j\in\mathcal{V}\}$. Our problem is formulated as
\begin{subequations}
\begin{align}
\textbf{(P1)}:&\min_{\boldsymbol{a}, \boldsymbol{c}}\sum_{t=1}^{T}H_t \\ \label{cons:loss}
{\st} \quad
&F(\mathbf{w}_T) \leq \epsilon,  \\ \label{cons:stale}
&\tau_t^i\leq\tau_{\text{bound}}, &\forall v_i\in\mathcal{V}, t\in\mathcal{R}, \\ \label{cons:bw}
&B_t^i \leq \hat{B}_t^i, &\forall v_i\in\mathcal{V}, t\in\mathcal{R}, \\ \label{cons:sch}
&a_t^i\in\{0, 1\}, &\forall v_i\in\mathcal{V}, t\in\mathcal{R}, \\ \label{cons:topo}
&c_t^{i,j}\in\{0, 1\}, &\forall v_i, v_j\in\mathcal{V}, t\in\mathcal{R}
\mbox{.}
\end{align}
\end{subequations}
The inequality \eqref{cons:loss} ensures the convergence of global loss function after $T$ rounds, where $\epsilon$ is the convergence threshold. The set of inequalities \eqref{cons:stale} represents that the staleness of each worker cannot exceed the maximal bound $\tau_{\text{bound}}$. The set of inequalities \eqref{cons:bw} represents that bandwidth allocation for each worker should not exceed its bandwidth resource budget $\hat{B}_t^i$, which is time-varying considering the dynamic network condition. Our goal is to jointly optimize the worker activation strategy $\boldsymbol{a}$ and topology construction strategy $\boldsymbol{c}$ to minimize the overall training duration in DySTop, \ie, $\min_{\boldsymbol{a}, \boldsymbol{c}}\sum_{t=1}^{T}H_t$.

\section{Convergence Analysis}\label{sec:convergence}

\subsection{Assumptions}\label{subsec:assumptions}

We use the following assumptions that are classical to the analysis of the loss functions $F_i(\mathbf{w}), \forall v_i\in\mathcal{V}$ \cite{liao2024asynchronous}\cite{wang2019adaptive}\cite{ma2024feduc}.
\begin{assumption}[(Smoothness)]\label{ass:smoothness}
$F_i(\mathbf{w})$ is $L$-smooth, \ie, $\forall \mathbf{w}_1, \mathbf{w}_2$, $F_i(\mathbf{w}_2)-F_i(\mathbf{w}_1)\le\langle \nabla F_i(\mathbf{w}_1),\mathbf{w}_2-\mathbf{w}_1\rangle+\frac{L}{2}{\|\mathbf{w}_2-\mathbf{w}_1\|}^2$.
\end{assumption}
\begin{assumption}[(Convexity)]\label{ass:strongconvexity}
$F_i(\mathbf{w})$ is $\mu$-strongly convex, \ie, $\forall \mathbf{w}_1, \mathbf{w}_2$, $F_i(\mathbf{w}_2)-F_i(\mathbf{w}_1)\ge\langle \nabla F_i(\mathbf{w}_1), \mathbf{w}_2-\mathbf{w}_1 \rangle+\frac{\mu}{2}{\|\mathbf{w}_2-\mathbf{w}_1\|}^2$.
\end{assumption}
Note that Assumption \ref{ass:strongconvexity} holds for those models with convex loss functions, like support vector machines and linear regression. Our mechanism can also work for models (\eg, CNN) with non-convex loss functions, which is shown in Section \ref{sec:evaluation}.

\subsection{Analysis of Convergence Bounds}\label{subsec:convergence}

To describe the data distribution among workers, we give the following definitions, which are widely adopted in the existing works \cite{wang2019adaptive}\cite{ma2024feduc}\cite{nguyen2018sgd}.

\begin{definition}[(Gradient Divergence)]\label{def:gradientdivergence}
For $\forall v_i,\mathbf{w}$, we define $\xi_i$ as the upper bound of the gradient divergence between the dataset of worker $v_i$ and the globe dataset, \ie,
\begin{align}\notag
\|\nabla F(\mathbf{w})-\nabla F_i(\mathbf{w})\|\le\xi_i
\end{align}\notag
\end{definition}
\begin{definition}[(Local Gradient)]\label{def:localgradient}
$\mathbf{w}_i^*$ is the local optimal parameter vector on worker $v_i$, then we define the expectation of its random gradients on $v_i$ is
\begin{align}\notag
g_i^*=\mathbb{E}[\|\nabla F_i(\mathbf{w}_i^*;\xi_i)\|^2]
\end{align}
where $\xi_i$ is a arbitrary sample on worker $v_i$.
\end{definition}

For simplicity, we denote the minimum of the global loss function $F$ as $F^*$, short for $F(\mathbf{w}^*)$. 

\subsubsection{Convergence of Local Updating in One Round}\label{subsubsec:intraclusteranalysis}

We begin by presenting the following Lemma \ref{lem:basegap}, which details the convergence of the local update in one round. 

\begin{lemma}\label{lem:basegap}
Taking $\eta<\frac{\mu}{2L^2}$, by the local updating of Eq. \eqref{eq:updating}, it holds that
\begin{align}\notag
\mathbb{E}[F(\mathbf{w}_{t+1}^i)]-F^*
\le \rho(\mathbb{E}[F(\mathbf{\hat{w}}_t^i)]-F^*)+\delta_i\mbox{,}
\end{align}
where $\rho=1-\mu\eta$, and $\delta_i=\frac{\eta}{2}\xi_i+L\eta^2g_i^*$.
\end{lemma}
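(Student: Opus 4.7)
The plan is to follow the standard SGD-style descent argument, but adapted to the fact that worker $v_i$ takes a step along $\nabla F_i$ while we are tracking progress on the global loss $F$. The gap between these two gradients is exactly what the divergence bound $\xi_i$ controls, while the expected squared norm of the stochastic local gradient is what $g_i^*$ controls. So the whole proof is a matter of using $L$-smoothness of $F$ (which inherits from Assumption 1 since $F=\sum \alpha_i F_i$ is a convex combination), decomposing $\nabla F_i = \nabla F + (\nabla F_i - \nabla F)$, and then converting a negative squared-gradient term into a contraction via $\mu$-strong convexity.

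Concretely, I would first apply the smoothness descent inequality to $F$ at $\hat{\mathbf{w}}_t^i$ and $\mathbf{w}_{t+1}^i=\hat{\mathbf{w}}_t^i-\eta\nabla F_i(\hat{\mathbf{w}}_t^i;\xi_t^i)$:
\begin{equation}\notag
F(\mathbf{w}_{t+1}^i)\le F(\hat{\mathbf{w}}_t^i)-\eta\bigl\langle \nabla F(\hat{\mathbf{w}}_t^i),\nabla F_i(\hat{\mathbf{w}}_t^i;\xi_t^i)\bigr\rangle+\tfrac{L\eta^2}{2}\|\nabla F_i(\hat{\mathbf{w}}_t^i;\xi_t^i)\|^2.
\end{equation}
Taking expectation over $\xi_t^i$, the stochastic gradient in the inner product collapses to the deterministic local gradient $\nabla F_i(\hat{\mathbf{w}}_t^i)$. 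I would then write $\langle \nabla F,\nabla F_i\rangle=\|\nabla F\|^2+\langle \nabla F,\nabla F_i-\nabla F\rangle$ and bound the second term by Cauchy--Schwarz together with Definition~\ref{def:gradientdivergence} and Young's inequality, producing a $-\tfrac{\eta}{2}\|\nabla F(\hat{\mathbf{w}}_t^i)\|^2$ term plus an error in $\xi_i$.

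Next I would handle $\mathbb{E}\|\nabla F_i(\hat{\mathbf{w}}_t^i;\xi_t^i)\|^2$ by splitting at the local optimum $\mathbf{w}_i^*$: use $\|a\|^2\le 2\|a-b\|^2+2\|b\|^2$ with $b=\nabla F_i(\mathbf{w}_i^*;\xi_t^i)$, then apply Assumption~\ref{ass:smoothness} to bound $\|\nabla F_i(\hat{\mathbf{w}}_t^i;\xi_t^i)-\nabla F_i(\mathbf{w}_i^*;\xi_t^i)\|$ by $L\|\hat{\mathbf{w}}_t^i-\mathbf{w}_i^*\|$, and use Definition~\ref{def:localgradient} for the remaining piece. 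Strong convexity (Assumption~\ref{ass:strongconvexity}) then converts $\|\hat{\mathbf{w}}_t^i-\mathbf{w}_i^*\|^2$ into $\tfrac{2}{\mu}(F(\hat{\mathbf{w}}_t^i)-F^*)$, so this whole quantity is controlled by the current suboptimality times an $L^2$ factor, plus $2g_i^*$. Finally I would apply the Polyak--Lojasiewicz inequality $\|\nabla F(\mathbf{w})\|^2\ge 2\mu(F(\mathbf{w})-F^*)$, which follows from $\mu$-strong convexity of $F$, to turn the $-\tfrac{\eta}{2}\|\nabla F(\hat{\mathbf{w}}_t^i)\|^2$ term into the $-\mu\eta(F(\hat{\mathbf{w}}_t^i)-F^*)$ contraction that yields $\rho=1-\mu\eta$.

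The main obstacle, and the reason the step-size restriction $\eta<\tfrac{\mu}{2L^2}$ appears, is the bookkeeping at the very end: the squared-gradient bound above introduces a cross term of order $L^2\eta^2(F(\hat{\mathbf{w}}_t^i)-F^*)$ that must be absorbed into the main $\mu\eta(F(\hat{\mathbf{w}}_t^i)-F^*)$ contraction without destroying the $(1-\mu\eta)$ rate. Choosing $\eta<\tfrac{\mu}{2L^2}$ makes $L^2\eta^2\le \tfrac{\mu\eta}{2}$, which is exactly the slack needed to swallow the cross term and leave the clean residual $\delta_i=\tfrac{\eta}{2}\xi_i+L\eta^2 g_i^*$. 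The rest of the computation is routine algebra; the only subtle point is choosing the Young's-inequality split so that the $\xi_i$ coefficient lands at $\tfrac{\eta}{2}$ and the $g_i^*$ coefficient at $L\eta^2$, matching the stated $\delta_i$.
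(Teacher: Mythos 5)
Your overall architecture matches the paper's (smoothness descent step, gradient-divergence control of the inner product, variance split at the local optimum $\mathbf{w}_i^*$, strong convexity to close the loop), but the final bookkeeping step, which you yourself flag as the crux, does not go through as you describe it, for two reasons. First, the strong-convexity conversion is misstated: $\mu$-strong convexity of $F_i$ about its own minimizer gives $\|\hat{\mathbf{w}}_t^i-\mathbf{w}_i^*\|^2\le\frac{2}{\mu}\bigl(F_i(\hat{\mathbf{w}}_t^i)-F_i^*\bigr)$, the \emph{local} suboptimality, not $\frac{2}{\mu}\bigl(F(\hat{\mathbf{w}}_t^i)-F^*\bigr)$ as you write. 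The local gap $F_i(\hat{\mathbf{w}}_t^i)-F_i^*$ is not dominated by the global gap $F(\hat{\mathbf{w}}_t^i)-F^*$ (indeed, under non-IID data it can be large even at $\mathbf{w}^*$), so the ``cross term of order $L^2\eta^2(F(\hat{\mathbf{w}}_t^i)-F^*)$'' you plan to absorb into the contraction does not exist in that form. Second, even granting the conflation, absorbing a $+\frac{\mu\eta}{2}(F-F^*)$ surplus into the $-\mu\eta(F-F^*)$ contraction would leave $\rho=1-\frac{\mu\eta}{2}$, strictly weaker than the claimed $\rho=1-\mu\eta$; your Young split cannot ``leave the clean $(1-\mu\eta)$ rate'' as asserted.

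The missing idea is where the cross term is absorbed. Your decomposition $\langle\nabla F,\nabla F_i\rangle=\|\nabla F\|^2+\langle\nabla F,\nabla F_i-\nabla F\rangle$ followed by Young's inequality discards the local-gradient contribution; the paper instead uses the exact polarization identity $\langle a,b\rangle=\frac{1}{2}\bigl(\|a\|^2+\|b\|^2-\|a-b\|^2\bigr)$ (Eq.~\eqref{eq:Enablafproduct}), which retains an additional \emph{negative} term $-\frac{\eta}{2}\|\nabla F_i(\hat{\mathbf{w}}_t^i)\|^2$. The variance bound is then routed back into this same quantity via $F_i(\hat{\mathbf{w}}_t^i)-F_i^*\le\frac{1}{2\mu}\|\nabla F_i(\hat{\mathbf{w}}_t^i)\|^2$ (Eq.~\eqref{eq:convergenceFi}), so the second-order term contributes $+\frac{L^2\eta^2}{\mu}\|\nabla F_i(\hat{\mathbf{w}}_t^i)\|^2$ and the combined coefficient is $-(\frac{\eta}{2}-\frac{L^2\eta^2}{\mu})$ (Eq.~\eqref{eq:Enablaf3}). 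The role of $\eta<\frac{\mu}{2L^2}$ is only to make this coefficient nonnegative so the whole $\|\nabla F_i\|^2$ term can be dropped, while the $-\frac{\eta}{2}\|\nabla F\|^2$ piece is converted untouched into $-\mu\eta(F-F^*)$, yielding exactly $\rho=1-\mu\eta$. Your proof becomes correct (and essentially identical to the paper's) once you replace the Young split by the polarization identity and express the variance cross term through $\|\nabla F_i(\hat{\mathbf{w}}_t^i)\|^2$ rather than through a function-value gap; as written, it proves at best a weaker contraction factor and rests on an invalid local-to-global suboptimality comparison.
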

\begin{proof}
Based on Assumption \ref{ass:smoothness}, $F$ is a $L$-smooth function. By combining with Eq. \eqref{eq:updating}, we can deduce that
\begin{align}\notag
&F(\mathbf{w}_{t+1}^i)-F^* \\\notag
\le& F(\mathbf{\hat{w}}_t^i)-F^*+\langle\nabla F(\mathbf{\hat{w}}_t^i),\mathbf{w}_{t+1}^i-\mathbf{\hat{w}}_t^i\rangle+\frac{L}{2}\|\mathbf{w}_{t+1}^i-\mathbf{\hat{w}}_t^i\|^2 \\\notag
=& F(\mathbf{\hat{w}}_t^i)-F^*-\eta\langle\nabla F(\mathbf{\hat{w}}_t^i),\nabla F_i(\mathbf{\hat{w}}_t^i;\xi_t^i)\rangle \\\label{eq:fjwtjfw}
&+\frac{L\eta^2}{2}\|\nabla F_i(\mathbf{\hat{w}}_t^i;\xi_t^i)\|^2\mbox{.}
\end{align}
By Definition \ref{def:gradientdivergence}, we obtain the expectation of the third term of Eq. \eqref{eq:fjwtjfw} as
\begin{align}\notag
&\mathbb{E}[\langle\nabla F(\mathbf{\hat{w}}_t^i),\nabla F_i(\mathbf{\hat{w}}_t^i;\xi_t^i)\rangle] \\\notag
=&\langle\nabla F(\mathbf{\hat{w}}_t^i),\nabla F_i(\mathbf{\hat{w}}_t^i)\rangle \\\notag
=&\frac{1}{2}\biggl(\|\nabla F(\mathbf{\hat{w}}_t^i)\|^2+\|\nabla F_i(\mathbf{\hat{w}}_t^i)\|^2-\|\nabla F(\mathbf{\hat{w}}_t^i)-\nabla F_i(\mathbf{\hat{w}}_t^i)\|^2\biggr) \\\label{eq:Enablafproduct}
\ge&\frac{1}{2}\biggl(\|\nabla F(\mathbf{\hat{w}}_t^i)\|^2+\|\nabla F_i(\mathbf{\hat{w}}_t^i)\|^2-\xi_i^2\biggr)\mbox{,}
\end{align}
By Assumption \ref{ass:strongconvexity}, it is obvious that $F$ is $\mu$-strongly convex, which follows
\begin{align}\label{eq:censynconvexF}
\|\nabla F(\mathbf{\hat{w}}_t^i)\|^2\ge 2\mu(F(\mathbf{\hat{w}}_t^i)-F^*)\mbox{.}
\end{align}
Utilizing the AM-GM Inequality, we can express the expectation of the last term of Eq. \eqref{eq:fjwtjfw} as
\begin{align}\notag
&\mathbb{E}[\|\nabla F_i(\mathbf{\hat{w}}_t^i;\xi_t^i)\|^2] \\\notag
=&\mathbb{E}[\|\nabla F_i(\mathbf{\hat{w}}_t^i;\xi_t^i)-\nabla F_i(\mathbf{w}_i^*;\xi_t^i)+\nabla F_i(\mathbf{w}_i^*;\xi_t^i)\|^2] \\\label{eq:Enablaf}
\le&2\mathbb{E}[\|\nabla F_i(\mathbf{\hat{w}}_t^i;\xi_t^i)-\nabla F_i(\mathbf{w}_i^*;\xi_t^i)\|^2]+2\mathbb{E}[\|\nabla F_i(\mathbf{w}_i^*;\xi_t^i)\|^2]\mbox{,}
\end{align}
where $\mathbf{w}_i^*$ is the optimal solution of $F_i$. According to Lemma 3 of \cite{nguyen2018sgd}, we have
\begin{align}\label{eq:diffFL}
\mathbb{E}[\|\nabla F_i(\mathbf{\hat{w}}_t^i;\xi_t^i)-\nabla F_i(\mathbf{w}_i^*;\xi_t^i)\|^2]
\le2L(F_i(\mathbf{\hat{w}}_t^i)-F_i^*)\mbox{.}
\end{align}
Since $F_i$ is $\mu$-strongly convex for each $\forall v_i\in\mathcal{V}$, we have
\begin{align}\label{eq:convergenceFi}
F_i(\mathbf{\hat{w}}_t^i)-F_i^*\le\frac{1}{2\mu}\|\nabla F_i(\mathbf{\hat{w}}_t^i)\|^2 \mbox{.}
\end{align}
By taking Eqs. \eqref{eq:diffFL} and \eqref{eq:convergenceFi} into Eq. \eqref{eq:Enablaf}, we have
\begin{align}\label{eq:Enablaf2}
\mathbb{E}[\|\nabla F_i(\mathbf{\hat{w}}_t^i;\xi_t^i)\|^2]\le\frac{2L}{\mu}\|\nabla F_i(\mathbf{\hat{w}}_t^i)\|^2+2g_i^*\mbox{.}
\end{align}
By taking Eqs. \eqref{eq:Enablafproduct}, \eqref{eq:censynconvexF} and \eqref{eq:Enablaf2} into Eq. \eqref{eq:fjwtjfw}, we have
\begin{align} \notag
&\mathbb{E}[F(\mathbf{w}_{t+1}^i)]-F^* \\\notag
\le& (1-\mu\eta)(\mathbb{E}[F(\mathbf{\hat{w}}_t^i)]-F^*)-(\frac{\eta}{2}-\frac{L^2\eta^2}{\mu})\|\nabla F_i(\mathbf{\hat{w}}_t^i)\|^2 \\\label{eq:Enablaf3}
&+\frac{\eta}{2}\xi_i^2+L\eta^2g_i^*\mbox{.}
\end{align}
Since $\eta<\frac{\mu}{2L^2}$, we have
\begin{align}\label{eq:Enablaf4}
\mathbb{E}[F(\mathbf{w}_{t+1}^i)]-F^* \le \rho(\mathbb{E}[F(\mathbf{\hat{w}}_t^i)]-F^*)+\delta_i\mbox{,}
\end{align}
where $\rho=1-\mu\eta$ and $\delta_i=\frac{\eta}{2}\xi_i^2+L\eta^2g_i^*$.
\end{proof}

\subsubsection{A key Lemma for Analysis}\label{subsubsec:keylemma}

To simplify the expression, we define three sequences of
matrices $\mathbf{X}_t$, $\mathbf{Y}_t^j$ and $\mathbf{Z}_t$ for $t\ge 1$, where $\mathbf{X}_t=\mbox{diag}(x_t^1,x_t^2,...,x_t^N)$ and $\mathbf{Y}_t^j=\mbox{diag}(y_t^{1,j},y_t^{2,j},...,y_t^{N,j})$, $j\in[1,N]$ are $N\times N$ diagonal matrices. $\mathbf{Z}_t=[z_t^1,z_t^2,...,z_t^N]^\top$ is a $N$-dimensional vector.
In particular, if worker $v_i$ is not activated by the coordinator at round $t$, \ie, $v_i\notin\mathcal{A}_t$, then the $i$-th elements of $\mathbf{X}_t$, $\mathbf{Y}_t^j$ and $\mathbf{Z}_t$ are given by $x_t^i=1$, $y_t^{i,j}=0$ and $z_t^i=0$, respectively.
$x_t^i$ and $y_t^{i,j}$ satisfy $\theta_t^i=x_t^i+\sum_{v_j \in \mathcal{N}_t^i}y_t^{i,j}\le 1$.
We first state a key lemma for our statement.
For convenience, let $\omega_t^j=t-\tau_t^j-1\ge0 $, $\tau_{\max}=\max_{j,t}\{\tau_t^j\}$ and $\theta_{\max}=\max_{i,t}\{\theta_t^i\}$.

\begin{lemma}\label{lem:Qasy}
Let ${\mathbf{Q}_t}$ be a sequence of matrices for $t\ge0$. If $\mathbf{Q}_t\le \mathbf{X}_t\mathbf{Q}_{t-1}+\sum_{v_j \in \mathcal{N}_t^i}\mathbf{Y}_t^j\mathbf{Q}(\omega_t^j)+\mathbf{Z}_t$, then we have
\begin{equation}\notag
\mathbf{Q}_T\le \prod_{t=0}^T\mathbf{P}_t\mathbf{Q}_0+\sum_{t=0}^T\mathbf{\Delta}_t\mbox{,}
\end{equation}
where $\mathbf{P}_t=\mbox{diag}(p_t^1,p_t^2,...,p_t^N)$ is a $N\times N$ diagonal matrix and each scalar component satisfies
\begin{align}\notag
p_t^i=
\begin{cases}
{\theta_{\rm max}}^{\frac{1}{1+\tau_{max}}}, &\text{if } v_i\in\mathcal{A}_t \\
1, &\text{otherwise}
\end{cases}\mbox{.}
\end{align}
$\mathbf{\Delta}_t$ is a $M$-dimensional vector satisfies the following recursive relation:
\begin{align}\notag
\mathbf{\Delta}_t=
\begin{cases}
[0,0,...,0]^\top, t=0\\
(\mathbf{X}_t+\sum\nolimits_{v_j \in \mathcal{N}_t^i}\mathbf{Y}_t^j-\mathbf{E})\sum\limits_{r=0}^{t-1}\mathbf{\Delta}_r+\mathbf{Z}_t,t\ge 1\mbox{.}
\end{cases}
\end{align}
\end{lemma}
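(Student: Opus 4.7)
The plan is to prove the lemma by induction on $T$, decomposing the right-hand side into the $\mathbf{Q}_0$-coefficient and the residual captured by $\sum_{t=0}^T\mathbf{\Delta}_t$. After handling the base case $T=0$ via the definition $\mathbf{\Delta}_0=\mathbf{0}$, the inductive step substitutes the bound $\mathbf{Q}_T\le \mathbf{X}_T\mathbf{Q}_{T-1}+\sum_{v_j\in\mathcal{N}_T^i}\mathbf{Y}_T^j\mathbf{Q}(\omega_T^j)+\mathbf{Z}_T$ and invokes the inductive hypothesis on the strictly earlier time indices $T-1$ and $\omega_T^j=T-\tau_T^j-1\le T-1$. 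Since $\mathbf{X}_t$ and $\mathbf{Y}_t^j$ are diagonal, I can argue componentwise for each worker $v_i$.

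The heart of the proof is the amortization bound for the coefficient multiplying $\mathbf{Q}_0^i$. After unfolding, this coefficient equals $x_T^i\prod_{r=0}^{T-1}p_r^i+\sum_{v_j\in\mathcal{N}_T^i}y_T^{i,j}\prod_{r=0}^{\omega_T^j}p_r^i$, and I want to bound it by $\prod_{r=0}^T p_r^i$. For $v_i\notin\mathcal{A}_T$ the claim is immediate since $p_T^i=1$ and the coefficient collapses to $\prod_{r=0}^{T-1}p_r^i$. For $v_i\in\mathcal{A}_T$, dividing through by $\prod_{r=0}^{T-1}p_r^i$ reduces the target inequality to
\[ x_T^i+\sum_{v_j\in\mathcal{N}_T^i}y_T^{i,j}\prod_{r=\omega_T^j+1}^{T-1}(p_r^i)^{-1}\le p_T^i=\theta_{\max}^{1/(1+\tau_{\max})}. \]
Each inner product contains exactly $\tau_T^j\le\tau_{\max}$ factors, each bounded above by $\theta_{\max}^{-1/(1+\tau_{\max})}$, so the whole product is at most $\theta_{\max}^{-\tau_{\max}/(1+\tau_{\max})}$. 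Combining with $\theta_T^i=x_T^i+\sum_j y_T^{i,j}\le\theta_{\max}$ yields the desired $\theta_{\max}\cdot\theta_{\max}^{-\tau_{\max}/(1+\tau_{\max})}=\theta_{\max}^{1/(1+\tau_{\max})}$.

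For the residual, I would verify that the prescribed $\mathbf{\Delta}_t=(\mathbf{X}_t+\sum_j\mathbf{Y}_t^j-\mathbf{E})\sum_{r=0}^{t-1}\mathbf{\Delta}_r+\mathbf{Z}_t$ is exactly what the inductive unfolding leaves behind once the $\mathbf{Q}_0$-terms have been separated. For inactive $v_i$ the coefficient matrix zeroes out the $i$-th row and $z_t^i=0$, so $\Delta_t^i=0$ and the running sum is unchanged; for active $v_i$, the $\theta_t^i-1$ correction updates the running sum to $\theta_t^i\sum_{r=0}^{t-1}\Delta_r^i+z_t^i$, which is precisely the combination needed to dominate $x_T^i\sum_{r=0}^{T-1}\Delta_r^i+\sum_j y_T^{i,j}\sum_{r=0}^{\omega_T^j}\Delta_r^i+z_T^i$ left over from the unfolding.

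The main obstacle is the amortization inequality in the middle paragraph. Without delays (i.e.\ $\tau_{\max}=0$), a single round for active $v_i$ contracts by $\theta_T^i\le\theta_{\max}$, suggesting $p_t^i=\theta_{\max}$. Once delays are present, applying the inductive hypothesis to $\mathbf{Q}(\omega_T^j)$ effectively ``skips'' $\tau_T^j$ rounds of decay, forcing $p_t^i$ to absorb the lost contraction. Balancing the number of skipped factors (at most $\tau_{\max}$) against the single contraction $\theta_{\max}$ is what pins down the exponent $1/(1+\tau_{\max})$ and makes the bound hold uniformly across all delay patterns admissible under the staleness constraint $\tau_t^j\le\tau_{\max}$.
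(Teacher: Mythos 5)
Your proposal is correct and follows essentially the same route as the paper's proof: induction on $T$ with the key amortization inequality $x_T^i+\sum_{v_j\in\mathcal{N}_T^i}y_T^{i,j}\prod_{r=T-\tau_T^j}^{T-1}(p_r^i)^{-1}\le\theta_{\max}\cdot\theta_{\max}^{-\tau_{\max}/(1+\tau_{\max})}=\theta_{\max}^{1/(1+\tau_{\max})}$, which is exactly the paper's bound $\mathbf{X}_t+\sum_j\mathbf{Y}_t^j\prod_{r=t-\tau_t^j}^{t-1}[\mathbf{P}_r]^{-1}\le\mathbf{P}_t$, merely written componentwise instead of in diagonal-matrix form. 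Your handling of the residual term via $(\mathbf{X}_t+\sum_j\mathbf{Y}_t^j-\mathbf{E})$, including the implicit use of nonnegativity of the $\mathbf{\Delta}_r$ to extend the truncated sums up to $T-1$, also matches the paper's derivation.
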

\begin{proof}
We first derive the following relation:
\begin{align}\label{eq:diagp}
\mathbf{X}_t+\sum_{v_j \in \mathcal{N}_t^i}\mathbf{Y}_t^j\prod_{r=t-\tau_t^j}^{t-1}[\mathbf{P}_r]^{-1}
\le \mbox{diag}(\hat{p}_t^1,\hat{p}_t^2,...,\hat{p}_t^N)
\end{align}
where
\begin{align}\notag
\hat{p}_t^i=
\begin{cases}
x_t^i+\sum_{v_j \in \mathcal{N}_t^i}y_t^{i,j}{\theta_{\rm max}}^{-\frac{\tau_{\rm max}}{1+\tau_{\rm max}}}, &\text{if } v_i\in\mathcal{A}_t \\
1, &\text{otherwise}
\end{cases}\mbox{.}
\end{align}
Since $\theta_{\rm max}<0$, ${\theta_{\rm max}}^{-\frac{\tau_{\rm max}}{1+\tau_{\rm max}}}>1$. It follows that
\begin{align}\notag
x_t^i+\sum_{v_j \in \mathcal{N}_t^i}y_t^{i,j}{\theta_{\rm max}}^{-\frac{\tau_{\rm max}}{1+\tau_{\rm max}}}
\le& (x_t^i+\sum_{v_j \in \mathcal{N}_t^i}y_t^{i,j}){\theta_{\rm max}}^{-\frac{\tau_{\rm max}}{1+\tau_{\rm max}}} \\\notag
\le& \theta_{\rm max}\cdot{\theta_{\rm max}}^{-\frac{\tau_{\rm max}}{1+\tau_{\rm max}}}\\\label{eq:decasylemmatemp}
=&{\theta_{\rm max}}^{\frac{1}{1+\tau_{\rm max}}}
\end{align}
By substituting Eq. \eqref{eq:decasylemmatemp} into Eq. \eqref{eq:diagp}, we derive that
\begin{align}
\mathbf{X}_t+\sum_{v_j \in \mathcal{N}_t^i}\mathbf{Y}_t^j\prod_{r=t-\tau_t^j}^{t-1}[\mathbf{P}_r]^{-1}\le\mathbf{P}_t
\end{align}
When $t=0$, it is obvious that Lemma \ref{lem:Qasy} is true. We assume that the induction hypothesis holds for all $t$ ranging from 0 to $T-1$, \ie,
\begin{equation}
\mathbf{Q}_t\le \prod_{r=0}^t\mathbf{P}_r\mathbf{Q}_0+\sum_{r=0}^t\mathbf{\Delta}_r,\quad\forall t\in\{0,1,...T-1\}\mbox{.}
\end{equation}
When $t=T$, we obtain that
\begin{align}\notag
\mathbf{Q}_T \le&\mathbf{X}_T\mathbf{Q}_{T-1}+\sum_{v_j \in \mathcal{N}_t^i}\mathbf{Y}_T^j\mathbf{Q}(\omega_T^j)+\mathbf{Z}_T \\\notag
\le& \mathbf{X}_T\biggl[\prod_{t=0}^{T-1}\mathbf{P}_t\mathbf{Q}_0+\sum_{t=0}^{T-1}\mathbf{\Delta}_t\biggr] \\\notag
&+\sum_{v_j \in \mathcal{N}_t^i}\mathbf{Y}_T^j\biggl[\prod_{t=0}^{\omega_T^j}\mathbf{P}_t\mathbf{Q}_0+\sum_{t=0}^{\omega_T^j}\mathbf{\Delta}_t\biggr]+\mathbf{Z}_T \\\notag
=& \biggl[\mathbf{X}_T+\sum_{v_j \in \mathcal{N}_t^i}\mathbf{Y}_T^j\prod_{t=T-\tau_T^j}^{T-1}[\mathbf{P}_t]^{-1}\biggr]\prod_{t=0}^{T-1}\mathbf{P}_t\mathbf{Q}_0 \\\notag
&+\biggl[\mathbf{X}_T\sum_{t=0}^{T-1}\mathbf{\Delta}_t+\sum_{v_j \in \mathcal{N}_t^i}\mathbf{Y}_T^j\sum_{t=0}^{\omega_T^j}\mathbf{\Delta}_t\biggr]+\mathbf{Z}_T \\\notag
\le& \mathbf{P}_T\prod_{t=0}^{T-1}\mathbf{P}_t\mathbf{Q}_0+\biggl[\mathbf{X}_T+\sum_{v_j \in \mathcal{N}_t^i}\mathbf{Y}_T^j\biggr]\sum_{t=0}^{T-1}\mathbf{\Delta}_t+\mathbf{Z}_T \\\label{eq:Enablaf6}
=& \prod_{t=0}^{T}\mathbf{P}_t\mathbf{Q}_0+\sum_{t=0}^T\mathbf{\Delta}_t\mbox{.}
\end{align}
Thus, Lemma \ref{lem:Qasy} holds based on the induction.
\end{proof}

\subsubsection{Convergence Bound of DySTop}\label{subsubsec:decasycon}

We next analyze DySTop's convergence bound with Lemma \ref{lem:basegap} and Lemma \ref{lem:Qasy} in the following Theorem \ref{thm:convergence}. For clarity, we represent the weight vector of all workers as $\mathbf{A}=[\alpha_1,...\alpha_N]$, where each $v_i$'s weight is $\alpha_i=\frac{D_i}{D}$, denoting the relative data size to the total.
Since a global model does not exist in a decentralized topology, we consider the weighted sum of of all local models, that is $\mathbf{w}_T=\sum_{v_i\in\mathcal{V}}\frac{D_i}{D}\mathbf{w}_T^i=\sum_{v_i\in\mathcal{V}}\alpha_i\mathbf{w}_T^i$.

\begin{theorem}\label{thm:convergence}
$\mathbf{w}_0$ is the initial model on each worker. After inter-cluster aggregation Eq. \eqref{eq:aggregation} is performed $T$ times, the weighted model $\mathbf{w}_T$ satisfies
\begin{align}\notag
\mathbb{E}[F(\mathbf{w}_T)]-F^*\le& \sum_{v_i\in\mathcal{V}}\alpha_i\rho^{\frac{\psi_iT}{1+\tau_{\max}}}(F(\mathbf{w}_0)-F^*) \\\notag
&+ \mathbf{A}\sum_{t=0}^T\mathbf{\Delta}_t\mbox{,}
\end{align}
where $\psi_i$ denotes the activating frequency of worker $v_i$, and $\mathbf{\Delta}_t$ meets the following recursive relation:
\begin{align}\label{eq:decasydelta}
&\mathbf{\Delta}_t=
\begin{cases}
[0,0,...,0]^\top, t=0\\
\mathbf{W}_t\sum\limits_{r=0}^{t-1}\mathbf{\Delta}_r+\mathbf{Z}_t, t\ge 1\mbox{.}
\end{cases}
\end{align}
where $\mathbf{W}_t=\mbox{diag}(w_t^1,w_t^2,...,w_t^N)$ is a $N\times N$ diagonal matrix and the scalar components satisfy
\begin{align}\notag
w_t^i=
\begin{cases}
\rho, &\text{if } v_i\in\mathcal{A}_t \\
1, &\text{otherwise}
\end{cases}\mbox{.}
\end{align}
and $\mathbf{Z}_t=[z_t^1,z_t^2,...,z_t^N]^\top$ is a vector and the scalar components satisfy
\begin{align}\notag
z_t^i=
\begin{cases}
\sum_{v_j \in \mathcal{N}_t^i}\sigma_t^{i,j}\delta_j, &\text{if } v_i\in\mathcal{A}_t \\
0, &\text{otherwise}
\end{cases}\mbox{.}
\end{align}
\end{theorem}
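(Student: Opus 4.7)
The plan is to stack the per-worker error inequalities into a matrix recursion that fits the hypothesis of Lemma \ref{lem:Qasy}, apply that lemma to unroll the recursion over $T$ rounds, and then convert the resulting bound on individual local models into a bound on the weighted global model $\mathbf{w}_T$ via convexity of $F$.

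First, I would derive the per-round evolution of $\mathbb{E}[F(\mathbf{w}_t^i)]-F^*$ and read off the matrix entries. For an inactive worker $v_i\notin\mathcal{A}_t$, Eq.~\eqref{eq:staleness}-\eqref{eq:asyw} imply $\mathbf{w}_{t+1}^i=\mathbf{w}_t^i$, so the error is trivially carried over, giving $x_t^i=1$, $y_t^{i,j}=0$, $z_t^i=0$. For an active $v_i\in\mathcal{A}_t$, the aggregation \eqref{eq:aggregation} is a convex combination since $\sum_{v_j\in\mathcal{N}_t^i}\sigma_t^{i,j}=1$, so Jensen's inequality on $F$ gives $F(\hat{\mathbf{w}}_t^i)\le\sum_{v_j\in\mathcal{N}_t^i}\sigma_t^{i,j}F(\mathbf{w}_{t-\tau_t^j}^j)$. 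Combining this with Lemma \ref{lem:basegap}, applied to the last local step that produced each stale component $\mathbf{w}_{t-\tau_t^j}^j$ on worker $v_j$, the per-branch noise $\delta_j$ enters with weight $\sigma_t^{i,j}$, yielding
\[
\mathbb{E}[F(\mathbf{w}_{t+1}^i)]-F^*\le\rho\!\!\sum_{v_j\in\mathcal{N}_t^i}\!\!\sigma_t^{i,j}\bigl(\mathbb{E}[F(\mathbf{w}_{\omega_t^j}^j)]-F^*\bigr)+\!\!\sum_{v_j\in\mathcal{N}_t^i}\!\!\sigma_t^{i,j}\delta_j\mbox{,}
\]
with $\omega_t^j=t-\tau_t^j-1$. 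This identifies $y_t^{i,j}=\rho\sigma_t^{i,j}$ and $z_t^i=\sum_{v_j\in\mathcal{N}_t^i}\sigma_t^{i,j}\delta_j$, so the matrices $\mathbf{W}_t$ and vector $\mathbf{Z}_t$ in the theorem statement are precisely those read off from this correspondence. The hypothesis $\theta_t^i=x_t^i+\sum_j y_t^{i,j}\le 1$ of Lemma \ref{lem:Qasy} holds with $\theta_{\max}=\rho=1-\mu\eta<1$, since the step size condition $\eta<\mu/(2L^2)$ in Lemma \ref{lem:basegap} ensures $\mu\eta<1$.

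Second, Lemma \ref{lem:Qasy} then produces $\mathbf{Q}_T\le\prod_{t=0}^T\mathbf{P}_t\,\mathbf{Q}_0+\sum_{t=0}^T\mathbf{\Delta}_t$ where $\mathbf{Q}_t^i=\mathbb{E}[F(\mathbf{w}_t^i)]-F^*$. Along the diagonal, every round in which $v_i$ is active multiplies by $\theta_{\max}^{1/(1+\tau_{\max})}=\rho^{1/(1+\tau_{\max})}$ and every inactive round by $1$; with $\psi_i T$ activations over $T$ rounds this collapses to $\rho^{\psi_iT/(1+\tau_{\max})}$. Since all workers share a common initialization, $\mathbf{Q}_0=(F(\mathbf{w}_0)-F^*)\mathbf{1}$. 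The proof is then closed by applying convexity once more to the weighted model $\mathbf{w}_T=\sum_{v_i\in\mathcal{V}}\alpha_i\mathbf{w}_T^i$ defined in \eqref{eq:weightedmodel}: $\mathbb{E}[F(\mathbf{w}_T)]-F^*\le\sum_{v_i\in\mathcal{V}}\alpha_i(\mathbb{E}[F(\mathbf{w}_T^i)]-F^*)=\mathbf{A}\mathbf{Q}_T$, and substituting the bound from Lemma \ref{lem:Qasy} yields the stated inequality.

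The main obstacle I expect is the bookkeeping in the first step: correctly handling the off-by-one between $\tau_t^j$ (measured at round $t$) and $\omega_t^j=t-\tau_t^j-1$ (the index at which Lemma \ref{lem:basegap} attaches the noise), and making sure the per-branch terms $\delta_j$ rather than $\delta_i$ emerge with weights $\sigma_t^{i,j}$ when the aggregation is decomposed through Lemma \ref{lem:basegap}. Once this recursion is set up cleanly, the geometric-decay count of active rounds and the final Jensen-based conversion to the weighted global model are essentially mechanical.
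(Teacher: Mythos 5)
Your proposal follows essentially the same route as the paper's proof: Jensen's inequality on the convex combination \eqref{eq:aggregation}, Lemma \ref{lem:basegap} for the local step to read off $y_t^{i,j}=\rho\sigma_t^{i,j}$ and $z_t^i=\sum_{v_j\in\mathcal{N}_t^i}\sigma_t^{i,j}\delta_j$ with $\theta_{\max}=\rho$, unrolling via Lemma \ref{lem:Qasy}, and a final convexity step with the weights $\alpha_i$. The bookkeeping wrinkle you flag is real but benign: to obtain the stated $\mathbf{Z}_t$ the recursion must be kept in the aggregated-model errors $\mathbb{E}[F(\hat{\mathbf{w}}_t^i)]-F^*$ (so the right-hand side carries $\hat{\mathbf{w}}_{\omega_t^j}^j$ rather than the $\mathbf{w}_{\omega_t^j}^j$ you wrote), which is exactly the paper's choice, closed at the end by the approximation $\mathbf{w}_T^i\approx\hat{\mathbf{w}}_T^i$ for large $T$.
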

\begin{proof}
Due to the convexity of $F$ and $\sigma_t^{i,j}\in(0,1)$, together with Eqs. \eqref{eq:asyw} and \eqref{eq:aggregation}, for $\forall v_i\in\mathcal{A}_t$, it holds that
\begin{align}\notag
F(\mathbf{\hat{w}}_t^i)-F^*=&\sum_{v_j \in \mathcal{N}_t^i}\sigma_t^{i,j}F(\mathbf{\hat{w}}_t^i)-F^* \\\notag1
\le& \sum_{v_j \in \mathcal{N}_t^i}\sigma_t^{i,j}(F(\mathbf{w}_t^j)-F^*) \\\label{eq:Enablaf7}
=&\sum_{v_j \in \mathcal{N}_t^i}\sigma_t^{i,j}(F(\mathbf{w}_{t-\tau_t^j}^j)-F^*)\mbox{.}
\end{align}
According to Lemma \ref{lem:basegap}, we derive that
\begin{align}\notag
&\mathbb{E}[F(\mathbf{\hat{w}}_t^i)]-F^*\\\label{eq:Enablaf8}
\le& \rho\sum_{v_j \in \mathcal{N}_t^i}\sigma_t^{i,j}(\mathbb{E}[F(\mathbf{\hat{w}}_{t-\tau_t^j-1}^j)]-F^*)+\sum_{v_j \in \mathcal{N}_t^i}\sigma_t^{i,j}\delta_j\mbox{.}
\end{align}
Let $Q_t^i = F(\mathbf{\hat{w}}_t^i)-F^*$ and $\mathbf{Q}_t = [Q_t^1, ..., Q_t^N]^\top$. The recursive relation Eq. \eqref{eq:Enablaf8} at round $t$ is converted into
\begin{align}\label{eq:Enablaf9}
\mathbf{Q}_t\le \mathbf{X}_t\mathbf{Q}_{t-1}+\sum\nolimits_{v_j \in \mathcal{N}_t^i}\mathbf{Y}_t^j\mathbf{Q}(\omega_t^j)+\mathbf{Z}_t\mbox{,}
\end{align}
where $\mathbf{X}_t=\mbox{diag}(x_t^1,x_t^2,...,x_t^N)$ and $\mathbf{Y}_t^j=\mbox{diag}(y_t^{1,j},y_t^{2,j},...,y_t^{N,j})$ are diagonal matrices, and $\mathbf{Z}_t=[z_t^1,z_t^2,...,z_t^N]^\top$ is a vector. The scalar components are given by
\begin{align}\notag
x_t^i=
\begin{cases}
\rho\sigma_t^{i,j}, &\text{if } v_i\in\mathcal{A}_t \\
1, &\text{otherwise}
\end{cases}\mbox{,}
y_t^{i,j}=
\begin{cases}
\rho\sigma_t^{i,j}, &\text{if } v_i\in\mathcal{A}_t \\
0, &\text{otherwise}
\end{cases}
\end{align}
and
\begin{align}\notag
z_t^i=
\begin{cases}
\sum_{v_j \in \mathcal{N}_t^i}\sigma_t^{i,j}\delta_j, &\text{if } v_i\in\mathcal{A}_t \\
0, &\text{otherwise}
\end{cases}\mbox{.}
\end{align}
By leveraging Lemma \ref{lem:Qasy}, we derive that
\begin{align}
\mathbf{Q}_T\le \prod_{t=0}^T\mathbf{P}_t\mathbf{Q}_0+\sum_{t=0}^T\mathbf{\Delta}_t\mbox{,}
\end{align}
where $\mathbf{P}_t=\mbox{diag}(p_t^1,p_t^2,...,p_t^N)$ is a $N\times N$ diagonal matrix and the scalar components satisfy
\begin{align}\notag
p_t^i=
\begin{cases}
{\theta_{\rm max}}^{\frac{1}{1+\tau_{\rm max}}}, &\text{if } v_i\in\mathcal{A}_t \\
1, &\text{otherwise}
\end{cases}\mbox{,}
\end{align}
and $\mathbf{\Delta}_t$ satisfies \eqref{eq:decasydelta}.
When the training round $T$ is large enough, the models on each device converge uniformly, thus we have $\mathbf{w}_T^i\approx\mathbf{\hat{w}}_T^i$. Therefore,
\begin{align}\notag
&\mathbb{E}[F(\mathbf{w}_T)]-F^*\\\notag
\le& \sum_{v_i\in\mathcal{V}}\alpha_i(\mathbb{E}[F(\mathbf{w}_T^i)]-F^*)\\\notag
\approx& \sum_{v_i\in\mathcal{V}}\alpha_i(\mathbb{E}[F(\mathbf{\hat{w}}_T^i)]-F^*)\\\notag
=&\sum_{v_i\in\mathcal{V}}\alpha_iQ_T^i\\\notag
=& \mathbf{A}\mathbf{Q}_T \\\notag
\le& \mathbf{A}\prod_{t=0}^T\mathbf{P}_t\mathbf{1}_N(F(\mathbf{w}_0)-F^*)+ \mathbf{A}\sum_{t=0}^T\mathbf{\Delta}_t\\\label{eq:Enablaf11}
=&\sum_{v_i\in\mathcal{V}}\alpha_i\rho^{\frac{\psi_iT}{1+\tau_{\max}}}(F(\mathbf{w}_0)-F^*) + \mathbf{A}\sum_{t=0}^T\mathbf{\Delta}_t \mbox{.}
\end{align}
where $\psi_i$ denotes the activating frequency of worker $v_i$, so $\psi_i T$ is the number of times $v_i$ is activated over $T$ rounds.
Therefore, we complete the proof.
\end{proof}

\subsection{Discussions}

We denote the convergence bound of DySTop in Theorem \ref{thm:convergence} as $Bound_T=\sum_{v_i\in\mathcal{V}}\alpha_i\rho^{\frac{\psi_iT}{1+\tau_{\max}}}(F(\mathbf{w}_0)-F^*) + \mathbf{A}\sum_{t=0}^T\mathbf{\Delta}_t$. From this, we can derive several meaningful corollaries.

\begin{corollary}\label{cor:tau}
The convergence bound $Bound_T$ decreases as the upper bound of staleness $\tau_{\max}$ decreases.
\end{corollary}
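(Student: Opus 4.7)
The plan is to split the bound as $Bound_T = T_1 + T_2$, where
$$T_1 = \sum_{v_i\in\mathcal{V}}\alpha_i\rho^{\frac{\psi_iT}{1+\tau_{\max}}}(F(\mathbf{w}_0)-F^*), \qquad T_2 = \mathbf{A}\sum_{t=0}^T\mathbf{\Delta}_t,$$
and then argue separately that (i) $T_1$ is strictly decreasing in $\tau_{\max}$ and (ii) $T_2$ does not involve $\tau_{\max}$ at all. Summing the two conclusions gives the corollary immediately.

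For step (i), I would first establish $\rho\in(0,1)$. Lemma \ref{lem:basegap} requires $\eta<\mu/(2L^2)$; combining Assumptions \ref{ass:smoothness} and \ref{ass:strongconvexity} forces $\mu\le L$, hence $\mu\eta<\mu^2/(2L^2)\le 1/2$, so $\rho=1-\mu\eta\in(1/2,1)$. Next, $\alpha_i\ge 0$, $\psi_iT\ge 0$, and $F(\mathbf{w}_0)-F^*\ge 0$ by definition of $F^*$. Since $\tau_{\max}\mapsto\psi_iT/(1+\tau_{\max})$ is strictly decreasing on $[0,\infty)$ and $\rho\in(0,1)$ makes $x\mapsto \rho^{x}$ strictly decreasing, the composition $\tau_{\max}\mapsto \rho^{\psi_iT/(1+\tau_{\max})}$ is strictly increasing. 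Hence every summand of $T_1$ is non-decreasing (strictly increasing whenever $\psi_i>0$ and $F(\mathbf{w}_0)\neq F^*$) in $\tau_{\max}$; the $\alpha_i$-weighted sum preserves monotonicity, so decreasing $\tau_{\max}$ strictly decreases $T_1$.

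For step (ii), I would inspect the recursion stated in Theorem \ref{thm:convergence}: $\mathbf{\Delta}_t = \mathbf{W}_t\sum_{r=0}^{t-1}\mathbf{\Delta}_r+\mathbf{Z}_t$ with $\mathbf{\Delta}_0=\mathbf{0}$, where $\mathbf{W}_t$ has entries in $\{\rho,1\}$ (determined solely by whether $v_i\in\mathcal{A}_t$) and $\mathbf{Z}_t$ is built only from $\rho$, the weights $\sigma_t^{i,j}$, and the per-worker constants $\delta_j=\tfrac{\eta}{2}\xi_j^2+L\eta^2 g_j^*$. None of these ingredients contains the symbol $\tau_{\max}$, so the whole sequence $\{\mathbf{\Delta}_t\}_{t=0}^T$, and therefore $T_2$, is invariant under changes of $\tau_{\max}$.

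The main obstacle I anticipate is making the comparison precise: changing $\tau_{\max}$ in principle alters which activation schedules are admissible, and therefore could perturb $\psi_i$, $\mathcal{A}_t$, and $\sigma_t^{i,j}$. I would handle this by framing the corollary as a sensitivity claim in which the activation frequencies $\psi_i$, the aggregation weights, and the topology are held fixed while only the explicit parameter $\tau_{\max}$ varies. Under that convention the result reduces to the elementary monotonicity of $x\mapsto \rho^{c/(1+x)}$ on $[0,\infty)$ for $c\ge 0$ and $\rho\in(0,1)$, which completes the argument.
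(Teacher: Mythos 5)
Your proposal is correct and follows essentially the same route as the paper, which states Corollary~\ref{cor:tau} as a direct consequence of the form of $Bound_T$ in Theorem~\ref{thm:convergence}: $\tau_{\max}$ enters only through the exponent $\frac{\psi_i T}{1+\tau_{\max}}$ of $\rho\in(0,1)$ in the first term, while the recursion defining $\mathbf{\Delta}_t$ (hence $\mathbf{A}\sum_{t=0}^{T}\mathbf{\Delta}_t$) never involves $\tau_{\max}$. Your additions --- verifying $\rho\in(1/2,1)$ from $\eta<\mu/(2L^2)$ and $\mu\le L$, and fixing the schedule quantities $\psi_i$, $\mathcal{A}_t$, $\sigma_t^{i,j}$ while varying $\tau_{\max}$ --- merely make explicit what the paper leaves implicit, and are sound.
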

\begin{corollary}\label{cor:psi}
The convergence bound $Bound_T$ decreases as the activating frequency $\psi_i$ of each worker $v_i$ increases.
\end{corollary}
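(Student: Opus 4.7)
The plan is to decompose
$Bound_T=\sum_{v_i\in\mathcal{V}}\alpha_i\rho^{\psi_iT/(1+\tau_{\max})}(F(\mathbf{w}_0)-F^*)+\mathbf{A}\sum_{t=0}^T\mathbf{\Delta}_t$
into an initial-gap decay term and an accumulated-perturbation term, and to analyse each as a function of $\psi_i$ separately before combining.

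For the first term, the step-size condition in Lemma~\ref{lem:basegap} together with $\mu\eta<1$ forces $\rho=1-\mu\eta\in(0,1)$, so $\psi_i\mapsto\rho^{\psi_iT/(1+\tau_{\max})}$ is strictly decreasing. Scaling by the positive constant $\alpha_i(F(\mathbf{w}_0)-F^*)$ and summing over workers preserves the monotonicity, so the initial-gap term strictly decreases whenever any single $\psi_i$ grows. This is the easy half of the argument.

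For the second term, I would unroll the recursion $\Delta_t^i=w_t^i\sum_{r=0}^{t-1}\Delta_r^i+z_t^i$ component-by-component, exploiting the diagonality of $\mathbf{W}_t$ so that each coordinate evolves independently of the others. Collecting the geometric contributions from the rounds in which $v_i$ is activated (where $w_t^i=\rho<1$ contracts the accumulated error and $z_t^i=\sum_{v_j\in\mathcal{N}_t^i}\sigma_t^{i,j}\delta_j$ is injected) yields a closed-form upper bound of the shape $\sum_{v_i}\alpha_i\frac{1-\rho^{\psi_iT}}{1-\rho}\sum_{v_j\in\mathcal{N}_t^i}\sigma_t^{i,j}\delta_j$, which saturates at the constant $\sum_{v_j\in\mathcal{N}_t^i}\sigma_t^{i,j}\delta_j/(1-\rho)$ as $\psi_iT$ grows. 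The perturbation term is therefore uniformly bounded in $\psi_i$, whereas the initial-gap term shrinks geometrically in $\psi_i$.

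The main obstacle is that the perturbation term is actually non-decreasing in $\psi_i$, since activating $v_i$ more often injects the $z_t^i$ noise more often, so a naive "both terms decrease" argument fails. I expect to close the gap by computing $\partial Bound_T/\partial\psi_i$ explicitly: the negative contribution from the first term scales like $\rho^{\psi_iT/(1+\tau_{\max})}$, while the positive contribution from the second scales like $\rho^{\psi_iT}$, and the inequality $\rho^{\psi_iT}\le\rho^{\psi_iT/(1+\tau_{\max})}$ (together with the $(F(\mathbf{w}_0)-F^*)$ factor dominating the noise-floor scale set by the $\delta_j$'s under the standing small-step-size assumption) forces the derivative to be non-positive. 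This domination of decay over noise accumulation is what makes Corollary~\ref{cor:psi} go through, paralleling the role played by $\tau_{\max}$ in Corollary~\ref{cor:tau}.
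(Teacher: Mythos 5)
Your decomposition is sound, and the closed form you derive for the perturbation term is in fact exactly what the paper computes in a portion of the proof of Theorem \ref{thm:convergence} that was cut from the final text, namely $\mathbf{A}\sum_{t=0}^{T}\mathbf{\Delta}_t=\sum_{v_i\in\mathcal{V}}\alpha_i\frac{1-\rho^{\psi_iT}}{1-\rho}\sum_{v_j}\sigma_t^{i,j}\delta_j$. You are also more careful than the paper itself, which states Corollary \ref{cor:psi} without proof, reading it off the monotonicity of the transient term $\rho^{\psi_iT/(1+\tau_{\max})}$ alone and silently ignoring that the accumulated-error term is non-decreasing in $\psi_i$. The gap is in your final domination step. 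Writing $G=F(\mathbf{w}_0)-F^*$ and $\delta=\sum_{v_j}\sigma_t^{i,j}\delta_j$, the $\psi_i$-derivative of the two relevant contributions is
\begin{align*}
\frac{\partial\, Bound_T}{\partial\psi_i}=\alpha_i T\ln\rho\left[\frac{G}{1+\tau_{\max}}\,\rho^{\frac{\psi_iT}{1+\tau_{\max}}}-\frac{\delta}{1-\rho}\,\rho^{\psi_iT}\right],
\end{align*}
and since $\ln\rho<0$ this is non-positive if and only if $\frac{G(1-\rho)}{(1+\tau_{\max})\delta}\ge\rho^{\frac{\psi_iT\tau_{\max}}{1+\tau_{\max}}}$. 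Your inequality $\rho^{\psi_iT}\le\rho^{\psi_iT/(1+\tau_{\max})}$ produces exactly the factor $\rho^{\psi_iT\tau_{\max}/(1+\tau_{\max})}\le 1$ on the right, so monotonicity for \emph{all} $\psi_i$ (in particular near $\psi_i=0$, where that factor is close to $1$) requires $G(1-\rho)\ge(1+\tau_{\max})\delta$ outright.

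This is where the appeal to the small-step-size assumption fails. With $1-\rho=\mu\eta$ and $\delta_j=\frac{\eta}{2}\xi_j^2+L\eta^2 g_j^*$ (Eq. \eqref{eq:Enablaf4}), the required condition reads $G\ge(1+\tau_{\max})\bigl(\frac{1}{2\mu}\sum_{v_j}\sigma_t^{i,j}\xi_j^2+\frac{L\eta}{\mu}\sum_{v_j}\sigma_t^{i,j}g_j^*\bigr)$: the $\eta$'s cancel in the leading term, leaving an $\eta$-independent heterogeneity floor proportional to $\xi_j^2/(2\mu)$. Shrinking $\eta$ therefore only suppresses the $g_j^*$ part and cannot make the positive (noise-injection) contribution negligible; under non-IID data the domination you invoke is an \emph{additional} hypothesis — initial gap large compared with $(1+\tau_{\max})$ times the gradient-divergence floor — not a consequence of $\eta<\frac{\mu}{2L^2}$. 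To close the argument you would need to (i) state that hypothesis explicitly (it holds trivially for small $\eta$ when $\xi_j=0$, i.e., IID data), or (ii) weaken the conclusion to what your derivative computation gives unconditionally, namely that $Bound_T$ decreases in $\psi_i$ once $\rho^{\psi_iT\tau_{\max}/(1+\tau_{\max})}\le G(1-\rho)/((1+\tau_{\max})\delta)$, i.e., for all sufficiently large $\psi_i$, or (iii) restrict the claim to the transient term — which is, implicitly, all the paper establishes.
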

\begin{corollary}\label{cor:noniid}
According to Eqs. \eqref{eq:Enablaf4} and \eqref{eq:decasydelta}, $\mathbf{\Delta}_t$ is partially influenced by the data distribution. As the degree of data non-IID among workers increases, the value of $\xi_i$ for each worker $v_i$ rises, leading to a high convergence bound $Bound_T$.
\end{corollary}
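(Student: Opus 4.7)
The plan is to trace the dependence of $Bound_T$ on the non-IID degree through the chain $\text{(non-IID)} \to \xi_i \to \delta_i \to \mathbf{Z}_t \to \mathbf{\Delta}_t \to Bound_T$. First I would note that the initial term $\sum_{v_i\in\mathcal{V}}\alpha_i\rho^{\psi_iT/(1+\tau_{\max})}(F(\mathbf{w}_0)-F^*)$ depends only on $\rho$, the activating frequencies, and the staleness bound, and is therefore insensitive to the data distribution. All of the non-IID dependence is thus concentrated in the residual term $\mathbf{A}\sum_{t=0}^T\mathbf{\Delta}_t$, which is the object to bound below as the non-IID degree grows.

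Next I would invoke Definition \ref{def:gradientdivergence}: a larger discrepancy between the local distribution $\mathcal{D}_i$ and the global pool enlarges $\|\nabla F(\mathbf{w})-\nabla F_i(\mathbf{w})\|$ uniformly in $\mathbf{w}$, forcing the tight upper bound $\xi_i$ upward. Plugging into the expression $\delta_i=\frac{\eta}{2}\xi_i^2+L\eta^2 g_i^*$ derived in the proof of Lemma \ref{lem:basegap} shows that $\delta_i$ is strictly increasing in $\xi_i$ since $\xi_i\ge 0$ and $\eta>0$. Because the entries of $\mathbf{Z}_t$ are nonnegative weighted sums $\sum_{v_j\in\mathcal{N}_t^i}\sigma_t^{i,j}\delta_j$, increasing any $\xi_j$ pushes every component of every $\mathbf{Z}_t$ upward.

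The final step is to propagate monotonicity through the recursion $\mathbf{\Delta}_t=\mathbf{W}_t\sum_{r=0}^{t-1}\mathbf{\Delta}_r+\mathbf{Z}_t$. Since $\mathbf{W}_t$ is diagonal with entries in $\{\rho,1\}\subset[0,\infty)$ and $\mathbf{\Delta}_0=\mathbf{0}$, a straightforward induction on $t$ shows that a component-wise increase in each $\mathbf{Z}_r$ yields a component-wise increase in $\mathbf{\Delta}_t$: the base case is trivial, and the step case preserves nonnegativity and monotonicity because the linear map $\mathbf{X}\mapsto\mathbf{W}_t\mathbf{X}$ has nonnegative coefficients. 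Since $\mathbf{A}=[\alpha_1,\ldots,\alpha_N]$ has strictly positive entries, the scalar $\mathbf{A}\sum_{t=0}^T\mathbf{\Delta}_t$ is strictly increasing in each $\xi_i$, and hence so is $Bound_T$.

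The main obstacle is conceptual rather than technical: "degree of non-IID" is not itself a scalar quantity, so the statement must be interpreted pointwise — given two distribution configurations where the second induces uniformly larger $\xi_i$'s (which is what a more skewed split produces in Definition \ref{def:gradientdivergence}), the associated $Bound_T$ is uniformly larger. Once that comparison is set up, the argument collapses to the monotonicity chain above, and I do not anticipate any genuine technical difficulty beyond carefully stating the recursion-based induction.
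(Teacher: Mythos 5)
Your proposal is correct and follows essentially the same route the paper intends: the paper offers no formal proof of Corollary~\ref{cor:noniid}, justifying it only by the observation that $\delta_i$ depends on $\xi_i$ via Lemma~\ref{lem:basegap} (note you correctly used the $\frac{\eta}{2}\xi_i^2$ form from the proof, where the lemma statement has a typo) and that $\mathbf{Z}_t$ feeds into the recursion \eqref{eq:decasydelta}, which is exactly the monotonicity chain $\xi_i \to \delta_i \to \mathbf{Z}_t \to \mathbf{\Delta}_t \to Bound_T$ you traced. Your induction through the recursion and your pointwise interpretation of ``non-IID degree'' simply make rigorous what the paper leaves informal, with only the negligible caveats that strict (rather than weak) monotonicity in $\xi_j$ requires $v_j$ to appear in some activated aggregation, and that only the affected components of $\mathbf{Z}_t$ increase while the rest stay fixed.
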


Corollary \ref{cor:tau} indicates that convergence performance can be improved by controlling the maximum staleness during training, ensuring it does not exceed a certain upper bound, \ie, $\tau_{\max}\le\tau_{\rm bound}$. Corollary \ref{cor:psi} suggests that increasing the number of activated workers $|\mathcal{A}_t|$ in each round can also improve convergence performance. However, this does not necessarily lead to a short convergence time, since each round's completion time depends on the model aggregations and training of all activated workers. Therefore, activating more workers may prolong the duration of each round. Corollaries \ref{cor:tau} and \ref{cor:psi} highlight the importance of carefully determining the worker set $|\mathcal{A}_t|$ to control staleness and accelerate convergence.
From Corollary \ref{cor:noniid}, when the data distribution among workers is IID, \ie, $\xi_i=0$ for $\forall v_i\in \mathcal{V}$, the convergence performance of DySTop can be further improved. In practice, however, data is often non-IID for corollary workers. To address this, one can design the training topology and select neighbors for each activated worker such that the union of a worker and its neighbors' datasets approximates an IID distribution, thereby enhancing training performance \cite{bellet2022d}.

Corollaries \ref{cor:tau}-\ref{cor:noniid} theoretically reveal the significance of both staleness control and topology construction in achieving better training performance, which will be elaborated in Section \ref{sec:algorithm}.

\section{Algorithm Description}\label{sec:algorithm}
\subsection{Algorithm Overview}

Tackling the problem $\textbf{P1}$ is a complex endeavor due to three main challenges below:
\begin{itemize}
    \item First, $\textbf{P1}$ is a round-coupled problem. The staleness constraints of workers are coupled with worker activation strategies across rounds.
    \item Second, based on Corollaries \ref{cor:tau} and \ref{cor:psi}, there is a trade-off between staleness control and convergence rate with different activated workers, but it's hard to quantify the correlation between them.
    \item Third, based on Corollary \ref{cor:noniid}, a lower degree of data non-IID leads to faster convergence. However, communication constraint forces us to select only a subset of neighbors, and improper selection can conversely exacerbate the degree of data non-IID.
\end{itemize}

To overcome these challenges, we first employ Lyapunov optimization to decouple the challenging round-coupled optimization problem into deterministic per-round subproblems. This prevents cross-round decision coupling from compromising global optimality.
We further solve each per-round subproblem in two steps. First, we devise a worker activation algorithm (WAA) to determine the active set $\mathcal{A}_t$ in each round, where we utilize Lyapunov drift-plus-penalty function to quantify the trade-off between staleness control and convergence rate. After that, we design a phase-aware topology construction algorithm (PTCA) to dynamically adapt the topology $\mathcal{G}_t$ in each round. We set individual strategies for different training phases to reduce communication overhead and non-IID degree.

\begin{figure}
    \centering
    \includegraphics[width=0.98\linewidth]{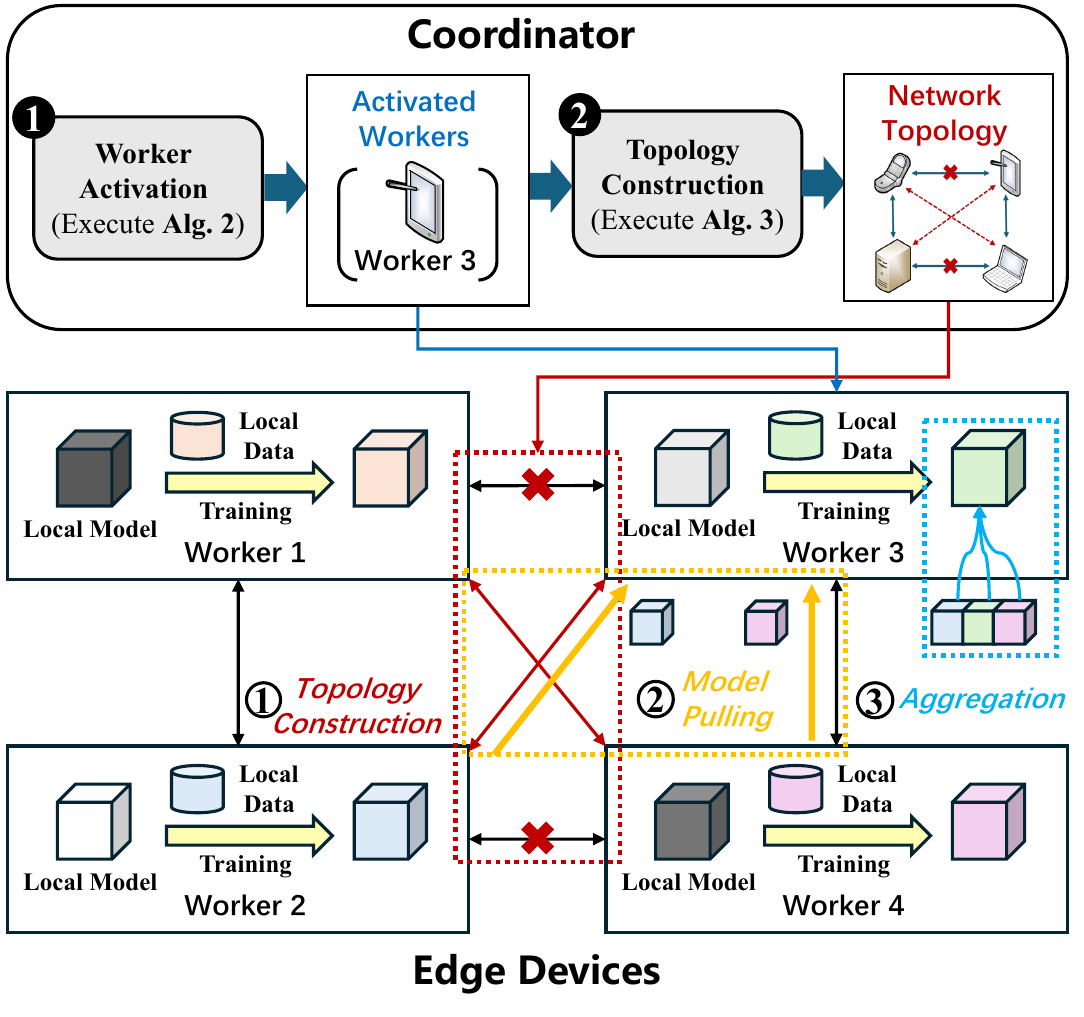}
    \caption{Workflow Illustration of DySTop}
    \label{fig:fram}
\end{figure}

Fig. \ref{fig:fram} shows a four-worker example to demonstrate the workflow of DySTop. There is a network topology with 4 workers ($\mathcal{V}=\{v_1, v_2, v_3, v_4\}$). At a certain round $t$, the coordinator first collects the information of workers (\textit{e.g.}, current model staleness and estimated training time of the workers) and performs WAA to determine the active set $\mathcal{A}_t=\{v_3\}$. Next, the coordinator performs PTCA to construct the topology and notifies workers to adapt their neighbors. In our example, the directed communications of $v_1$-$v_3$ and $v_2$-$v_4$ are removed. Instead, directed communications $v_2$-$v_3$ and $v_1$-$v_4$ are added. Then $v_3$ pulls models from neighbors $v_1$ and $v_4$ to perform aggregation with its own local model.
This workflow is executed in all rounds to solve the decoupled subproblems.

\subsection{Lyapunov Optimization Based Problem Transformation}\label{subsec:Lyapunov}
The core idea of Lyapunov optimization is transforming the staleness constraint into a queue stability problem. Following this idea,
For each worker $v_i$, we create a staleness queue $q_t^i$ to track the difference between its cumulative staleness up to round $t$ and the given upper bound of staleness $\tau_{\rm bound}$. $q_t^i$ is updated by the following recurrence relation:
\begin{align}\label{eq:queue}
q_{t+1}^i=\max\{q_t^i+\tau_t^i-\tau_{\rm bound},0\}\mbox{.}
\end{align}
where $q_0^i$ is initialized as 0.
By employing Lyapunov optimization theory, we can obtain Theorem \ref{thm:dpp} about $\textbf{P1}$ as follows,
\begin{theorem}\label{thm:dpp}
Approximating \textnormal{\textbf{P1}} involves solving subproblem \textnormal{\textbf{P2}} at each round $t,\forall t\in\mathcal{R}$
\begin{align}\label{eq:lya}
\textnormal{\textbf{(P2)}}:&\min_{\boldsymbol{a}_t, \boldsymbol{c}_t}\sum_{v_i\in\mathcal{V}}q_t^i(\tau_t^i-\tau_{\rm bound})+V\cdot H_t  \\\notag
{\st} \quad &\eqref{cons:loss}-\eqref{cons:topo}
\mbox{.}
\end{align}
where Eq. \eqref{eq:lya} is a drift-plus-penalty function, and $V$ serves as a trade-off factor, balancing the staleness queue stabilization against the minimization of training duration.
The expected training duration obtained by solving \textnormal{\textbf{P2}} has a minimum gap
\begin{equation}
    \sum_{t=1}^T \mathbb{E}\left[H_t\vert\Theta_t\right] \leq \frac{T\cdot\Gamma}{V} + H^*,
\end{equation}
where $H^*$ represents the optimal total training duration under the optimal strategies, and $\Gamma$ is a positive constant.
\end{theorem}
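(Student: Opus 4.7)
The plan is to prove Theorem~\ref{thm:dpp} by the standard drift-plus-penalty technique of Lyapunov optimization, adapted to the staleness queues $\{q_t^i\}$ defined in Eq.~\eqref{eq:queue}. First I would introduce the quadratic Lyapunov function
\begin{equation}\notag
L(\Theta_t)\triangleq\tfrac{1}{2}\sum_{v_i\in\mathcal{V}}(q_t^i)^2,
\end{equation}
where $\Theta_t=\{q_t^i\}_{v_i\in\mathcal{V}}$ is the queue state, and define the one-step conditional drift $\Delta(\Theta_t)\triangleq\mathbb{E}[L(\Theta_{t+1})-L(\Theta_t)\mid\Theta_t]$. Squaring the recursion in Eq.~\eqref{eq:queue} and using $(\max\{x,0\})^2\le x^2$ gives, for each worker, $(q_{t+1}^i)^2\le(q_t^i)^2+(\tau_t^i-\tau_{\rm bound})^2+2q_t^i(\tau_t^i-\tau_{\rm bound})$, which sums to the drift bound
\begin{equation}\notag
\Delta(\Theta_t)\le\Gamma+\sum_{v_i\in\mathcal{V}}q_t^i\,\mathbb{E}[\tau_t^i-\tau_{\rm bound}\mid\Theta_t],
\end{equation}
with $\Gamma\triangleq\tfrac{1}{2}\sum_{v_i\in\mathcal{V}}\mathbb{E}[(\tau_t^i-\tau_{\rm bound})^2]$ being a finite positive constant because the staleness $\tau_t^i$ is upper-bounded (by $\tau_{\rm bound}$ on average under any stabilizing policy, and pointwise after the active-set rule forces a reset).

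Next I would add the penalty $V\cdot\mathbb{E}[H_t\mid\Theta_t]$ to both sides, obtaining the drift-plus-penalty inequality
\begin{equation}\notag
\Delta(\Theta_t)+V\cdot\mathbb{E}[H_t\mid\Theta_t]\le\Gamma+\mathbb{E}\Bigl[\sum_{v_i\in\mathcal{V}}q_t^i(\tau_t^i-\tau_{\rm bound})+V\cdot H_t\ \Big|\ \Theta_t\Bigr].
\end{equation}
The right-hand side is precisely the objective of subproblem~\textbf{(P2)}, which justifies the claim that approximating \textbf{(P1)} reduces to minimizing \textbf{(P2)} at every round: the greedy minimizer $(\boldsymbol a_t,\boldsymbol c_t)$ of \textbf{(P2)} makes the above upper bound as small as possible given the current queue state, while still respecting the per-round constraints~\eqref{cons:bw}--\eqref{cons:topo}.

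To obtain the performance gap, I would invoke the usual $\omega$-only / stationary randomized policy argument: there exists a policy that satisfies the staleness constraint~\eqref{cons:stale} in expectation and attains the optimal per-round training duration $H^*/T$; plugging this policy into the right-hand side of the drift-plus-penalty bound and discarding the non-positive queue term yields
\begin{equation}\notag
\Delta(\Theta_t)+V\cdot\mathbb{E}[H_t\mid\Theta_t]\le\Gamma+V\cdot \frac{H^*}{T}.
\end{equation}
Summing this inequality for $t=1,\dots,T$, telescoping the drift (so that $\mathbb{E}[L(\Theta_{T+1})]-\mathbb{E}[L(\Theta_1)]\ge -\mathbb{E}[L(\Theta_1)]$ can be dropped on the lower side), dividing by $V$, and rearranging gives
\begin{equation}\notag
\sum_{t=1}^T\mathbb{E}[H_t\mid\Theta_t]\le\frac{T\cdot\Gamma}{V}+H^*,
\end{equation}
which is the stated bound.

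The main obstacle I anticipate is the precise justification of the constant $\Gamma$ and of the existence of the stationary randomized reference policy: one must argue that $\tau_t^i$ stays uniformly bounded (so that $(\tau_t^i-\tau_{\rm bound})^2$ has a finite second moment independent of $t$) and that a policy meeting the time-averaged staleness constraint while achieving $H^*/T$ really exists within the feasible set defined by \eqref{cons:loss}--\eqref{cons:topo}; the convergence constraint~\eqref{cons:loss} is the subtle one, since it couples rounds through the loss, but Theorem~\ref{thm:convergence} together with Corollaries~\ref{cor:tau}--\ref{cor:noniid} supplies the needed a~priori bound by ensuring that any policy respecting $\tau_{\max}\le\tau_{\rm bound}$ drives the loss below $\epsilon$ in finite rounds. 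Everything else is the routine squaring / telescoping / expectation-taking of the drift-plus-penalty framework.
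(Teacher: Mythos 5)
Your proposal is correct and follows essentially the same drift-plus-penalty argument as the paper: the same quadratic Lyapunov function $L(\Theta_t)=\frac{1}{2}\sum_{v_i\in\mathcal{V}}(q_t^i)^2$, the same squared-queue drift bound yielding the constant $\Gamma$, the same identification of the subproblem \textbf{(P2)} objective with the right-hand side of the drift-plus-penalty inequality, and the same telescoping-and-divide-by-$V$ step to obtain $\sum_{t=1}^T\mathbb{E}[H_t\mid\Theta_t]\le \frac{T\cdot\Gamma}{V}+H^*$. The only cosmetic difference is your comparator: you invoke the textbook stationary randomized ($\omega$-only) policy achieving $H^*/T$ per round, whereas the paper directly compares against the optimal per-round durations $H_t^*$ with $\sum_{t=1}^T H_t^*=H^*$ and drops the queue term using constraint~\eqref{cons:stale} (since $q_t^i\ge 0$ and $\tau_t^i\le\tau_{\rm bound}$ force $q_t^i(\tau_t^i-\tau_{\rm bound})\le 0$) --- both variants are standard and equivalent here, and your flagged concerns about bounding $\Gamma$ and the feasibility of the reference policy are glossed over in the paper's proof as well.
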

\begin{proof}
The Lyapunov function is defined as
\begin{align}\label{eq:lf}
    L(\Theta_t)=\frac{1}{2}\sum_{v_i\in\mathcal{V}}(q_t^i)^2\mbox{.}
\end{align}
where $\Theta_t=\{q_t^i\}_{v_i\in\mathcal{V}}$ records the staleness queues of all workers.
For stability analysis, we formulate the single-round Lyapunov drift function as
\begin{align}\label{eq:drift1round}
    \Delta \Theta_t\triangleq \mathbb{E}[L(\Theta_{t+1})-L(\Theta_t)|\Theta_t]\mbox{.}
\end{align}
Based on Eq. \eqref{eq:queue}, we derive that
\begin{align}\notag
    \left[q_{t+1}^i\right]^2\leq& \left[q_t^i + (\tau_t^i - \tau_{\rm bound})\right]^2 \\\notag
    =& \left[q_t^i\right]^2 + \left(\tau_t^i - \tau_{\rm bound}\right)^2 + 2q_i(t)\left(\tau_t^i - \tau_{\rm bound}\right) \\
    \leq& \left[q_t^i\right]^2 + \left[\tau_t^i\right]^2 + (\tau_{\rm bound})^2 + 2q_i(t)\left(\tau_t^i - \tau_{\rm bound}\right).
\end{align}
We define that
\begin{align}\notag
&\Delta \Theta_t\triangleq \mathbb{E}[L(\Theta_{t+1})-L(\Theta_t)|\Theta_t] \\\notag
=&\mathbb{E}\left[\sum_{v_i\in\mathcal{V}}\frac{(q_{t+1}^i)^2-(q_t^i)^2}{2}\Bigg\vert \Theta_t\right]\\\notag
\leq&\mathbb{E}\left[\sum_{v_i\in\mathcal{V}}\frac{(\tau_t^i)^2+\tau_{\rm bound}^2}{2}\Bigg\vert \Theta_t\right]+\mathbb{E}\left[\sum_{v_i\in\mathcal{V}}q_t^i(\tau_t^i-\tau_{\rm bound})\Bigg\vert \Theta_t\right]\\\label{eq3}
=&\Gamma+\mathbb{E}\left[\sum_{v_i\in\mathcal{V}}q_t^i(\tau_t^i-\tau_{\rm bound})\Bigg\vert \Theta_t\right]\mbox{.}
\end{align}
where $\Gamma$ is a positive constant with upper bound $N\cdot\tau_{\rm bound}^2$, which is calculated as follows:
\begin{align}\notag
    \mathbb{E}\left[\sum_{v_i\in\mathcal{V}}\frac{(\tau_t^i)^2+\tau_{\rm bound}^2}{2}\Bigg\vert \Theta(t)\right]
    \leq& \sum_{v_i\in\mathcal{V}}\frac{\tau_{\rm bound}^2+\tau_{\rm bound}^2}{2}\\
    =&N\cdot\tau_{\rm bound}^2\mbox{.}
\end{align}
Let's accumulate the sum of drift functions over $T$ rounds
\begin{equation}\label{eq:drift}
    L\left(\Theta_T\right)-L\left(\Theta_1\right)\leq T\cdot\Gamma+\sum_{t=1}^T\sum_{v_i\in\mathcal{V}}q_t^i(\tau_t^i-\tau_{\rm bound})\mbox{.}
\end{equation}
Due to the constraint \eqref{cons:stale}, we take the expectation operation on both sides of
\eqref{eq:drift} and obtain the following expression
\begin{align}
    \mathbb{E}\left[L(\Theta_T)-L(\Theta_1)\right]\leq T\cdot\Gamma-\beta\sum_{t=1}^T\sum_{v_i\in\mathcal{V}}\mathbb{E}\left[q_t^i\right]\mbox{.}
\end{align}
where $\beta$ is a positive constant in the neighborhood of 0. Since $L(\Theta_1)=0$ and $L(\Theta_T)\geq0$, we have
\begin{equation}
    \frac{1}{T}\sum_{t=1}^{T}\sum_{v_i\in\mathcal{V}}\mathbb{E}\left[q_t^i\right]\leq \frac{\Gamma}{\beta}\mbox{.}
\end{equation}
This inequality indicates that under the constraint \eqref{cons:stale}, queue stability can be guaranteed. Next, we assume that there is an optimal total training duration for the $\textbf{P1}$ $H^*=\sum_{t=1}^T H^{*}_t$. We introduce the $\textit{drift-plus penalty function}$ as
\begin{align}\notag
&\Delta \Theta_t+V\cdot \mathbb{E}[H_t|\Theta_t] \\\notag
\leq& \Gamma + \mathbb{E}\left[\sum_{v_i\in\mathcal{V}}q_t^i(\tau_t^i-\tau_{\rm bound})\Bigg\vert\Theta_t\right]+V\cdot\mathbb{E}\left[H_t\vert\Theta_t\right] \\\notag
\leq& \Gamma + \mathbb{E}\left[\sum_{v_i\in\mathcal{V}}q_t^i(\tau_t^i-\tau_{\rm bound})\Bigg\vert\Theta_t\right]+V\cdot H^{*}_t \mbox{.}
\end{align}
where $V$ is a positive hyperparameter to balance the training duration minimization.
The above inequality is summed on both sides for $T$ rounds, yielding the following results, respectively:
\begin{align}\notag
    &L(\Theta_T)-L(\Theta_1) + V\sum_{t=1}^T \mathbb{E}\left[H_t\vert\Theta_t\right] \\\notag
    \leq& T\cdot\Gamma+\sum_{t=1}^T\sum_{v_i\in\mathcal{V}}q_t^i(\tau_t^i-\tau_{\rm bound}) + V\sum_{t=1}^T H^*_t\mbox{.}
\end{align}
Due to the constraint \eqref{cons:stale}, it holds that $\sum_{t=1}^T\sum_{v_i\in\mathcal{V}}q_t^i(\tau_t^i-\tau_{\rm bound})<0$, together with $L(\Theta(T))\geq 0$ and $L(\Theta(1))=0$, we deduce that
\begin{equation}
    V\sum_{t=1}^T \mathbb{E}\left[H_t\vert\Theta_t\right] \leq T\cdot\Gamma + V\cdot H^*
\end{equation}
Divide both sides by $V$, then we can obtain the expected gap. Thus, Theorem \ref{thm:dpp} holds.
\end{proof}

The optimization objective of $\textbf{P2}$ is to independently determine the worker activation strategy $\boldsymbol{a}_t$ and the topology construction strategy $\boldsymbol{c}_t$ for each round, aiming to minimize the weighted sum of the workers' current staleness deviation and round duration. 

\subsection{Worker Activation Algorithm (WAA)}\label{subsec:workerselection}

To address $\textbf{P2}$, we first propose the WAA to determine the worker activation strategy $\boldsymbol{a}_t$ at each round, as described in Alg. \ref{alg:device}.
The drift-plus-penalty function Eq. \eqref{eq:lya} provides the quantification of the trade-off between staleness control and single-round duration.
In addition, Eq. \eqref{eq:duration} inspires us to prioritize workers with smaller $H_t^i$, which can help control staleness while reducing single-round duration $H_t$.
Consequently, the target of WAA is to minimize Eq. \eqref{eq:lya} by activating an appropriate number of workers with smaller $H_t^i$.
At the beginning of each round $t$, we initialize the active set $\mathcal{A}_t$ as empty, and set all $a_t^i\in \boldsymbol{a}_t$ to 0 (Line \ref{alg:line:init}). Then, we sort workers in $\mathcal{V}$ in the ascending order of the sum of training time and transmission time $H_t^i$, and the set of sorted workers is denoted as $\mathcal{V}_{\rm sort}$ (Line \ref{alg:line:sort}).
Next, we sequentially add workers from $\mathcal{V}_{\rm sort}$ to $\mathcal{A}_t$. For each progressively larger active set $\mathcal{A}_t$, up to the maximum worker count, we pre-update each worker's staleness and compute the corresponding value of Eq. \eqref{eq:lya}, denoted as $S_{\mathcal{A}_t}$ (Lines \ref{alg:line:addstart}-\ref{alg:line:est}). The minimum value $S_{\min}$ and the corresponding set $\mathcal{A}^*_t$ are recorded (Lines \ref{alg:line:recordstart}-\ref{alg:line:recordend}). Finally, we set each $a_t^i$ whose worker is in $\mathcal{A}^*_t$ to 1 (Lines \ref{alg:line:abegin}-\ref{alg:line:aend}) and return the worker activation strategy $\boldsymbol{a}_t$ (Line \ref{alg:line:return}).

\begin{algorithm}[t]
\renewcommand{\algorithmicrequire}{\textbf{Input:}}
\renewcommand{\algorithmicensure}{\textbf{Output:}}
\caption{Worker Activation Algorithm (WAA)}\label{alg:device}
\begin{algorithmic}[1]
\REQUIRE {Staleness degree $\tau_t^i, \forall v_i\in \mathcal{V}$, value of virtual queue $q_t^i, \forall v_i\in \mathcal{V}$, sum of training time and transmission time $H_t^i, \forall v_i\in \mathcal{V}$}
\ENSURE {The worker activation strategy $\boldsymbol{a}_t=\{a_t^i|v_i\in\mathcal{V}\}$ at round $t$}
\STATE {\textbf{Initialize:} $\mathcal{A}_t=\varnothing$, $a_t^i=0,\forall a_t^i\in\boldsymbol{a}_t$, $S_{\min}=+\infty$;}\label{alg:line:init}
\STATE {$\mathcal{V}_{\rm sort}\leftarrow$ sort $\mathcal{V}$ in the ascending order of $H_t^i$;}\label{alg:line:sort}
\FOR{$K=1$ to $N$}\label{alg:line:addstart}
\STATE {$\mathcal{A}_t=\mathcal{A}_t \cup \mathcal{V}_{\rm sort}\left[K\right]$;}\label{alg:line:addend}
\STATE {$S_{\mathcal{A}_t}\leftarrow$ calculate Eq. \eqref{eq:lya} for given $\mathcal{A}_t$;}\label{alg:line:est} 
\IF{$S_{\mathcal{A}_t} < S_{\rm min}$}\label{alg:line:recordstart}
\STATE {$S_{\rm min} = S_{\mathcal{A}_t}$;}
\STATE {$\mathcal{A}^*_t=\mathcal{A}_t$;}
\ENDIF\label{alg:line:recordend}
\ENDFOR
\FOR{\textbf{each} $v_i\in\mathcal{A}^*_t$}\label{alg:line:abegin}
\STATE {$a_t^i=1$;}
\ENDFOR\label{alg:line:aend}
\STATE {\textbf{Return:} $\boldsymbol{a}_t=\{a_t^i|v_i\in\mathcal{V}\}$.}\label{alg:line:return}
\end{algorithmic}
\end{algorithm}

\subsection{Phase-aware Topology Construction Algorithm (PTCA)}\label{subsec:dynamictopo}
We develop the PTCA to construct the network topology by selecting neighbors for each activated worker $v_i\in \mathcal{A}_t$.
We partition the overall training process into two phases. Then, for each phase, we independently design the strategy for selecting neighbors based on different objectives.

\subsubsection{Phase 1: Deal with Non-IID Data}
In early training phases, non-IID data significantly slows down model convergence \cite{wang2025fedsiam}.
Motivated by \cite{bellet2022d}, we pair workers with significantly different data distributions as neighbors. This can help their combined datasets better represent all classes, in turn mitigating model divergence (an example is shown in Fig. \ref{fig:Non-IID}).

Earth Mover's Distance (EMD) \cite{rubner2000earth} is a general measure  to quantify the difference of two datasets' distributions.
We denote the EMD between $v_i$ and $v_j$'s datasets $\mathcal{D}_i$ and $\mathcal{D}_j$ as
\begin{equation}\label{eq:EMD}
    \text{EMD}(\mathcal{D}_i, \mathcal{D}_j)=\sum_{c_k\in\mathcal{C}} \|\frac{D_i^k}{D_i} - \frac{D_j^k}{D_j}\|.
\end{equation}
where $c_k$ is the $k$-th class of the global dataset $\mathcal{D}$ and the set $\mathcal{C}$ consists of classes of data in $\mathcal{D}$. $D_i^k$ represents the size of the $k$-th class data in $v_i$'s dataset $\mathcal{D}_i$. In addition, given that worker-to-worker distance influences model transmission latency, we also take physical distance into account. Let $\text{Dist}(v_i, v_j)$ denote the physical distance between workers $v_i$ and $v_j$. We define the priority for $v_i$ to select $v_j$ as a neighbor as follows
\begin{equation}\label{eq:priority1}
    p_1(v_i, v_j) = \frac{\text{EMD}(\mathcal{D}_i, \mathcal{D}_j)}{\text{EMD}_{\max}} + \left(1-\frac{\text{Dist}(v_i, v_j)}{\text{Dist}_{\max}}\right)\mbox{.}
\end{equation}
where $\text{EMD}_{\max}$ and $\text{Dist}_{\max}$ indicate the maximum EMD and distance among workers.
\begin{figure}[h]
    \centering
    \includegraphics[width=0.7\linewidth]{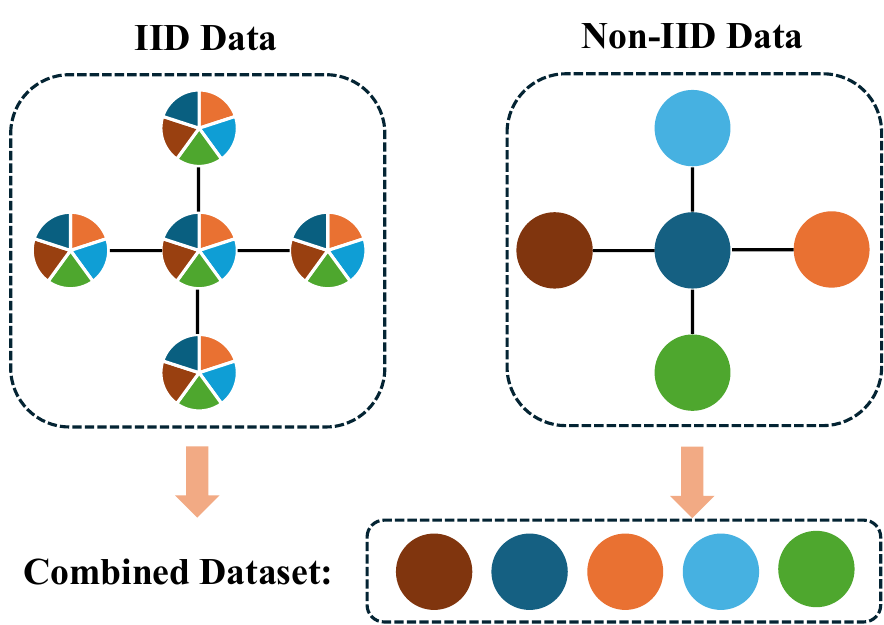}
    \caption{A five-worker example of IID and non-IID scenarios, where each color represents a different class of the dataset. In the IID scenario (left), all classes are in equal proportions on all workers, while in the non-IID scenario (right), each worker has only one class. However, both scenarios' combined datasets of all workers represent the same distribution.}
    \label{fig:Non-IID}
\end{figure}

\subsubsection{Phase 2: Diverse Neighbors and Staleness Control}

As the training progresses, the impact of staleness increases \cite{sun2024staleness,liao2023adaptive}. At the same time, diverse neighbor selections play a crucial role in enhancing accuracy \cite{de2023epidemic}.
Therefore, we design the following priority for $v_i$ to select $v_j$ as a neighbor
\begin{equation}\label{eq:priority2}
    p_2(v_i, v_j)=\left(1-\frac{\text{Pull}(v_i, v_j)}{t}\right)\frac{1}{1+\vert \tau_i-\tau_j\vert}\mbox{.}
\end{equation}
where $\text{Pull}(v_i, v_j)$ records the times of $v_i$ pulling models from $v_j$. The term $\left(1-\frac{\text{Pull}(v_i, v_j)}{t}\right)$ encourages the selection of workers with fewer pull requests, and the term $\frac{1}{1+\vert \tau_i-\tau_j\vert}$ ensures that the staleness gap between workers is not too large.

\begin{figure}[t]
    \centering
\subfigure{\label{fig:dnns_lat_wifi}\includegraphics[width=.49\linewidth]{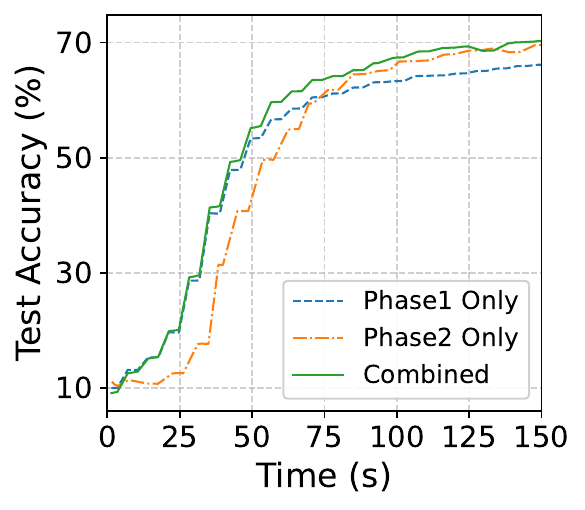}}
    \subfigure{\label{fig:dnns_lat_5g}\includegraphics[width=.49\linewidth]{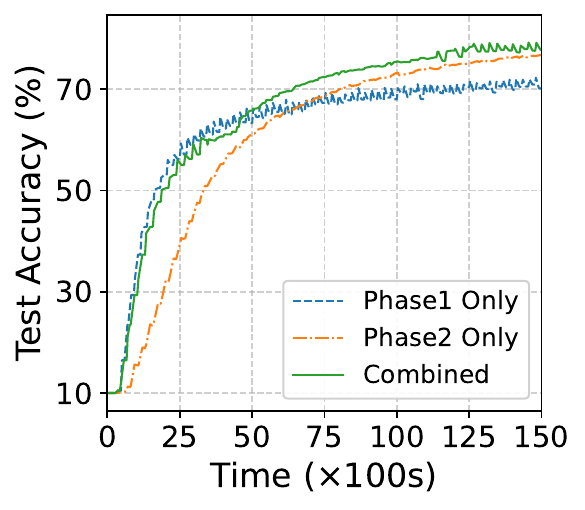}}\vspace{-0.1in}
    \small{\caption{Comparison of different priority-based topology construction strategies on non-IID datasets. \textit{Left}: FMNIST; \textit{Right}: CIFAR-10.}\label{fig:observation}}
    \vspace{-0.5cm}
\end{figure}

\subsubsection{Algorithm design for PTCA}

The goal of our proposed PTCA is to determine the topology construction strategy $\boldsymbol{c}_t$ for each round $t$, which is formally described in Alg. \ref{alg:topology}. Let $C_t^i$ denote the workers within $v_i$'s communication range include itself at round $t$.
To maximize bandwidth utilization subject to each worker's bandwidth budget, Alg. \ref{alg:topology} iteratively selects the highest-priority neighbor for each activated worker $v_i\in\mathcal{A}_t$. This process continues until the bandwidth of workers in $C_t^i$ is fully consumed.
Specifically, we first initialize each worker's in-neighbor and out-neighbor sets as empty sets, and set all $c_t^{i,j}\in\boldsymbol{c}_t$ to 0. $B_{\rm tmp}$ is maintained to track the cumulative bandwidth consumption of all workers (Line \ref{alg:ptca:init}). Then, we sort $C_t^i$ in descending order of priority for each activated worker, where the corresponding phase determines the priority (Lines \ref{alg:ptca:sortbegin}-\ref{alg:ptca:sortend}). Next, for each iteration (Lines \ref{alg:ptca:iterbegin}-\ref{alg:ptca:iterend}), each worker $v_i\in \mathcal{A}_t$ first confirms sufficient bandwidth resources for model pulling (Lines \ref{alg:ptca:checkbegin}-\ref{alg:ptca:checkend}), then selects the top-ranked worker from sorted $C_t^i$ with available bandwidth for model transmission, and updates neighbor sets and bandwidth consumption (Lines \ref{alg:ptca:selectbegin}-\ref{alg:ptca:selectend}). PTCA terminates iteration and returns $\boldsymbol{c}_t$ when the total bandwidth consumption remains unchanged between two consecutive iterations (Lines \ref{alg:ptca:termbegin}-\ref{alg:ptca:return}).

To verify the validation of our phase-aware strategy, we compare our PTCA with three settings: 1) Phase 1-Only (PTCA that only uses $p_1(v_i,v_j)$), 2) Phase 2-Only (PTCA that only uses $p_2(v_i,v_j)$), 3) Combined (\ie, Alg. \ref{alg:topology}).
We experiment with 100 workers to train a CNN model on FMNIST and ResNet-18 on CIFAR-10 with non-IID data.
The results of the experiment in Fig. \ref{fig:observation} show that
Phase 1-Only trains faster initially but converges to lower accuracy, while Phase 2-Only trains slower initially but achieves higher accuracy. The Combined setting achieves both faster convergence and higher ultimate accuracy compared to both the above two settings.

\begin{algorithm}[t]
\renewcommand{\algorithmicrequire}{\textbf{Input:}}
\renewcommand{\algorithmicensure}{\textbf{Output:}}
\caption{Phase-aware Topology Construction Algorithm (PTCA)}\label{alg:topology}
\begin{algorithmic}[1]
\REQUIRE {Active set $\mathcal{A}_t$ and lists of workers in $v_i$'s communication range $C_t^i, v_i\in \mathcal{V}$}
\ENSURE {The topology construction strategy $\boldsymbol{c}_t=\{c_t^{i,j}|v_i,v_j\in\mathcal{V}\}$ at round $t$}
\STATE {\textbf{Initialize:} $\mathcal{N}_t^i=\hat{\mathcal{N}}_t^i=\varnothing$, $B_t^i=0$, $\forall v_i \in \mathcal{V}$, $B_{\rm tmp}=0$, $c_t^{i,j}=0, \forall c_t^{i,j}\in\boldsymbol{c}_t$;}\label{alg:ptca:init}
\IF{$t \leq t_{\rm thre}$}\label{alg:ptca:sortbegin}
\STATE {Sort $C_t^i$ in the descending order of $p_1(v_i,v_j)$, $\forall v_i\in \mathcal{A}_t$;}
\ELSE
\STATE {Sort $C_t^i$ in the descending order of $p_2(v_i,v_j)$, $\forall v_i\in \mathcal{A}_t$;}
\ENDIF\label{alg:ptca:sortend}
\WHILE{true}\label{alg:ptca:iterbegin}
\FOR{\textbf{each} $v_i\in\mathcal{A}_t$}
\IF{$B_t^i+b>\hat{B}_t^i$}\label{alg:ptca:checkbegin}
\STATE {\textbf{continue};}
\ENDIF\label{alg:ptca:checkend}
\WHILE{$len(C_t^i)>0$}\label{alg:ptca:selectbegin}
\IF{$B_t^{C_t^i[0]}+b > \hat{B}_t^{C_t^i[0]}$}
\STATE {$C_t^i=C_t^i-C_t^i[0]$;}
\ELSE
\STATE {$c_t^{{i, C_t^i[0]}}=1$, $\mathcal{N}_t^i\cup\{C_t^i[0]\}$, $\hat{\mathcal{N}}_t^{C_t^i[0]}\cup\{v_i\}$;}
\STATE {$B_t^i=B_t^i+b$, $B_t^{C_t^i[0]}= B_t^{C_t^i[0]}+b$;}
\STATE {$C_t^i=C_t^i-C_t^i[0]$;}
\STATE {\textbf{break};}
\ENDIF
\ENDWHILE\label{alg:ptca:selectend}
\ENDFOR
\IF{$B_{\rm tmp}-\sum_{v_i\in\mathcal{V}}B_t^{i}=0$}\label{alg:ptca:termbegin}
\STATE {\textbf{break};}
\ELSE
\STATE {$B_{\rm tmp}=\sum_{v_i\in\mathcal{V}}B_t^{i}$}
\ENDIF\label{alg:ptca:termend}
\ENDWHILE\label{alg:ptca:iterend}
\STATE {\textbf{Return:} $\boldsymbol{c}_t=\{c_t^i|v_i,v_j\in\mathcal{V}\}$.}\label{alg:ptca:return}

\end{algorithmic}
\end{algorithm}

\section{Simulation Experiment}\label{sec:evaluation}

In this section, we simulate a large-scale DFL in mobile edge network to verify the effectiveness of our proposed mechanism and algorithms.

\subsection{System Setup}\label{subsec:setup}

\begin{figure*}[t]
\begin{minipage}[t]{0.49\linewidth}
\includegraphics[width=0.49\linewidth,height=3.6cm]{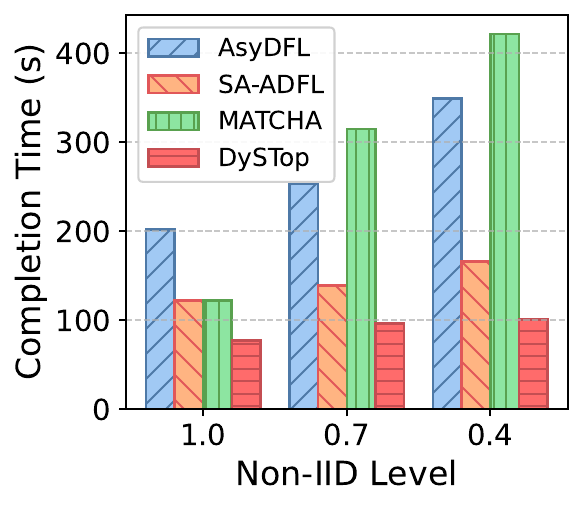}
\includegraphics[width=0.49\linewidth,height=3.6cm]{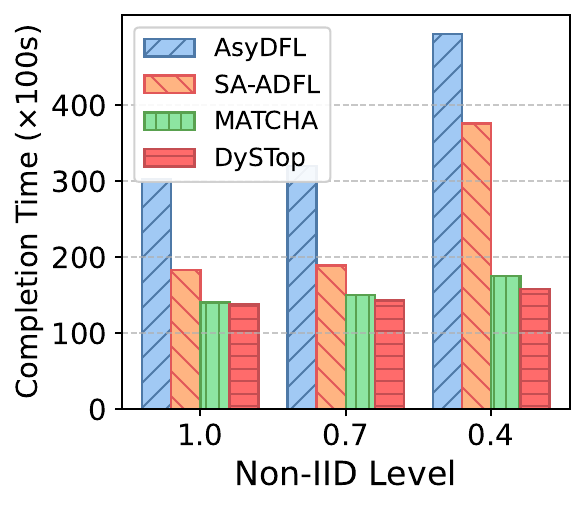}
\small{\caption{Completion time varies with different non-IID levels on the two datasets. \textit{Left}: FMNIST; \textit{Right}: CIFAR-10.}\label{fig:comp_time}}
\vspace{-5mm}
\end{minipage}%
\hspace{2mm}
\begin{minipage}[t]{0.49\linewidth}
\includegraphics[width=0.49\linewidth,height=3.6cm]{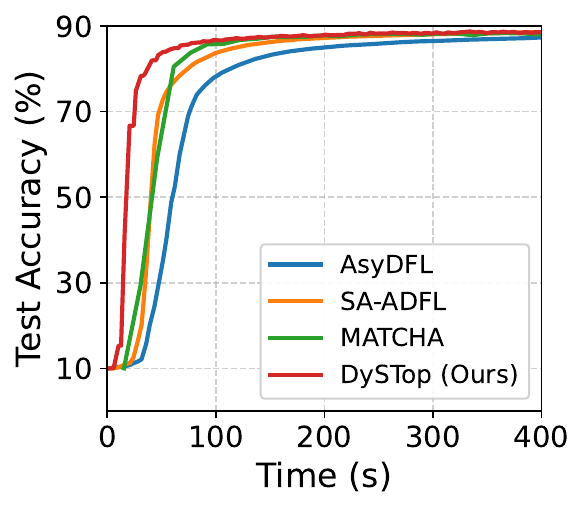}
\includegraphics[width=0.49\linewidth,height=3.6cm]{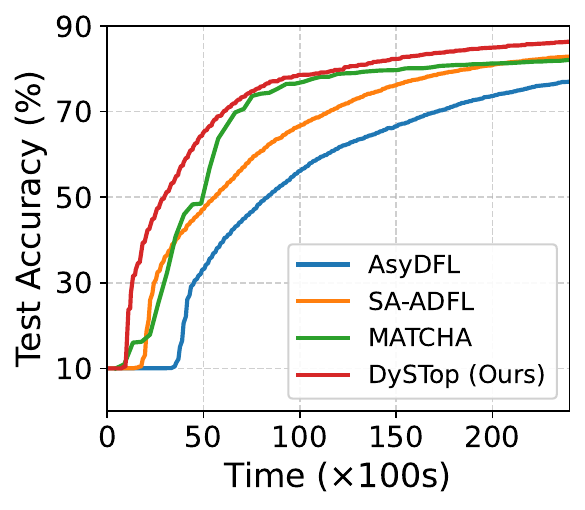}
\small{\caption{Test Accuracy vs. Time on the two datasets ($\phi = 1.0$). \textit{Left}: FMNIST; \textit{Right}: CIFAR-10.}\label{fig:acc_time_1.0}}
\vspace{-5mm}
\end{minipage}%
\end{figure*}

\begin{figure*}[t]
\begin{minipage}[t]{0.49\linewidth}
\includegraphics[width=0.49\linewidth,height=3.6cm]{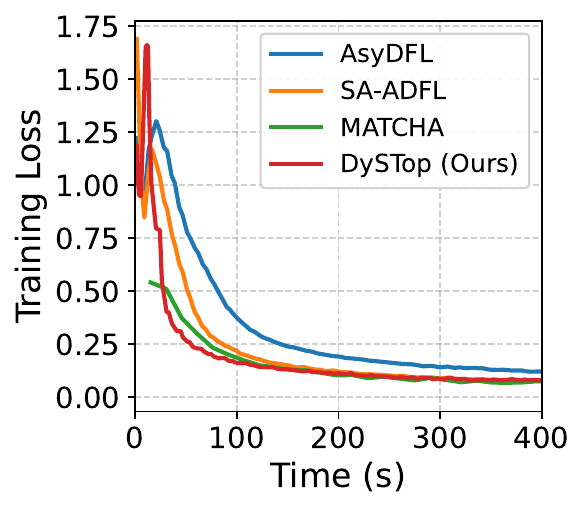}
\includegraphics[width=0.49\linewidth,height=3.6cm]{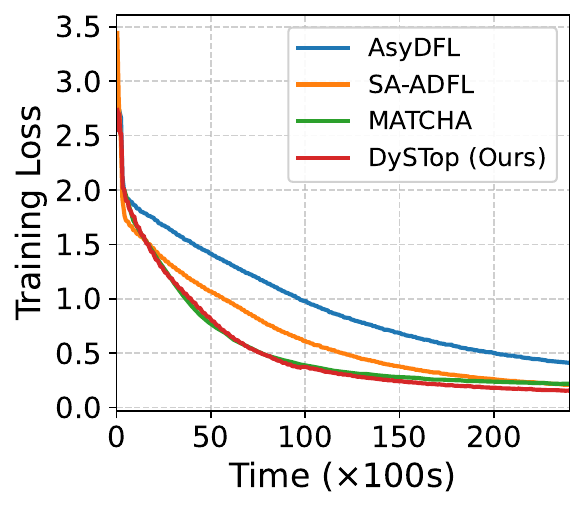}
\small{\caption{Training Loss vs. Time on the two datasets ($\phi = 1.0$). \textit{Left}: FMNIST; \textit{Right}: CIFAR-10.}\label{fig:loss_time_1.0}}
\vspace{-5mm}
\end{minipage}%
\hspace{2mm}
\begin{minipage}[t]{0.49\linewidth}
\includegraphics[width=0.49\linewidth,height=3.6cm]{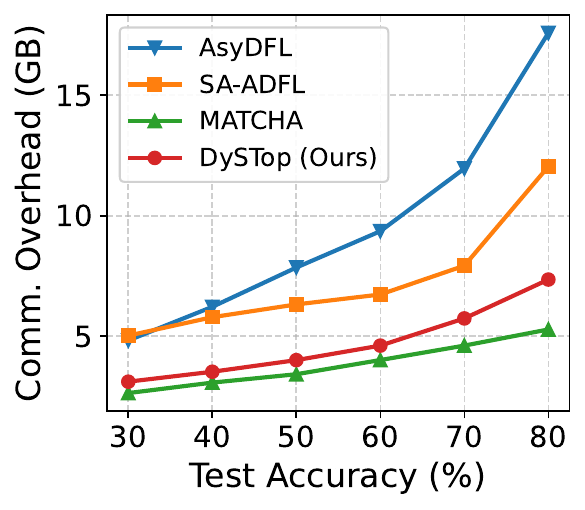}
\includegraphics[width=0.49\linewidth,height=3.6cm]{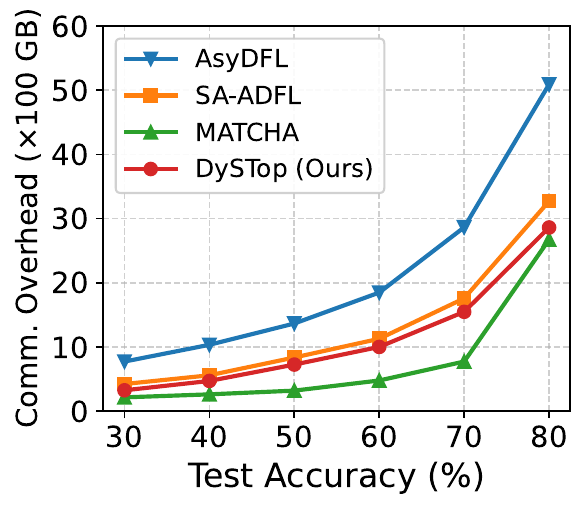}
\small{\caption{Communication Overhead vs. Test Accuracy on the two datasets ($\phi = 1.0$). \textit{Left}: FMNIST; \textit{Right}: CIFAR-10.}\label{fig:comm_acc_1.0}}
\vspace{-5mm}
\end{minipage}%
\end{figure*}
\subsubsection{Environment}\label{subsubsec:testbed}

We perform our simulations on a workstation with one Intel(R) Xeon(R) CPU and 4 NVIDIA GeForce RTX 3090Ti GPUs with 24GB of video memory. The system environment is CUDA v12.1, and cuDNN v8.9.2,
We use PyTorch to simulate a large-scale federated learning system with 100 heterogeneous workers distributed randomly across a 100m$\times$100m region. Specifically, the model training time is calculated by $H_i = \zeta_i\frac{D_i}{|\xi_i|}$, where $\zeta_i$ is the training time of one mini-batch on worker $v_i$. We measure the actual training time for one batch 100 times on our experimental equipment. We then multiply this by a coefficient drawn from a normal distribution to simulate the varying computational capabilities of heterogeneous devices. For model transmission, we use Shannon Formula to simulate the transmission rate $r_t^{i,j} = b \cdot \log_2\left(1+\frac{p_j\cdot g_t^{i,j}}{\gamma^2}\right)$. Referring to \cite{wang2023Decentralized,wang2022asyfed}, the channel gain $g_t^{i, j}$ is exponentially distributed with mean $G_0\cdot \mbox{Dist}(v_i, v_j)^{-4}$, where $G_0 = -43 \text{dB}$ is the path-loss constant at a reference distance of 1 m. The transmit power $p_j$ is set to range between 10dBm and 20dBm, then we multiply it by a fluctuation sampled from a normal distribution for each worker's transmit power. The noise power is set as $\gamma=10^{-13}$ W and the bandwidth is set as $b=1$MHz.

\subsubsection{Datasets and Models}\label{subsubsec:modelsanddatasets}
We select two classical benchmark datasets, FMNIST \cite{lecun1998gradient} and CIFAR-10 \cite{krizhevsky2009learning}, respectively. FMNIST consists of 60,000 grayscale images of fashion items for training and 10,000 for testing from 10 classes, while CIFAR-10 comprises 50,000 color images for training and 10,000 for testing depicting 10 classes, like animals and vehicles.
We also choose two classical models, CNN \cite{shalev2014understanding} and ResNet-18 \cite{he2016deep}, and customize them to adapt to the corresponding datasets.
The CNN is composed of two 5$\times$5 convolution layers, two 2$\times$2 max pooling layers, two fully-connected layers, and one softmax layer for output. The ResNet-18 is a deeper model, which is composed of 17 residual blocks and one fully-connected layer. We use the Dirichlet distribution\cite{lin2020ense} to generate the different levels $\phi=1.0, 0.7, 0.4$ of non-IID datasets. A smaller value of $\phi$ represents a more uneven distribution of datasets among workers. Specially, when $\phi=1.0$, data is IID among workers.

\subsubsection{Benchmarks}\label{subsubsec:metric}

\begin{figure*}[t]
\begin{minipage}[t]{0.49\linewidth}
\includegraphics[width=0.49\linewidth,height=3.6cm]{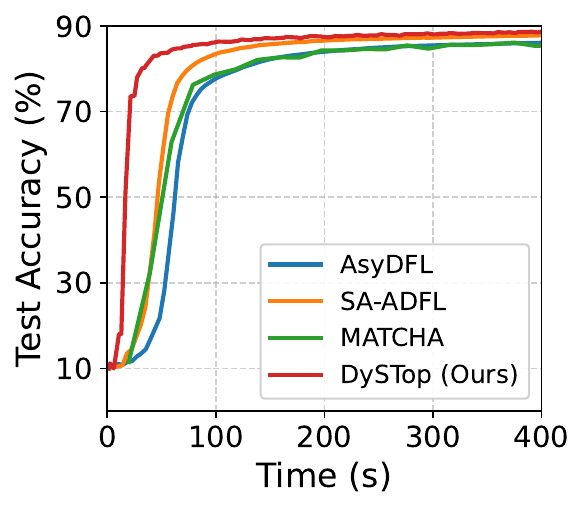}
\includegraphics[width=0.49\linewidth,height=3.6cm]{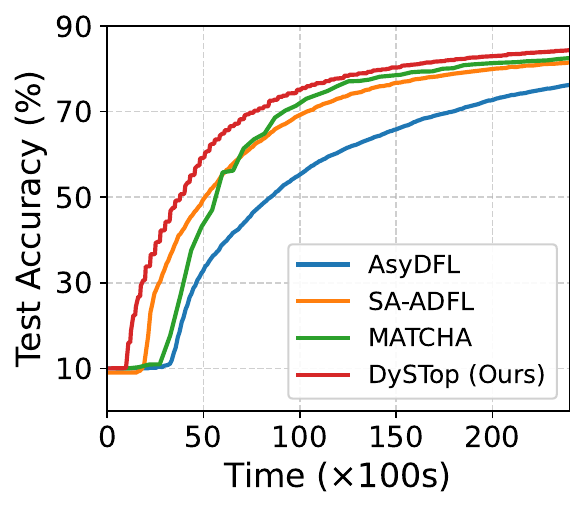}
\small{\caption{Test Accuracy vs. Time on the two datasets ($\phi = 0.7$). \textit{Left}: FMNIST; \textit{Right}: CIFAR-10.}\label{fig:acc_time_0.7}}
\vspace{-5mm}
\end{minipage}%
\hspace{2mm}
\begin{minipage}[t]{0.49\linewidth}
\includegraphics[width=0.49\linewidth,height=3.6cm]{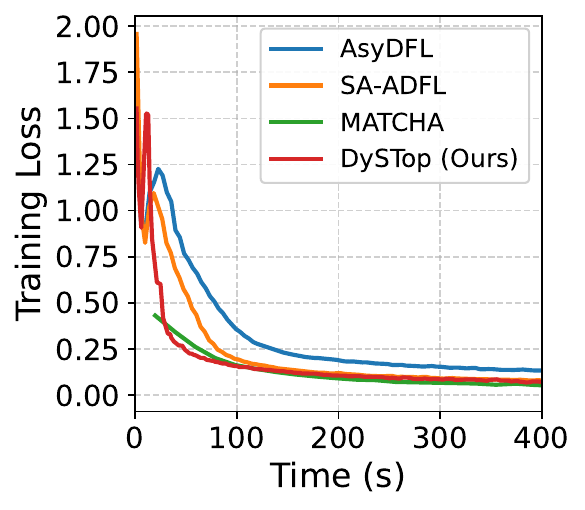}
\includegraphics[width=0.49\linewidth,height=3.6cm]{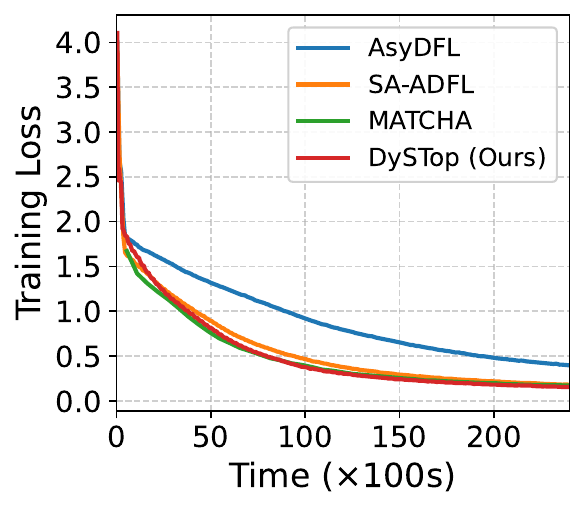}
\small{\caption{Training Loss vs. Time on the two datasets ($\phi = 0.7$). \textit{Left}: FMNIST; \textit{Right}: CIFAR-10.}\label{fig:loss_time_0.7}}
\vspace{-5mm}
\end{minipage}%
\end{figure*}

We compare our DySTop mechanism with the following DFL benchmarks.
\begin{itemize}
    \item MATCHA \cite{wang2019matcha}: A synchronous decentralized federated learning mechanism that divides the base topology into disjoint subgraphs and parallelizes inter-subgraphs communication. Considering the synchronous mechanism and sparse topology, we regard MATCHA as a lower bound of communication overhead.
    \item AsyDFL \cite{liao2024asynchronous}: An asynchronous decentralized federated learning mechanism that incorporates neighbor selection and model push to handle non-IID data and edge heterogeneity. However, it lacks staleness control techniques.
    \item SA-ADFL \cite{ma2024dynamic}: An asynchronous decentralized federated learning mechanism that dynamically controls staleness, where each worker pushes its model to all neighbors.
\end{itemize}

\subsubsection{Performance Metrics}

We employ four key metrics to measure the performance of our proposed algorithms and the benchmark. 1) \textit{Test Accuracy} is calculated as the ratio of correctly predicted data points by the model to the total number of data points. We use the average test accuracy across all workers' local models as our evaluation metric. 2) \textit{Training Loss} reflects the model training process and indicates whether convergence has been achieved. 3) \textit{Communication Overhead} is calculated as the total bandwidth consumed by all workers to reach a certain accuracy. 4) \textit{Completion time} is adopted to measure the time when achieving convergence.

\subsection{Evaluation Results}\label{subsec:results}

\subsubsection{Performance Comparison with Benchmarks}

\begin{figure*}[t]
\begin{minipage}[t]{0.49\linewidth}
\includegraphics[width=0.49\linewidth,height=3.6cm]{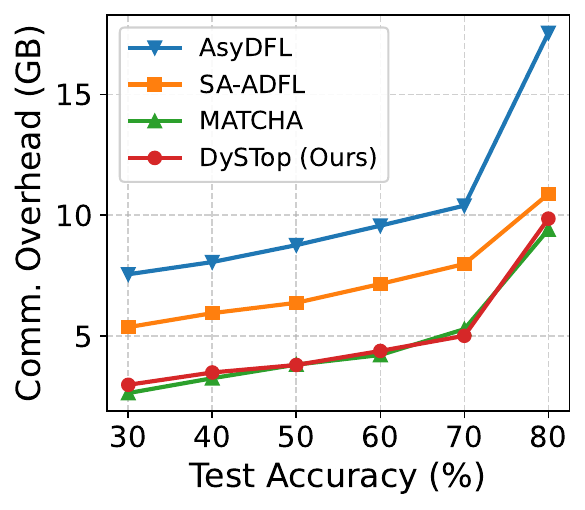}
\includegraphics[width=0.49\linewidth,height=3.6cm]{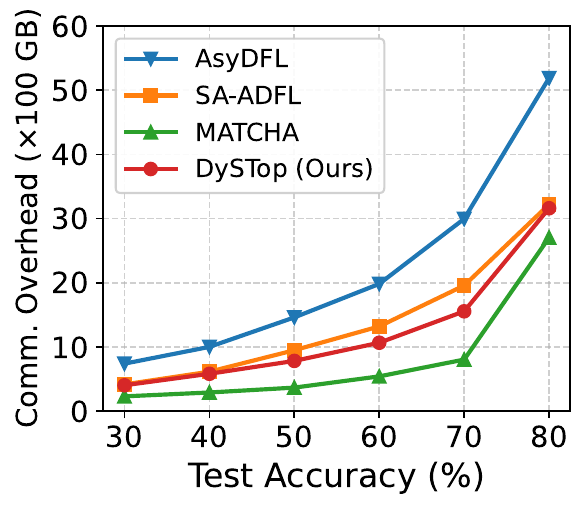}
\small{\caption{Communication Overhead vs. Test Accuracy on the two datasets ($\phi = 0.7$). \textit{Left}: FMNIST; \textit{Right}: CIFAR-10.}\label{fig:comm_acc_0.7}}
\vspace{-5mm}
\end{minipage}%
\hspace{2mm}
\begin{minipage}[t]{0.49\linewidth}
\includegraphics[width=0.49\linewidth,height=3.6cm]{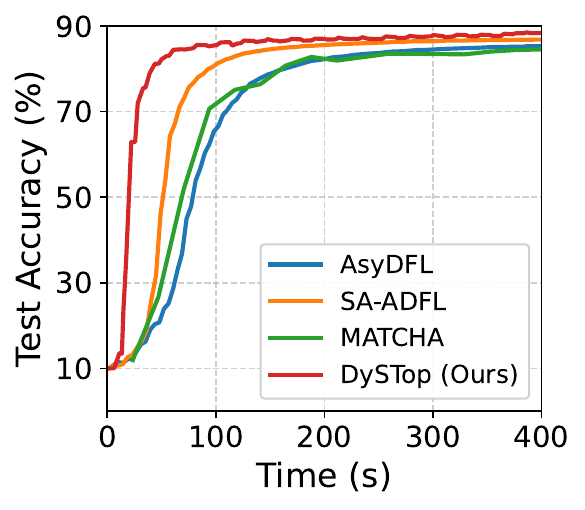}
\includegraphics[width=0.49\linewidth,height=3.6cm]{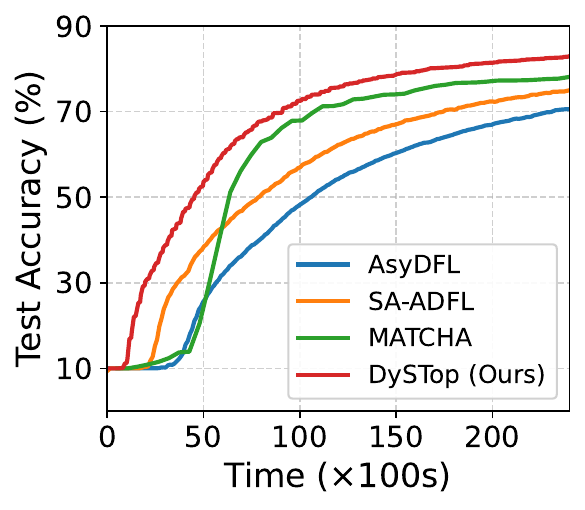}
\small{\caption{Test Accuracy vs. Time on the two datasets ($\phi = 0.4$). \textit{Left}: FMNIST; \textit{Right}: CIFAR-10.}\label{fig:acc_time_0.4}}
\vspace{-5mm}
\end{minipage}%
\end{figure*}

\begin{figure*}[t]
\begin{minipage}[t]{0.49\linewidth}
\includegraphics[width=0.49\linewidth,height=3.6cm]{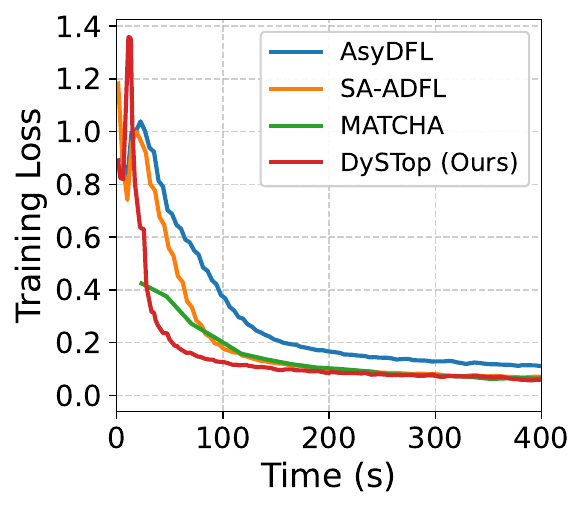}
\includegraphics[width=0.49\linewidth,height=3.6cm]{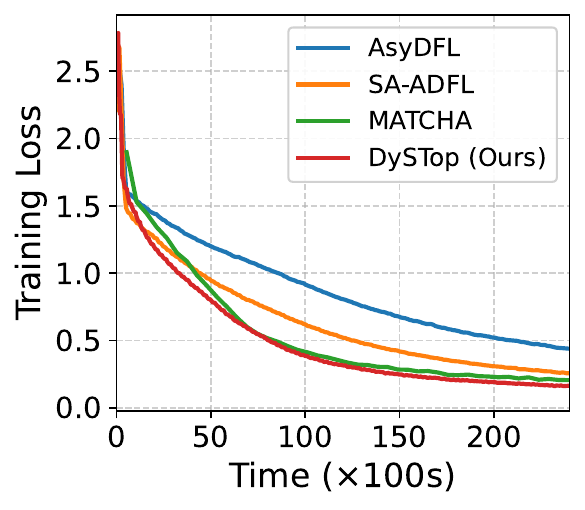}
\small{\caption{Training Loss vs. Time on the two datasets ($\phi = 0.4$). \textit{Left}: FMNIST; \textit{Right}: CIFAR-10.}\label{fig:loss_time_0.4}}
\vspace{-5mm}
\end{minipage}%
\hspace{2mm}
\begin{minipage}[t]{0.49\linewidth}
\includegraphics[width=0.49\linewidth,height=3.6cm]{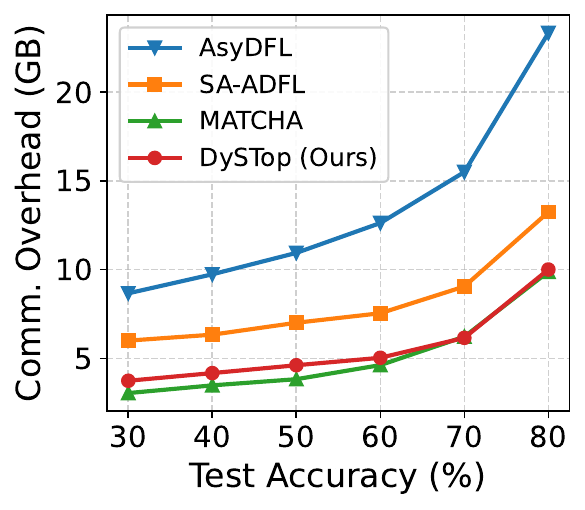}
\includegraphics[width=0.49\linewidth,height=3.6cm]{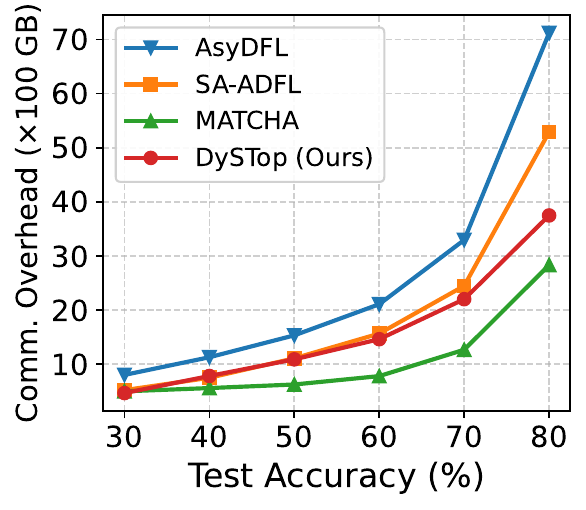}
\small{\caption{Communication Overhead vs. Test Accuracy on the two datasets ($\phi = 0.4$). \textit{Left}: FMNIST; \textit{Right}: CIFAR-10.}\label{fig:comm_acc_0.4}}
\vspace{-5mm}
\end{minipage}%
\end{figure*}

Fig. \ref{fig:comp_time} presents the completion time varying with different non-IID levels for training CNN on FMNIST and ResNet-18 on CIFAR-10. As shown in this figure, the completion time by the four algorithms on the two datasets increases with increasing non-IID level. For example, when the non-IID levels are $\phi=1.0$ and $\phi=0.4$, the completion time of DySTop takes 67.39s and 80.16s, increasing by 15.93\%, for training CNN on FMNIST, whereas AsyDFL, SA-ADFL, and MATCHA show larger increases of 42.15\%, 26.25\%, and 74.50\%. The experiment results demonstrate that DySTop can better deal with non-IID data, since DySTop dynamically constructs a topology considering non-IID data among workers. Meanwhile, DySTop can always take the least completion time in comparison with the other algorithms. For example, when the non-IID level is $\phi=0.4$, the completion times of DySTop, AsyDFL, SA-ADFL, and MATCHA are 80.16s, 349.27s, 166.35s and 422.76s, respectively, for training CNN on FMNIST. DySTop reduces the completion time by 77.04\%, 51.81\% and 81.03\% compared with AsyDFL, SA-ADFL, and MATCHA, respectively.

Figs. \ref{fig:acc_time_1.0}-\ref{fig:comm_acc_0.4} illustrate the performance of DySTop versus benchmarks in terms of test accuracy, training loss, and communication overhead for the values of $\phi$ are 1.0, 0.7, and 0.4, respectively.
Firstly, as the non-IID level increases, the performance of all algorithms decreases. For example, by Fig. \ref{fig:acc_time_1.0} and Fig. \ref{fig:acc_time_0.4}, for training ResNet-18 on CIFAR-10, the model accuracy of DySTop reaches 83.67\% in the IID scenario ($\phi=1.0$) but drops slightly to 81.36\% with a high Non-IID level ($\phi=0.4$) after 25000s.
Secondly, DySTop can achieve a faster convergence rate compared to AsyDFL, SA-ADFL, and MATCHA. For instance, as shown in Fig. \ref{fig:acc_time_0.4}, when taking 25000s for training ResNet-18 on CIFAR-10, DySTop achieves the test accuracy of 81.36\%, outperforming AsyDFL (67.18\%), SA-ADFL (72.39\%), and MATCHA (78.09\%).
Thirdly, DySTop consumes less communication overhead than asynchronous algorithms to achieve the required accuracy. For example, from Fig. \ref{fig:comm_acc_1.0}, when training CNN on FMNIST, to reach 80\% accuracy, the communication overhead of DySTop is 58.17\% and 38.94\% less than that of AsyDFL and SA-ADFL. This is because DySTop effectively constructs a sparse topology, reducing the communication resources consumption while ensuring accuracy.
Note that although DySTop consumes more communication overhead than MATCHA because of the more frequent communications of asynchronous schemes than that of synchronous schemes, DySTop can achieve higher test accuracy than MATCHA within the same training time.

\subsubsection{The Impact of Staleness Constraint}\label{subsubsec:stalness}

\begin{figure*}[t]
\begin{minipage}[t]{0.49\linewidth}
\includegraphics[width=0.49\linewidth,height=3.6cm]{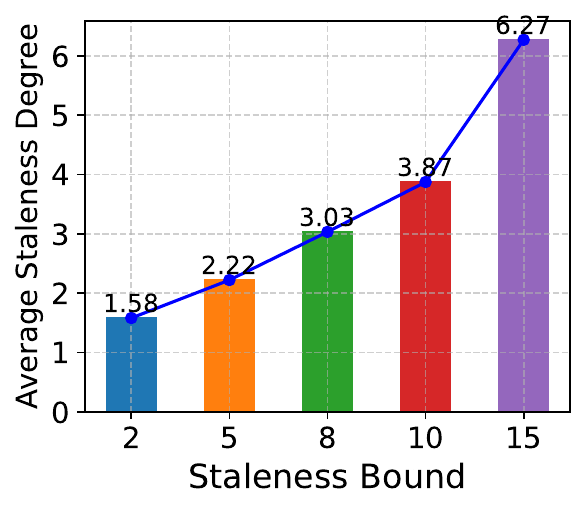}
\includegraphics[width=0.49\linewidth,height=3.6cm]{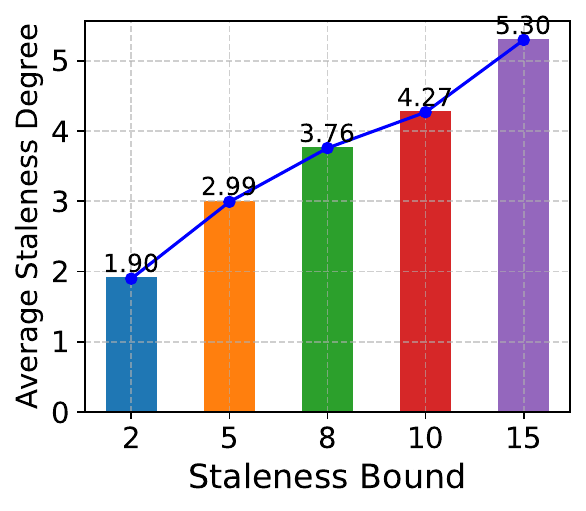}
\small{\caption{Average staleness degree varies with different settings of staleness bound in DySTop on the two datasets. \textit{Left}: FMNIST; \textit{Right}: CIFAR-10.}\label{fig:avg_stale}}
\vspace{-5mm}
\end{minipage}%
\hspace{2mm}
\begin{minipage}[t]{0.49\linewidth}
\includegraphics[width=0.49\linewidth,height=3.6cm]{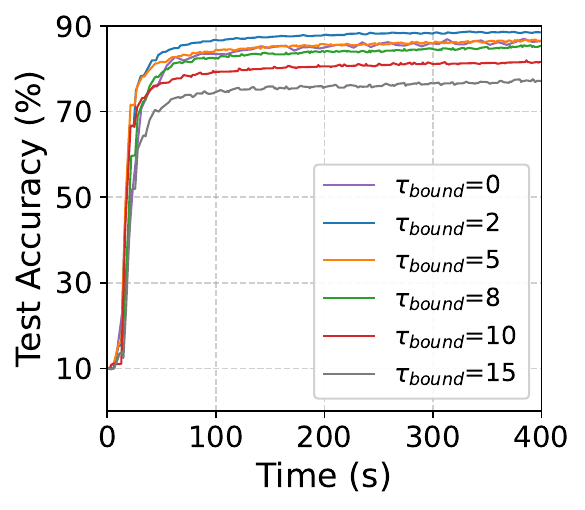}
\includegraphics[width=0.49\linewidth,height=3.6cm]{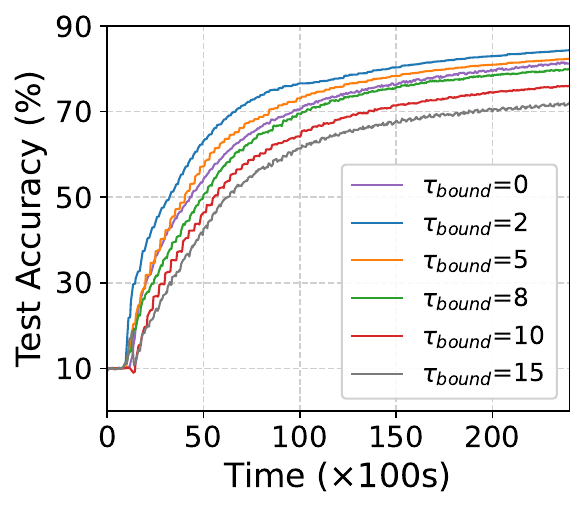}
\small{\caption{Test Accuracy vs. Time on the two datasets with various values of $\tau_{\rm bound}$. \textit{Left}: FMNIST; \textit{Right}: CIFAR-10.}\label{fig:acc_time_stale}}
\vspace{-5mm}
\end{minipage}%
\end{figure*}

\begin{figure*}[t]
\begin{minipage}[t]{0.49\linewidth}
\includegraphics[width=0.49\linewidth,height=3.6cm]{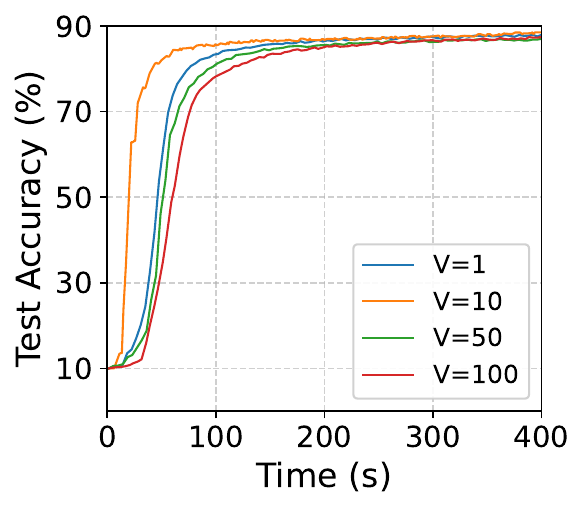}
\includegraphics[width=0.49\linewidth,height=3.6cm]{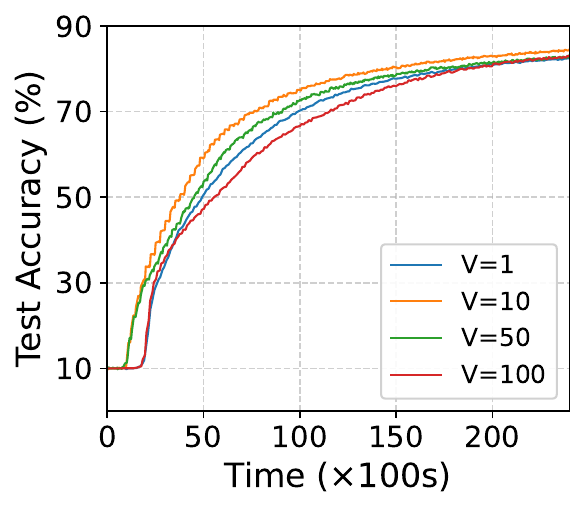}
\small{\caption{Test Accuracy vs. Time on the two datasets with various values of $V$. \textit{Left}: FMNIST; \textit{Right}: CIFAR-10.}\label{fig:acc_time_V}}
\vspace{-5mm}
\end{minipage}%
\hspace{2mm}
\begin{minipage}[t]{0.49\linewidth}
\includegraphics[width=0.49\linewidth,height=3.6cm]{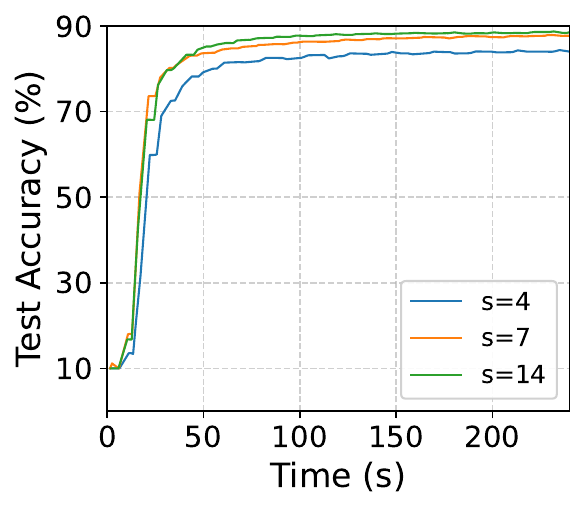}
\includegraphics[width=0.49\linewidth,height=3.6cm]{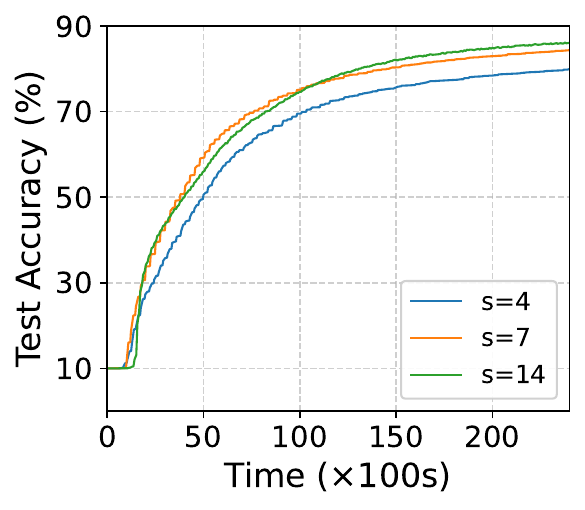}
\small{\caption{Test Accuracy vs. Time on the two datasets with various values of $s$. \textit{Left}: FMNIST; \textit{Right}: CIFAR-10.}\label{fig:acc_time_nei}}
\vspace{-5mm}
\end{minipage}%
\end{figure*}

Fig. \ref{fig:avg_stale} shows average staleness degree varying with different settings of staleness bound $\tau_{\rm bound}$ in DySTop on the two datasets. As shown in this figure, DySTop provides good control of staleness. That is, at the difference of staleness bounds of 2, 5, 8, 10, and 15, DySTop can control average staleness degrees as 1.58, 2.22, 3.03, 3.87, 6.27 on FMNIST and 1.90, 2.99, 3.76, 4.27, 5.30 on CIFAR-10. Fig. \ref{fig:acc_time_stale} illustrates the test accuracies with respect to the training time of various values of $\tau_{\rm bound} = \{0, 2, 5, 8, 10, 15\}$. Among these settings, $\tau_{\rm bound} = 2$ yields the highest accuracy, \eg 87.85\% when training a CNN on FMNIST. When staleness is too small (\eg, $\tau_{\rm bound} = 0$), the training process tends to become synchronous, leading to idle computation resources and reduced accuracy (\eg, 85.01\%). On the other hand, excessive staleness (\eg, $\tau_{\rm bound} = 15$) may cause excessive staleness and gradient divergence, which also decreases achieved accuracy (\eg, 75.61\%).

\subsubsection{The Impact of Control Parameter}\label{subsubsec:controlparameter}

Fig. \ref{fig:acc_time_V} illustrates the test accuracy with respect to the training time of the control parameter $V$. A larger value of $V$ places more importance on reducing training time, while a smaller value concentrates on staleness control.
Based on the results, we observe that overemphasizing the importance of either training time or staleness stability decreases the convergence rate.
For instance, after 25000s of training ResNet-18 on the CIFAR-10, the accuracy reaches to 82.67\% at $V=10$, compared to the accuracies of 81.01\%, 81.36\%, 80.98\% when $V=1, 50, 100$. Furthermore, the consumed training time of DySTop to achieve 70\% accuracy is 9961.43s, 7646.84s, 9277.89s, and 11545.63s when $V=1, 10, 50, 100$. Consequently, setting an appropriate value of $V$ can accelerate the model convergence, and we select $V=10$ for DySTop in the comparison experiments.

\subsubsection{The Impact of Neighbors Number}\label{subsubsec:performancecomparison}

\begin{figure}
\includegraphics[width=0.49\linewidth,height=3.6cm]{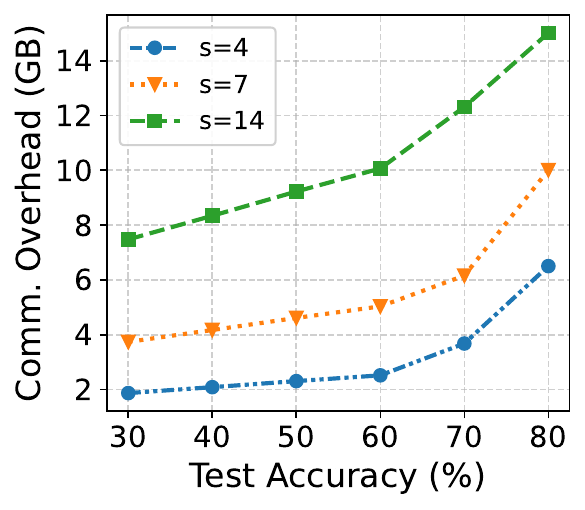}
\includegraphics[width=0.49\linewidth,height=3.6cm]{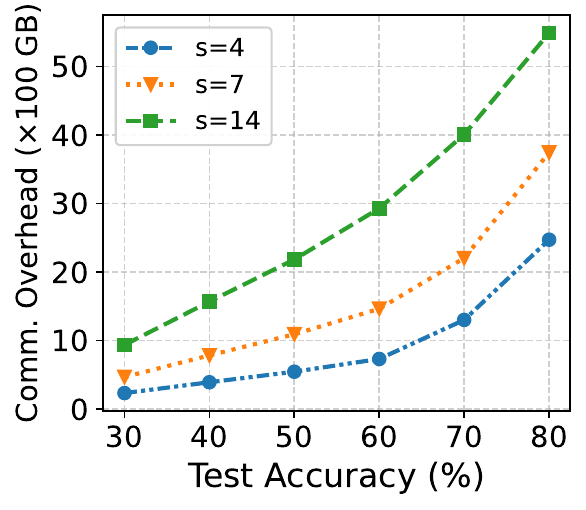}
\small{\caption{Communication overhead vs. Test Accuracy on the two datasets with various numbers of neighbors. \textit{Left}: FMNIST; \textit{Right}: CIFAR-10.}\label{fig:comm_acc_nei}}
\vspace{-5mm}
\end{figure}

To further investigate the impact of varying neighbor numbers on worker training performance.
We select different sample sizes $s$ as the number of in-neighbors of each worker. The values of $s$ are set as $\lceil \log_2N \rceil = 7$, $\lceil\frac{ \log_2N }{2}\rceil = 4$ and $\lceil 2\log_2N \rceil = 14$.
Fig. \ref{fig:acc_time_nei} shows the test accuracy with respect to the training time for varying values of sample size $s$. As shown in this figure, DySTop achieves model accuracies of 84.02\%, 87.60\%, and 88.35\% on FMNIST, and 78.29\%, 81.36\%, and 82.98\% on CIFAR-10, when $s=4,7,14$, respectively. This indicates that a larger $s$ value promotes the model convergence due to a denser network topology. However, this phenomenon exhibits diminishing returns, as the improvement in model accuracy slows with increasing values of $s$.

In Fig. \ref{fig:comm_acc_nei}, we compare the communication overhead required for different sample sizes $s$ to achieve a target test accuracy.
As shown in the figure, pulling models from a larger number of neighboring workers increases communication overhead. For example, to achieve 80\% accuracy,
DySTop consumes bandwidth of 6.50GB, 10.01GB, and 15.01GB for training a CNN on FMNIST, and 2472GB, 3743GB, and 5487GB for training ResNet-18 on CIFAR-10, when $s=4,7,14$, respectively. This indicates that a larger $s$ inevitably leads to a growth in communication overhead.
Figs. \ref{fig:acc_time_nei}-\ref{fig:comm_acc_nei} jointly illustrate that an appropriate number of neighbors can significantly reduce communication overhead while maintaining model accuracy.

\section{System Implementation}\label{sec:implementation}
In this section, we present a system implementation and deploy several algorithms on this real testbed platform.

\subsection{Testbed Setup}\label{subsec:testbed}

We implement a real testbed platform comprising one coordinator and 15 workers (labeled from $v_1$ to $v_{15}$). Fig. \ref{fig:testbed} illustrates a snapshot of the hardware devices in our testbed.

\begin{figure}[hbt]
	\centering
    \includegraphics[width=0.9\columnwidth]{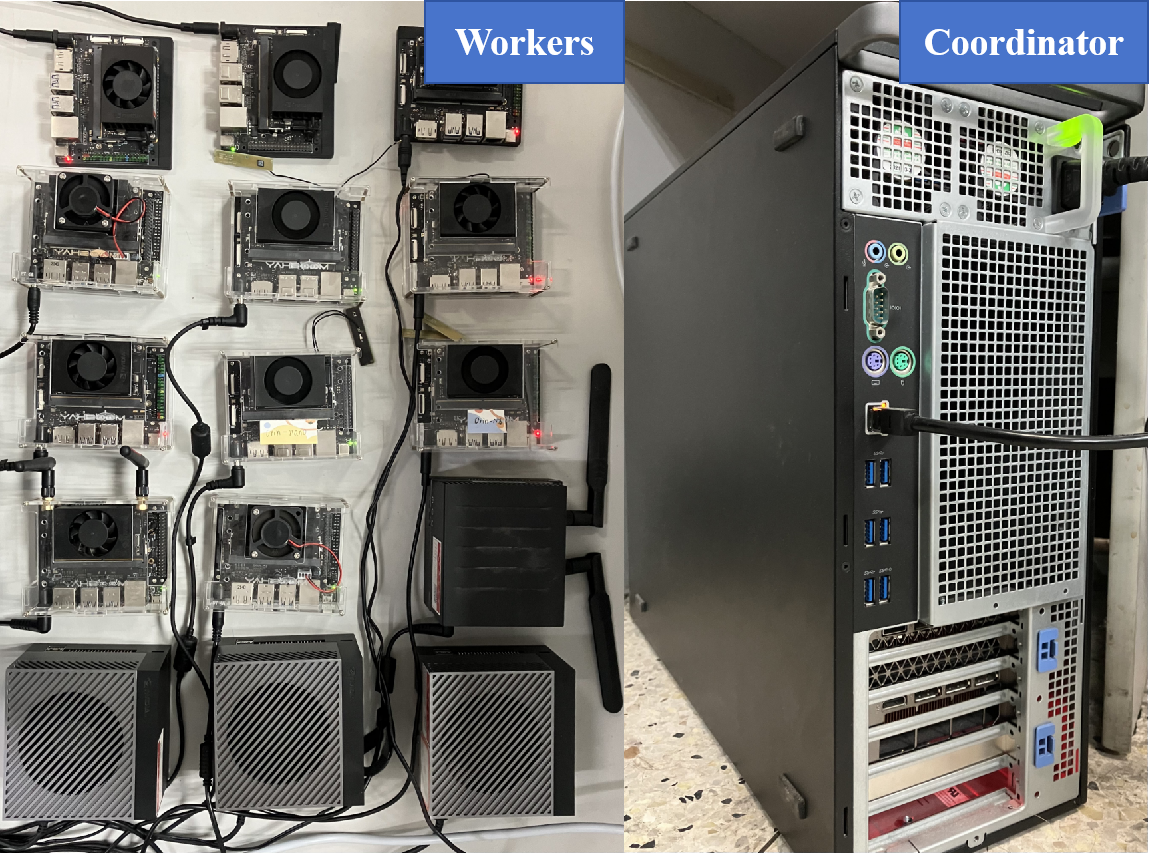}
 	\caption{
  A snapshot of the implemented testbed.}
	\label{fig:testbed}
\end{figure}

\begin{table}[h]
\centering
\small
\setlength{\tabcolsep}{0.5pt}
\caption{Real Testbed Platform Setting}
\label{tab:testbed}
\renewcommand{\arraystretch}{1.2}
\begin{tabular}{>{\bfseries}c|c|c|c}
\hline
\textbf{Type} & \textbf{Device} & \textbf{Component} & \textbf{Num.} \\
\hline
\multirow{5}{*}{Worker}
& Jetson Nano & \makecell[c]{128-core Maxwell GPU} & 4 \\
\cline{2-4}
& Jetson Orin Nano & \makecell[c]{1024-core Ampere GPU} & 3 \\
\cline{2-4}
& Jetson Orin NX & \makecell[c]{1024-core Ampere GPU} & 4 \\
\cline{2-4}
& Jetson Orin & \makecell[c]{2048-core Ampere GPU} & 3 \\
\cline{2-4}
& Jetson Xavier AGX  & \makecell[c]{512-core Volta GPU} & 1 \\
\hline
Coordinator & Dell T640 Server & Intel Xeon Silver 4210R CPU & 1 \\
\hline
\end{tabular}
\vspace{-2mm}
\end{table}

\begin{figure*}[t]
\begin{minipage}[t]{0.49\linewidth}
\includegraphics[width=0.49\linewidth,height=3.6cm]{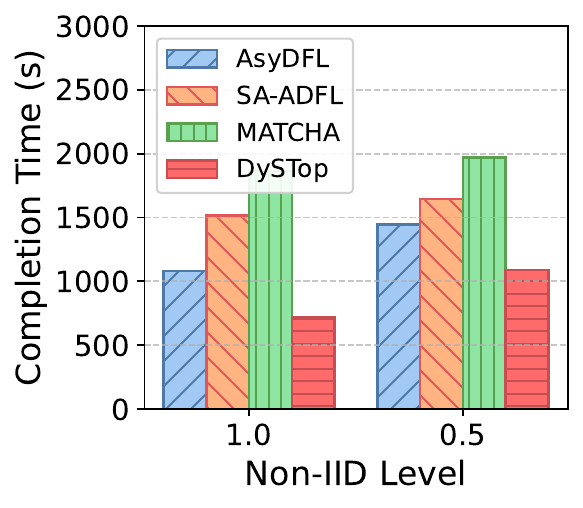}
\includegraphics[width=0.49\linewidth,height=3.6cm]{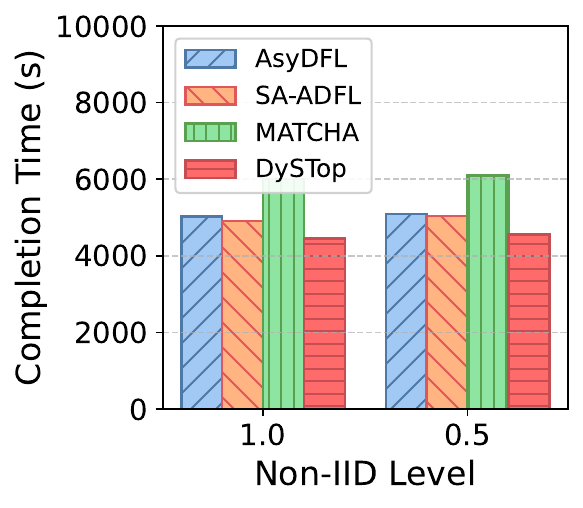}
\small{\caption{Completion time varies with different non-IID levels on the two datasets. \textit{Left}: SVHN; \textit{Right}: CIFAR-100.}\label{fig:comp_time_testbed}}
\vspace{-5mm}
\end{minipage}%
\hspace{2mm}
\begin{minipage}[t]{0.49\linewidth}
\includegraphics[width=0.49\linewidth,height=3.6cm]{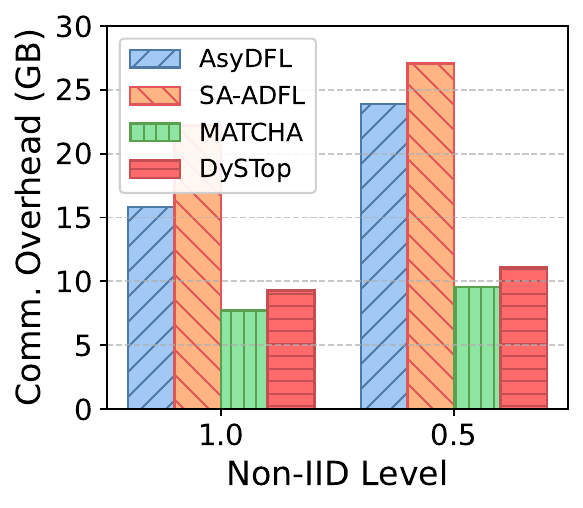}
\includegraphics[width=0.49\linewidth,height=3.6cm]{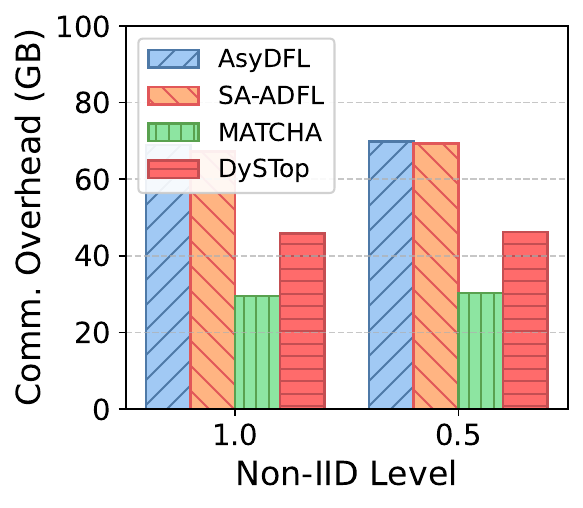}
\small{\caption{Communication overhead varies with different non-IID levels on the two datasets. \textit{Left}: SVHN; \textit{Right}: CIFAR-100.}\label{fig:comm_acc_testbed}}
\vspace{-5mm}
\end{minipage}%
\end{figure*}

\begin{figure*}[t]
\begin{minipage}[t]{0.49\linewidth}
\includegraphics[width=0.49\linewidth,height=3.6cm]{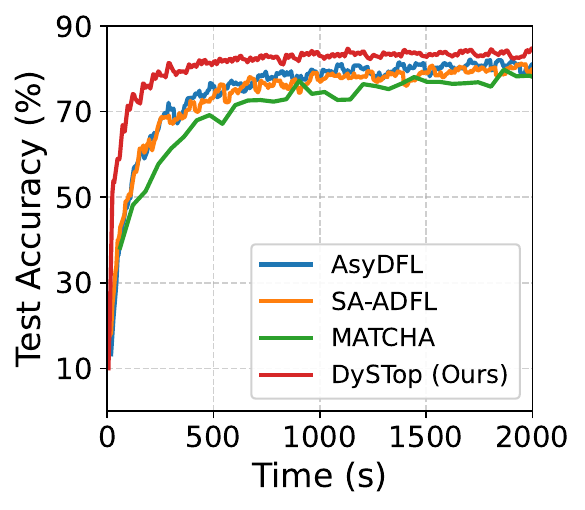}
\includegraphics[width=0.49\linewidth,height=3.6cm]{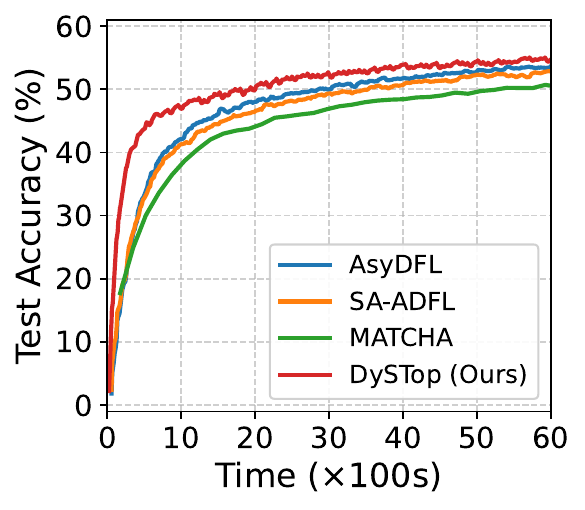}
\small{\caption{Test Accuracy vs. Time on the two datasets ($\phi = 1.0$). \textit{Left}: SVHN; \textit{Right}: CIFAR-100.}\label{fig:acc_time_1.0_testbed}}
\vspace{-5mm}
\end{minipage}%
\hspace{2mm}
\begin{minipage}[t]{0.49\linewidth}
\includegraphics[width=0.49\linewidth,height=3.6cm]{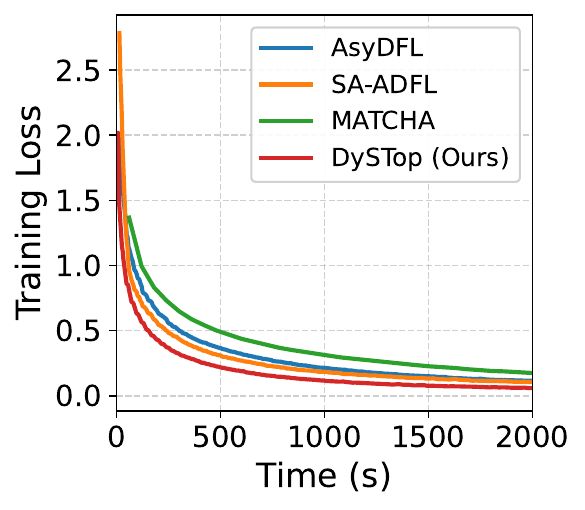}
\includegraphics[width=0.49\linewidth,height=3.6cm]{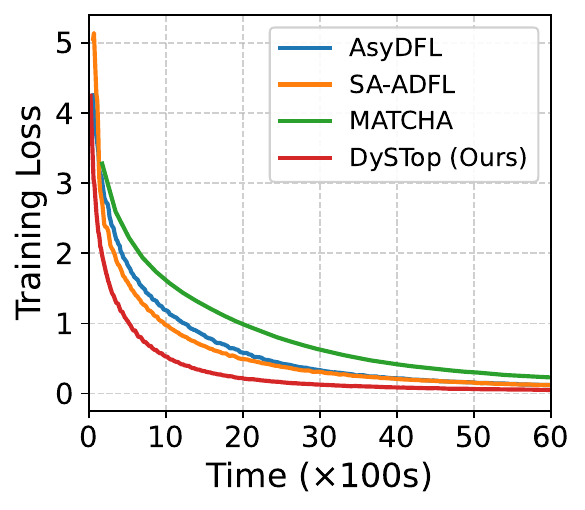}
\small{\caption{Training Loss vs. Time on the two datasets ($\phi = 1.0$). \textit{Left}: SVHN; \textit{Right}: CIFAR-100.}\label{fig:loss_time_1.0_testbed}}
\vspace{-5mm}
\end{minipage}%
\end{figure*}

Specifically, we set 4 Jetson Nano devices, 4 Jetson Orin NX devices, 3 Jetson Orin Nano devices, 3 Jetson Orin devices and 1 Jetson Xavier AGX device as heterogeneous workers. A Dell T640 Server is set as a coordinator. Detailed heterogeneous configurations are shown in Table \ref{tab:testbed}. Devices above are connected via wireless connection by a router, which is wired to the coordinator. Since the bandwidth in our experimental environment is much higher than in a real-world edge environment, we use the Wondershaper package to limit the bandwidth. The communication backend is based on mpi4py package, which is an MPI (Message Passing Interface) implementation for Python that enables multi-threaded communication across multiple nodes. We use mpi4py to achieve the model and information transmission among workers and coordinators.

The testbed experiments are conducted with two lightweight models designed for resource-constrained devices (\textit{i.e.}, SquezeeNet \cite{iandola2016squeezenet}, MobileNet-V2 \cite{howard2017mobilenets}) and two datasets (SVHN \cite{goodfellow2013multi} and CIFAR-100 \cite{krizhevsky2009learning}).
SquezeeNet is trained on SVHN, including 73,257 street view house numbers training images and 26,032 test images. To adapt SqueezeNet for the SVHN dataset, we adjust initial convolutional layer to have 96 output channels (with a 3$\times$3 kernel, stride 1, and padding 1) and replace the original classifier. The new classifier now comprises a dropout layer, a 1$\times$1 convolutional layer with 10 output channels (for the digit classes), a ReLU activation, and an adaptive average pooling layer to reduce feature maps to 1$\times$1. These changes allow SqueezeNet to effectively process SVHN's 3-channel images and classify them into 10 distinct digit categories. MobileNet-V2 is trained on CIFAR-100, consisting of 60,000 images for training and 10,000 for the test across 100 classes. For the MobileNetV2 model, we modify its initial convolutional layer and adjust the stride of certain feature blocks to 1 to preserve more spatial information. Additionally, we reconfigure the final convolutional layer within the feature module and replace the classifier's output layer to match the 100 CIFAR-100 classes. Similar to simulation experiments, we use the Dirichlet distribution to partition the datasets, setting the non-IID levels to 0.5 and 1.0.

Additionally, to verify the superiority of DySTop on our real testbed platform, we use the same benchmarks and metrics as those in Section \ref{sec:evaluation}.

\subsection{Testing Results}\label{subsec:testbedresults}

\begin{figure*}[t]
\begin{minipage}[t]{0.49\linewidth}
\includegraphics[width=0.49\linewidth,height=3.6cm]{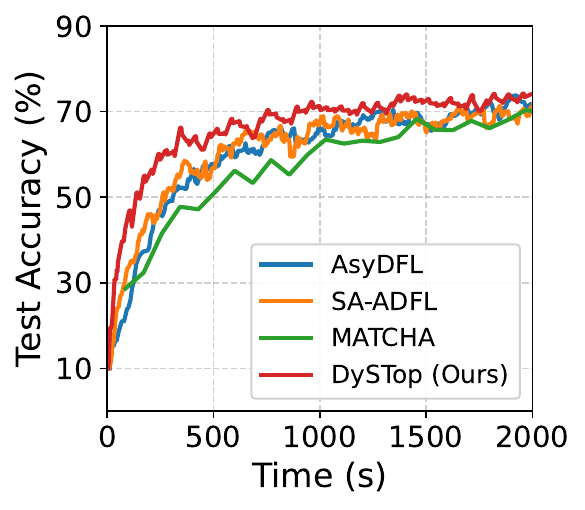}
\includegraphics[width=0.49\linewidth,height=3.6cm]{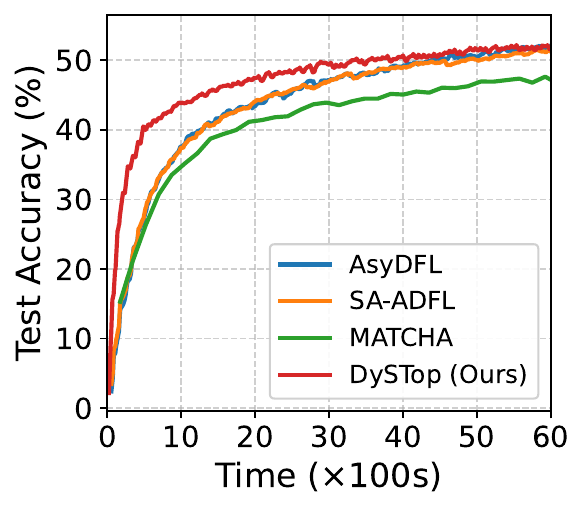}
\small{\caption{Test Accuracy vs. Time on the two datasets ($\alpha = 0.5$). \textit{Left}: SVHN; \textit{Right}: CIFAR-100.}\label{fig:acc_time_0.5_testbed}}
\vspace{-5mm}
\end{minipage}%
\hspace{2mm}
\begin{minipage}[t]{0.49\linewidth}
\includegraphics[width=0.49\linewidth,height=3.6cm]{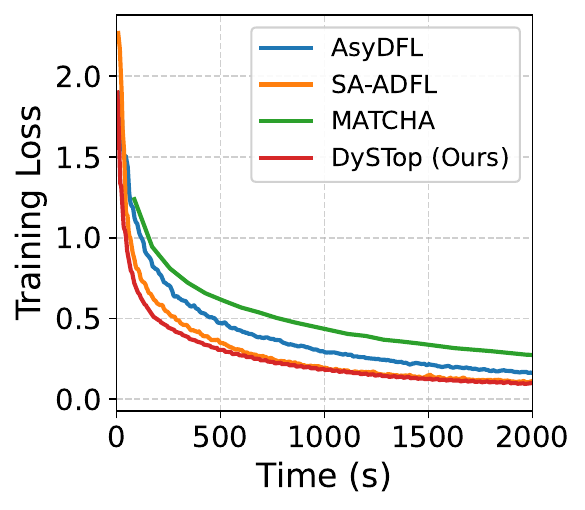}
\includegraphics[width=0.49\linewidth,height=3.6cm]{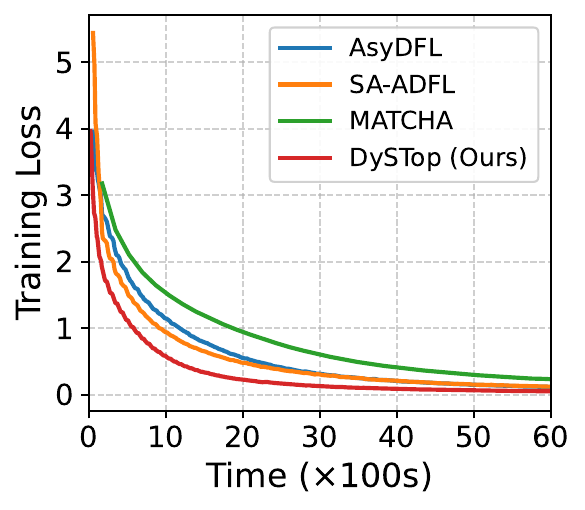}
\small{\caption{Training Loss vs. Time on the two datasets ($\phi = 0.5$). \textit{Left}: SVHN; \textit{Right}: CIFAR-100.}\label{fig:loss_time_0.5_testbed}}
\vspace{-5mm}
\vspace{-5mm}
\end{minipage}%
\end{figure*}

Fig. \ref{fig:comp_time_testbed} presents the completion time to achieve certain accuracies varying with different non-IID levels on the two datasets SVHN and CIFAR-100, respectively. As shown in the figure, DySTop can always take the least completion time compared with the other algorithms. For example, for training SqueezeNet on SVHN in IID scenario ($\phi = 1.0$), the completion time to achieve 80\% accuracy of AsyDFL, SA-ADFL, MATCHA and DySTop is 1079s, 1516s, 1865, and 721s, respectively. DySTop reduces the completion time by 33.17\%, 52.44\% and 61.34\% compared with AsyDFL, SA-ADFL and MATCHA.
In addition, when training SqueezeNet on SVHN in non-IID scenario ($\phi=0.5$), the completion time to achieve 70\% accuracy of AsyDFL, SA-ADFL, MATCHA and DySTop is 1445s, 1644s, 1969s and 1088s, respectively. DySTop reduces the completion time of AsyDFL, SA-ADFL and MATCHA by 24.71\%, 33.82\% and 44.73\%. The experiment results demonstrate that DySTop can better adapt to both the IID and non-IID scenarios with less completion time for model convergence.

Fig. \ref{fig:comm_acc_testbed} compares the communication overhead by different algorithms to achieve certain accuracies. As shown in the figure, the DySTop consumes the least communication overhead compared with other two asynchronous algorithms.
Furthermore, for training MobileNet-V2 on CIFAR-100 ($\phi=1.0$) to 50\% accuracy, the DySTop significantly reduces the communication overhead by 33.57\% and 31.82\%, compared with AsyDFL and SA-ADFL. When training MobileNet-V2 on CIFAR-100 ($\phi=0.5$), the communication overhead of DySTop is 33.92\% and 33.38\% less than that of AsyDFL and SA-ADFL.
From Figs. \ref{fig:comp_time_testbed} and \ref{fig:comm_acc_testbed}, although asynchronous DFL approaches inherently have a much higher communication frequency, DySTop still achieves faster model convergence with tolerable communication overhead.

Figs. \ref{fig:acc_time_1.0_testbed}-\ref{fig:loss_time_0.5_testbed} show training process for IID ($\phi=1.0$) and non-IID ($\phi=0.5$) scenarios, respectively.
DySTop can always achieve a faster convergence rate in both IID and non-IID settings, compared with AsyDFL, SA-ADFL and MATCHA.
For example, as shown in Fig. \ref{fig:acc_time_1.0_testbed}, when training SqueezeNet on SVHN with the IID setting, after 2000s of training, DySTop achieves a stable test accuracy of 84.49\%, outperforming AsyDFL (82.01\%), SA-ADFL (80.69\%) and MATCHA (79.37\%).
For the non-IID scenario shown in Fig. \ref{fig:acc_time_0.5_testbed},
after 6000s training in MobileNet-V2 on CIFAR-100, DySTop achieves a stable test accuracy of 52.18\%, outperforming AsyDFL (51.84\%), SA-ADFL (51.21\%) and MATCHA (46.61\%). The experiment results demonstrate that DySTop can still outperform under a real test-bed platform.

\begin{remark}
Compared to the simulation results, the relative performance of the mechanisms under the testbed experiment shows some differences. Specifically, MATCHA (synchronous DFL) performs worse in the testbed than in simulation, primarily due to the higher degree of device heterogeneity in the testbed. 
This is a result of having fewer workers in the testbed compared to the simulation environment. 
As a consequence, the straggler problem is exacerbated in the testbed, with slower devices causing delays in the model aggregation process, further reducing training efficiency.
\end{remark}

\section{Conclusion}\label{sec:conclusion}

In this paper, we have proposed DySTop, an innovative asynchronous decentralized federated learning (ADFL) mechanism designed to overcome the limitations of the existing synchronous DFL and asynchronous DFL mechanisms. In each round, DySTop dynamically activates multiple workers, each selecting a subset of its neighbors and pulling their models for aggregation. We have proved the convergence of DySTop, quantitatively revealing how key factors, such as the maximum staleness, worker activation frequency, and data distribution, affect the convergence bound of ADFL. Guided by the analysis, we have designed a worker activation algorithm to control staleness, and a phase-aware topology construction algorithm to reduce communication overhead and handle non-IID data. Extensive experiments on both simulation and testbeds have demonstrated the superiority of DySTop.

\bibliographystyle{IEEEtran}
\bibliography{contents/ref}

\vspace{-9mm}
\begin{IEEEbiography}[{\includegraphics[width=1in,height=1.25in,clip,keepaspectratio]{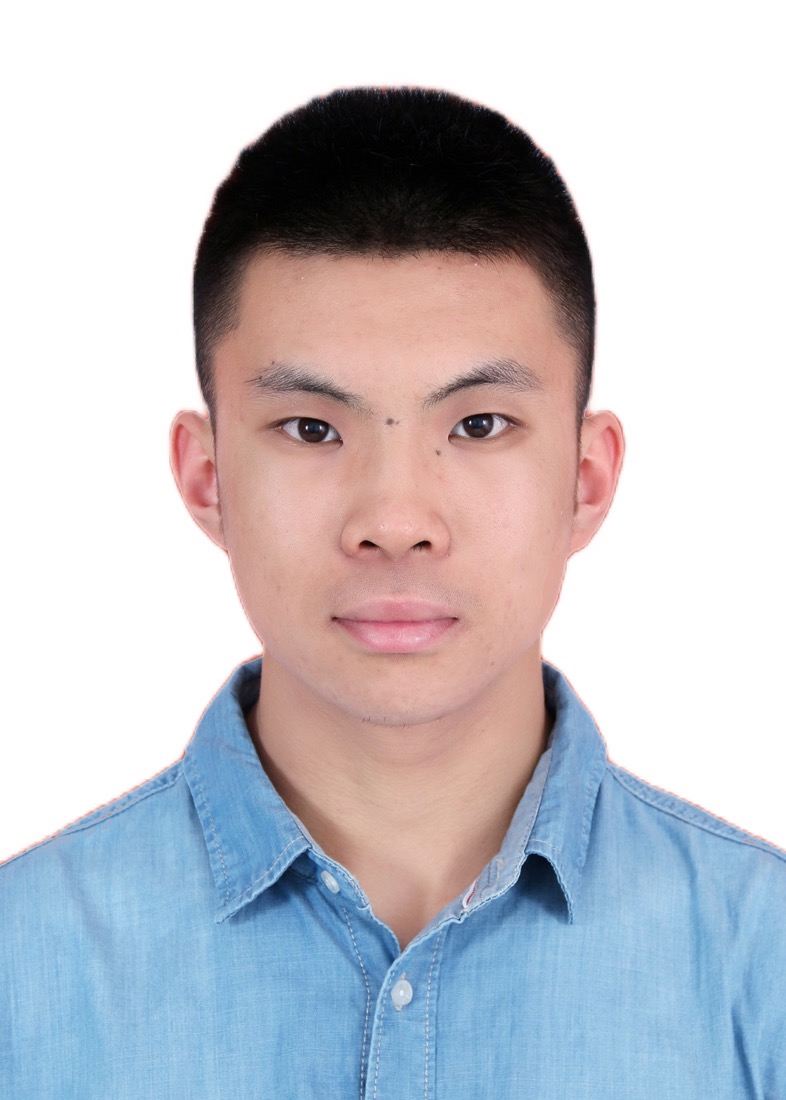}}]{Yizhou Shi} received the B.S. degree in Cyberspace Security from the Nanjing
University of Science and Technology,
Nanjing, China, in 2024. He is currently pursuing the Ph.D. degree with the Nanjing University of Science and Technology, Nanjing, China.
His research interests are in the area of edge computing and
federated learning.
\end{IEEEbiography}
\vspace{-10mm}
\begin{IEEEbiography}[{\includegraphics[width=1in,height=1.25in,clip,keepaspectratio]{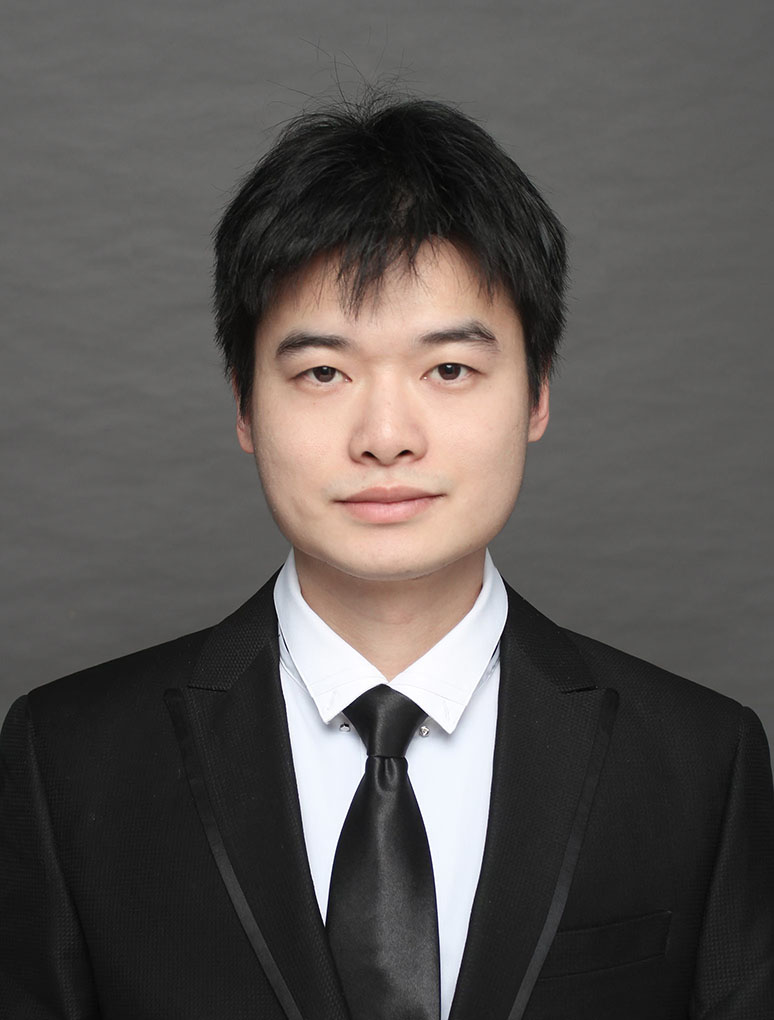}}]{Qianpiao Ma} received the B.S. degree in computer science and the Ph.D. degree in computer software and theory from the University of Science and Technology of China, Hefei, China, in 2014 and 2022, respectively. He is currently an Associate Professor at the School of Computer Science and Engineering, Nanjing University of Science and Technology, Nanjing, China. He has published more than 20 papers in famous journals and conferences, including IEEE JSAC, IEEE TMC, INFOCOM and IPDPS. His primary research interests include federated learning, mobile-edge computing, and distributed machine learning.
\end{IEEEbiography}
\vspace{-10mm}
\begin{IEEEbiography}[{\includegraphics[width=1in,height=1.25in,clip,keepaspectratio]{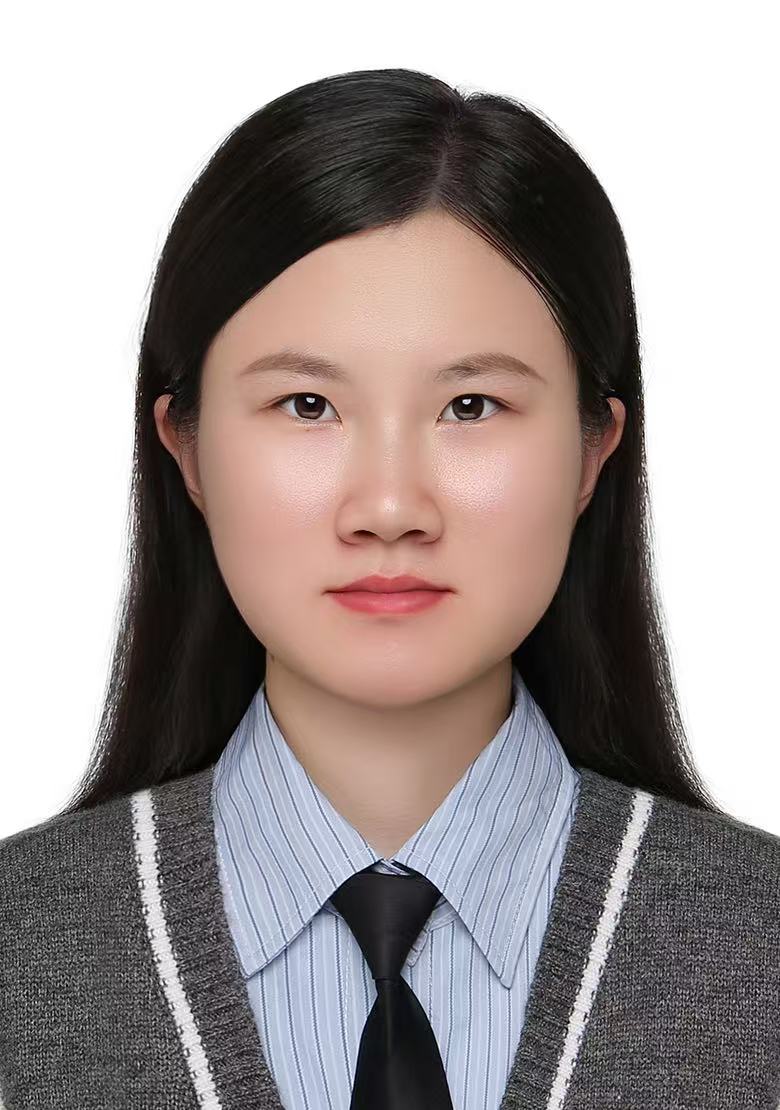}}]{Yan Xu} received the B.S. degree in Computer Science and Technology from the Nanjing University of Chinese Medicine, Nanjing, China, in 2025. She is currently pursuing the Ph.D. degree with the Nanjing University of Science and Technology, Nanjing, China. Her research interests are in the area of edge computing and federated learning.
\end{IEEEbiography}
\vspace{-10mm}
\begin{IEEEbiography}[{\includegraphics[width=1in,height=1.25in,clip,keepaspectratio]{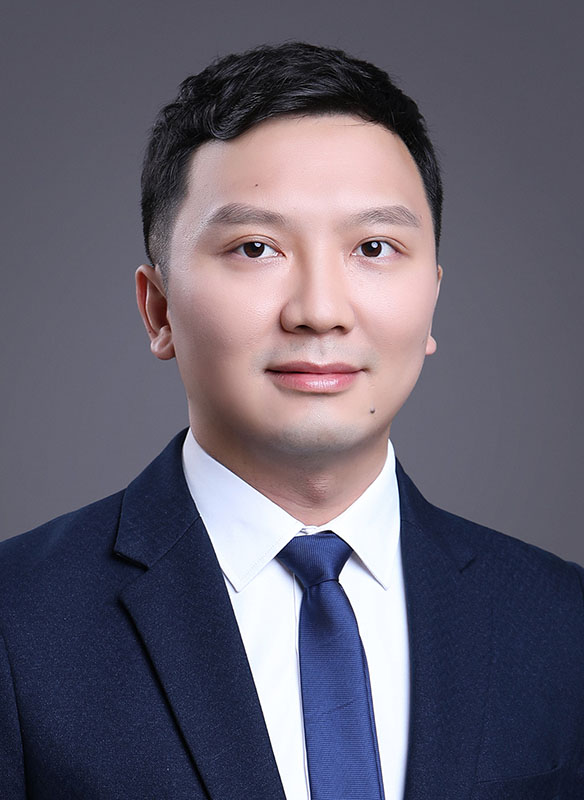}}]{Junlong Zhou} is an Associate Professor at the School of Computer Science and Engineering, Nanjing University of Science and Technology, China.
His research interests include edge computing, cloud computing, and embedded systems, where he has published 120 papers, including more than 40 in premier IEEE/ACM Transactions.
He has been a Subject Area Editor for Journal of Systems Architecture and an Associate Editor for Journal of Circuits, Systems, and Computers and IET Cyber-Physical Systems: Theory \& Applications.
He received the Best Paper Award from IEEE iThings2020, CPSCom2022, and ICITES2024.
\end{IEEEbiography}

\vspace{-10mm}
\begin{IEEEbiography}[{\includegraphics[width=1in,height=1.25in,clip,keepaspectratio]{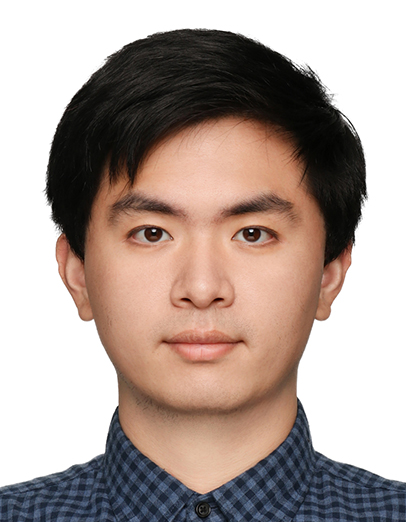}}]{Ming Hu} received his B.E. and Ph.D. degrees from the Software Engineering Institute, East China Normal University, Shanghai, China, in 2017 and 2022, respectively. He is currently a research scientist at Singapore Management University (SMU). Previously, he was a research fellow at Nanyang Technological University (NTU), Singapore. His research interests include the area of design automation of cyber-physical systems, federated learning, trustworthy AI, and software engineering.
\end{IEEEbiography}
\vspace{-10mm}
\begin{IEEEbiography}[{\includegraphics[width=1in,height=1.25in,clip,keepaspectratio]{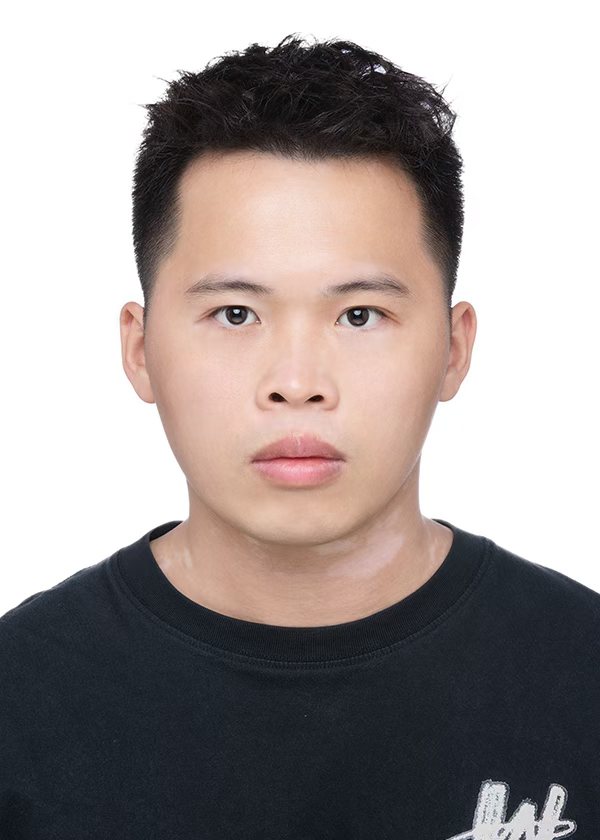}}] {Yunming Liao} received B.S. degree in 2020 from  the University of Science and Technology of China. He is currently pursuing his Ph.D. degree in the School of Computer Science and Technology, University of Science and Technology of China. His research interests include mobile edge computing, federated learning, and edge intelligence.
\end{IEEEbiography}
\vspace{-10mm}
\begin{IEEEbiography}[{\includegraphics[width=1in,height=1.25in,clip,keepaspectratio]{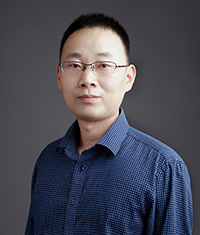}}]{Hongli Xu}(Member, IEEE) received the B.S. degree in computer science and the Ph.D. degree in computer software and theory from the University of Science and Technology of China, China, in 2002 and 2007, respectively. He is currently a Professor with the School of Computer Science and Technology, University of Science and Technology of China. He has published more than 100 papers in famous journals and conferences, including IEEE/ACM TNET, IEEE TMC, IEEE TPDS, INFOCOM, and ICNP. He has held more than 30 patents. His main research interests include software-defined networks, edge computing, and the Internet of Things. He was awarded the Outstanding Youth Science Foundation of NSFC in 2018. He has won the best paper award or the best paper candidate in several famous conferences.
\end{IEEEbiography}

\end{document}